\documentclass[11pt]{article}
\usepackage{fullpage}
\usepackage{tikz}
\usetikzlibrary{arrows.meta,positioning,matrix}
\usepackage{amsmath}
\usepackage{amssymb}
\usepackage{enumitem}
\usepackage{amsthm}
\usepackage[most]{tcolorbox}
\usepackage{subfiles}
\usepackage{mathtools}
\usepackage{algorithm,algpseudocode,float,tcolorbox}
\usepackage{thm-restate,mathrsfs}
\usepackage{abraces}
\usepackage[dvipsnames]{xcolor}
\usepackage{nicematrix}
\usepackage{hyperref}
\hypersetup{
	colorlinks,
	linkcolor={blue!100!black},
	citecolor={red!100!black},
}
\usepackage{cleveref}
\usepackage{dsfont}
\usepackage{blkarray}
\usepackage{booktabs}
\def\01{\{0,1\}}
\usepackage{soul}
\usepackage[most]{tcolorbox}

\newtheorem{theorem}{Theorem}[section]

\newtheorem{lemma}[theorem]{Lemma}

\newtheorem{defi}[theorem]{Definition}

\newtheorem{conjecture}[theorem]{Conjecture}

\newtheorem{proposition}[theorem]{Proposition}

\newtheorem{fact}[theorem]{Fact}

\newcommand{\ket}[1]{\ensuremath{\left|#1\right\rangle}}

\newcommand{\norm}[1]{\ensuremath{\left\|#1\right\|}}

\usetikzlibrary{decorations.pathreplacing}
\newcommand{\Q}{\ensuremath{\mathsf{Q}}}

\newcommand{\R}{\ensuremath{\mathsf{R}}}

\newcommand{\D}{\ensuremath{\mathsf{D}}}

\newcommand{\E}{\mathbb{E}}

\newcommand{\zone}{\{0,1\}}

\newcommand{\cbra}[1]{\left\{#1\right\}}
\newcommand{\rbra}[1]{\left(#1\right)}

\newcommand{\Cc}{\mathcal{C}}

\newcommand{\F}{\mathbb F}

\newcommand{\one}{\mathbf 1}

\newcommand{\C}{\mathsf{C}}
\newcommand{\bs}{\mathsf{bs}}
\newcommand{\fc}{\mathsf{FC}}
\newcommand{\fbs}{\mathsf{fbs}}
\renewcommand{\cal}[1]{\mathcal{#1}}
\newcommand{\fetd}{\textnormal{fETD}}
\newcommand{\etd}{\textnormal{ETD}}

\newcommand{\ETD}{\textnormal{ETD}}
\newcommand{\fgamma}{\textnormal{f}\gamma}
\newcommand{\Ptrunc}{P_{\mathsf{trunc}}}

\usepackage{tikz}
\usetikzlibrary{positioning}

\DeclareMathOperator{\rank}{rank}

\DeclareMathOperator*{\argmax}{arg\,max}

\allowdisplaybreaks

\title{Optimal Quantum-Classical Separations for Exact Learning}

\author{
Srinivasan Arunachalam\\[2mm]
\small IBM Research\\
\small \texttt{Srinivasan.Arunachalam@ibm.com}
\and
Amin Shiraz Gilani\\[2mm]
\small QuICS,\hspace{-0.2mm} University of Maryland\\
\small \texttt{asgilani@umd.edu}
\and
Nikhil S.~Mande\\[2mm]
 \small University of Liverpool  \\
\small \texttt{ mande@liverpool.ac.uk}
}

\date{}

\begin{document}

\maketitle
\begin{abstract}
We study exact learning with membership queries for concept classes \(\mathcal C\subseteq\{0,1\}^N\), focusing on the relationships among their deterministic, randomized, and quantum query complexities, denoted \(\mathsf{D}(\mathcal C)\), \(\mathsf{R}(\mathcal C)\), and \(\mathsf{Q}(\mathcal C)\), respectively. The two canonical quantum speedups in this model are witnessed by Grover search and Bernstein-Vazirani, leading to the longstanding conjecture
$$
\R(\mathcal C)=O(\Q(\mathcal C)^2+\Q(\mathcal C)\log N).
$$
We first refute this conjecture by constructing concept classes $\Cc$ and $\Cc'$ satisfying
\[
\mathsf{R}(\mathcal C)=\Omega\!\left(\frac{\mathsf{Q}(\mathcal C)^3\log N}{\log \mathsf{Q}(\mathcal C)}\right)
\qquad\text{and}\qquad
\mathsf{D}(\mathcal C')=\Omega(\mathsf{Q}(\mathcal C')^3\log N).
\]
The first bound matches the upper bound of Arunachalam et al.~[Quantum'21] up to constant factors, while the second matches the upper bound of Servedio and Gortler~[SICOMP'04]. In particular, this shows that the saving in the randomized upper bound of Arunachalam et al.~fundamentally relies on randomness.
Apart from \emph{characterizing} the optimal relationship between classical and quantum query complexity, our results are the first to show that quantum speedups for learning can go beyond the Grover and Bernstein-Vazirani paradigms.
\end{abstract}

\setcounter{tocdepth}{2}
{\footnotesize \tableofcontents}

\newpage

\section{Introduction}
Exact learning with membership queries is a foundational model of active learning, originating in the seminal work of Angluin~\cite{Angluin88}. A \emph{concept class} is a collection of Boolean functions
$    \Cc \subseteq \{c:\01^n\to\01\}.$
Equivalently, identifying each Boolean function with its truth table and writing \(N=2^n\), we may view \(\Cc\) as a subset of \(\01^N\). The learner is given query access to an unknown target concept \(c^*\in\Cc\), with the goal of identifying \(c^*\).
 In the classical model, a membership query at \(x\in\01^n\) returns the value \(c^*(x)\). A deterministic learner must identify \(c^*\) with certainty, whereas a randomized learner is required to succeed with probability at least \(2/3\). A quantum learner is instead given quantum query access to \(c^*\), allowing superposition queries via the unitary
$$
    \ket{x,b}\longmapsto \ket{x,b\oplus c^*(x)}.
$$
We denote by \(\D(\Cc)\), \(\R(\Cc)\), and \(\Q(\Cc)\) the deterministic, randomized, and bounded-error quantum query complexities of identifying an unknown concept in \(\Cc\), respectively.

The classical membership-query model has been studied extensively both within and beyond computational learning theory~\cite{BshoutyCleveGavaldaKannanTamon96,Bshouty13,Bshouty18,BshoutyH19,hopkins2020point,chase2020bounds,Bshouty25}. More broadly, it captures the fundamental black-box identification problem of determining an unknown function promised to belong to a known family, with connections to interpolation, functional verification, and black-box identity testing~\cite{Bshouty13}. Roughly speaking, the query complexity of exact learning measures the amount of information that must be extracted from an unknown function in order to identify it completely. The quantum analogue has likewise received substantial attention~\cite{servedio2004equivalences,atici2005improved,ambainis2004quantum,Kothari14,arunachalam2017guest,arunachalam2021two}. In this setting, also studied under the name \emph{oracle identification}, a basic question is how much quantum queries can reduce the identification of an unknown concept. This leads to a natural question first studied by Servedio and Gortler~\cite{servedio2004equivalences}:

\begin{tcolorbox}
\begin{quote}
\centering
\emph{What is the optimal relationship between classical and quantum  learning?}
\end{quote}
\end{tcolorbox}
Servedio and Gortler~\cite{servedio2004equivalences} made the first progress by proving the following theorem.

\begin{theorem}[{\cite[Theorem~1]{servedio2004equivalences}}]\label{thm:sg}
For all $\mathcal{C}\subseteq\{0,1\}^N$, we have
$\mathsf{D}(\mathcal{C}) = O\!\left(\mathsf{Q}(\mathcal{C})^3\log N\right).$ 
\end{theorem}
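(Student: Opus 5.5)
The plan is to reproduce the Servedio--Gortler argument via a combinatorial parameter $\hat\gamma(\Cc)$ that simultaneously lower-bounds $\Q(\Cc)$ and, up to a $\log N$ factor, upper-bounds $\D(\Cc)$. For a subclass $\Cc'\subseteq\Cc$ with $|\Cc'|\ge 2$, define
\[
\gamma(\Cc') \;=\; \min_{x\in\{0,1\}^N}\ \max_{i\in[N]} \frac{|\{c\in\Cc' : c_i\neq x_i\}|}{|\Cc'|},
\]
and set $\hat\gamma(\Cc)=\min_{\Cc'\subseteq\Cc,\,|\Cc'|\ge2}\gamma(\Cc')$. Informally, $\gamma(\Cc')$ measures how large a fraction of the surviving concepts one well-chosen classical query is guaranteed to eliminate. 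The theorem then follows from two bounds: $\D(\Cc)=O\bigl(\log|\Cc|/\hat\gamma(\Cc)\bigr)$ together with $\Q(\Cc)=\Omega\bigl(1/\sqrt{\hat\gamma(\Cc)}\bigr)$, and the information-theoretic bound $\Q(\Cc)=\Omega\bigl(\log|\Cc|/\log N\bigr)$.

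\textbf{Deterministic upper bound.} The learner maintains the set $\Cc'$ of concepts consistent with the answers so far. It takes $x$ to be the coordinatewise majority string of $\Cc'$, and picks the coordinate $i$ attaining $\max_i |\{c\in\Cc' : c_i\neq x_i\}|/|\Cc'|\ge\hat\gamma(\Cc)$. It queries $i$.
\begin{itemize}
\item If the answer differs from $x_i$, the version space shrinks to at most $(1-\hat\gamma)|\Cc'|$, because $x_i$ is the majority value.
\item If the answer equals $x_i$, at least a $\hat\gamma$ fraction of $\Cc'$ is eliminated.
\end{itemize}
Either way $|\Cc'|$ drops by a factor $(1-\hat\gamma)$. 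One detail needs care: the minimizing $x$ in the definition need not be the majority string, so I will use the version of $\gamma$ with $x$ ranging over majority strings, or simply argue with the minimizer directly. Hence $O(\log|\Cc|/\hat\gamma)$ queries suffice.

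\textbf{Quantum lower bounds.} The bound $\Q=\Omega(\log|\Cc|/\log N)$ is standard. A $T$-query algorithm's final state lies in a space spanned by polynomials of degree at most $T$, so a Holevo/dimension argument gives $\log|\Cc| \le O(T\log N)$. Equivalently, one can use the Farhi--Goldstone--Gutmann--Sipser-type counting of distinguishable states.

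The main step is $\Q(\Cc)=\Omega(1/\sqrt{\hat\gamma})$. Fix the subclass $\Cc'$ and the string $x$ achieving $\hat\gamma$. Then every coordinate $i$ has $\Pr_{c\in\Cc'}[c_i\neq x_i]\le\hat\gamma$. Run the quantum algorithm on the fictitious oracle $x$, and compare its run against a uniformly random $c\in\Cc'$ using the hybrid argument of Bennett--Bernstein--Brassard--Vazirani. With $\psi_t$ the state before query $t$ on oracle $x$, the total deviation satisfies
\[
\|\phi^T_c-\phi^T_x\|\le 2\sum_t \sqrt{\textstyle\sum_{i:c_i\neq x_i}\|\psi_t^{(i)}\|^2}.
\]
Averaging over $c$ with Cauchy--Schwarz gives $\E_c\|\phi^T_c-\phi^T_x\|^2\le O(T^2\hat\gamma)$. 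But the algorithm must output distinct answers with probability $2/3$ for the $\ge2$ members of $\Cc'$, so a typical $c$ must be far from $x$'s final state. More precisely, at most one concept can be close to the single state $\phi^T_x$ while still being identified, which forces $\E_c\|\phi_c-\phi_x\|^2=\Omega(1)$ once $|\Cc'|\ge2$ (handling the case of only one heavy concept by a $1/2$ weight argument). This gives $T=\Omega(1/\sqrt{\hat\gamma})$.

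\textbf{Combining.} We have $\D\le O(\log|\Cc|/\hat\gamma)\le O(\Q^2\log|\Cc|)\le O(\Q^2\cdot \Q\log N)=O(\Q^3\log N)$.

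The main obstacle is the averaging step in the quantum lower bound. That a single string $x$ cannot be close to most concepts of $\Cc'$ needs a careful argument when $\Cc'$ is small. If the direct approach fails, I would restrict to a subclass where no concept has more than half the mass.
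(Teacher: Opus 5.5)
Your proposal is correct and is essentially the Servedio--Gortler argument, which the paper cites rather than reproves. Your $\hat\gamma$ equals the paper's $\gamma(\mathcal C)$: each term depends on $x$ only through $x_i$, so the majority string always attains the minimum, which settles your first worry. The decomposition into $\mathsf D=O(\log|\mathcal C|/\gamma)$, $\mathsf Q=\Omega(1/\sqrt{\gamma})$ and $\mathsf Q=\Omega(\log|\mathcal C|/\log N)$ is the standard one. The only difference from the paper's own treatment is that Theorem~\ref{thm:quantum-fgamma} gets the middle bound (even for $\mathrm{f}\gamma$) from the positive-weight adversary matrix $\Gamma[c,c']=\sqrt{\mu(c)\mu(c')}$, not from your BBBV hybrid. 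The obstacle you flag does not arise: the uniform distribution on $\mathcal C'$ with $|\mathcal C'|\ge 2$ has no atom above $1/2$, and two concepts with $\|\phi_c-\phi_x\|,\|\phi_{c'}-\phi_x\|<1/6$ would have output distributions at total variation distance strictly below $1/3$, which contradicts correctly identifying both. Hence $\mathbb E_c\|\phi_c-\phi_x\|^2\ge 1/72$ directly, giving $T=\Omega(1/\sqrt{\hat\gamma})$.
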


Later, Arunachalam et al.~\cite{arunachalam2021two} improved their simulation, albeit in the randomized setting.

\begin{theorem}[{\cite[Section~4]{arunachalam2021two}}]\label{thm:twonew}
    For all $\cal{C} \subseteq \zone^N$, we have
$
\mathsf{R}(\mathcal{C}) = O\!\left(\frac{\mathsf{Q}(\mathcal{C})^3}{\log \mathsf{Q}(\mathcal{C})} \log N \right).
$
\end{theorem}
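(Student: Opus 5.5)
We prove the bound with a randomized, halving-style learner whose number of rounds is controlled by the combinatorial parameter $\gamma$ that Servedio and Gortler use to lower bound $\Q(\Cc)$. For a set $\mathcal{S}\subseteq\Cc$ of still-consistent concepts, let $\gamma(\mathcal{S})$ be the largest $\gamma$ such that every subset $\mathcal{S}'\subseteq\mathcal{S}$ of size at least $2$ has a ``majority'' concept $m$ (the pointwise majority, or any string) with the following property: for every $x$, at most a $\gamma$-fraction of $\mathcal{S}'$ disagree with $m$ at $x$. Let $\gamma=\gamma(\Cc)$.

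\textbf{Step 1 (quantum lower bounds, taken from Servedio--Gortler).} We use two known bounds: $\Q(\Cc)=\Omega(1/\sqrt{\gamma})$ via a hybrid or adversary argument, and $\Q(\Cc)=\Omega(\log|\Cc|/\log N)$ via information theory. Only the first is essential. It says $1/\gamma = O(\Q^2)$.

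\textbf{Step 2 (the basic classical round).} Maintain the version space $\mathcal{S}$ and its majority string $m$. We repeatedly search for a point $x$ where the target $c^*$ disagrees with $m$. If no such point exists, then $c^*=m$ and we are done. Each disagreement found removes all concepts of $\mathcal{S}$ that agree with $m$ at $x$.

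The deterministic Servedio--Gortler argument proceeds as follows. Querying the best point eliminates a $\gamma$-fraction of $\mathcal{S}$. After $O(\log|\Cc|/\gamma)$ such steps, combined with a cap of $|\Cc|\le N^{O(\Q)}$, the total is $O(\Q^3\log N)$.

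To save the $\log\Q$ factor, we exploit randomness. Instead of certifying agreement with $m$ everywhere, we aim to find a disagreement quickly. The tool is a randomized ``find a disagreement with $m$'' subroutine. Relative to $c^*$, the disagreement set $\{x: c^*(x)\ne m(x)\}$ is small, and classical random search costs roughly $N/(\text{size of the set})$. The key is instead to bound the number of rounds. We will show that each found disagreement shrinks $\mathcal{S}$ by a factor depending on how many points were needed to find it. The trade-off is governed by a potential argument: spending $k$ queries on a round should buy a $k\gamma$-fraction decrease, or something better, in the sense of $\log|\mathcal{S}|$.

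\textbf{Step 3 (amortized analysis; the main obstacle).} We partition concepts in $\mathcal{S}$ by their distance $d$ from $m$ and run a doubling schedule over scales $d=2^j$. Concepts at large distance are caught by few random queries. Concepts at distance $d$ are few, because the majority condition with parameter $\gamma$ gives at most $\gamma|\mathcal{S}|$ per coordinate. Summing the cost over scales should produce a harmonic-type sum. In such a sum one factor of $\Q$ arises as $\sum_j 1/\log(\cdot)$ rather than $\sum_j 1$, which gives the $1/\log\Q$ saving. Concretely, we would try to show that the expected total cost is
\[
O\!\Big(\frac{1}{\gamma}\cdot\frac{\log|\Cc|}{\log(1/\gamma)}\Big),
\]
using the fact that a round costing $k$ reduces $\log|\mathcal{S}|$ by $\Omega(\log k)$ in the best case. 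Plugging in $1/\gamma=O(\Q^2)$ and $\log|\Cc|=O(\Q\log N)$ yields $O(\Q^3\log N/\log\Q)$.

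The final step is to apply Markov's inequality to convert the expected cost into a $2/3$-success learner. Making the $\Omega(\log k)$-per-round gain rigorous is the crux. It likely requires choosing $m$ more cleverly than the majority, for instance a weighted center, and that is where we expect the main difficulty.
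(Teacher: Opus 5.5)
Your proposal has a genuine gap. The $\log\Q$ saving, which is the whole content of this statement beyond the Servedio--Gortler bound, is only hoped for in Step~3, and the mechanism you sketch cannot supply it.

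\begin{itemize}
\item A potential on $\log|\mathcal S|$, tracked against the actual answers, is a worst-case quantity. An adversary that always returns the majority value lets each query remove only about a $\gamma$-fraction of $\mathcal S$, whatever center $m$ you pick. So a round of $k$ queries buys $O(k\gamma)$, not $\Omega(\log k)$. This is exactly the deterministic barrier, and Theorem~\ref{thm:dqseparation} shows it is real.
\item The only randomness you add, sampling coordinates to hit a disagreement with $m$, does not help. The set $\{x:c^*(x)\neq m(x)\}$ can be a single coordinate, and then random search costs $\Theta(N)$ queries.
\item You use $\gamma$ backwards. As written your definition is vacuous: the condition ``at most a $\gamma$-fraction disagree at every $x$'' is upward closed in $\gamma$. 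The Servedio--Gortler parameter only guarantees that every subset has \emph{some} coordinate whose minority fraction is at least $\gamma$. It gives no per-coordinate bound of $\gamma|\mathcal S|$ on the current version space, which is what Step~3 relies on.
\end{itemize}

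The missing idea is to use randomness only through Yao's principle, and to measure progress by the entropy of the posterior rather than the size of the version space. The paper recovers the statement this way, via Theorem~\ref{thm:qr-fetd}.

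Fix a prior $\mu$. At a node with posterior $\nu$, stop if some concept has mass at least $5/6$. Otherwise query the coordinate maximizing $q=\min\{\nu(\Cc_{i,0}),\nu(\Cc_{i,1})\}$.

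Since $\nu$ is not uniform on a subset, $\gamma$ does not apply; you need a distributional splitting bound. Take the positive-weight adversary matrix $\Gamma_{c,c'}=\sqrt{\nu(c)\nu(c')}$ for $c\neq c'$, and let $(\Delta_i)_{c,c'}=\mathbf 1[c_i\neq c'_i]$. Then $\|\Gamma\|\ge 1-\max_c\nu(c)>1/6$ and $\|\Gamma\circ\Delta_i\|\le\sqrt{\min\{\nu(\Cc_{i,0}),\nu(\Cc_{i,1})\}}$, so $q=\Omega(1/\Q^2)$. The paper packages this as $q>1/(6\fetd(\Cc))$ with $\fetd(\Cc)\le 1/\fgamma(\Cc)=O(\Q^2)$; see Theorems~\ref{thm:quantum-fgamma} and~\ref{thm:r-fetd} and Lemma~\ref{lemma:fgamma-fetd}.

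By monotonicity of $H_2$ on $[0,1/2]$, the expected entropy drop of such a query is $H_2(q)\ge H_2(\kappa/\Q^2)=\Omega(\log\Q/\Q^2)$ for a constant $\kappa>0$. A majority answer alone gives only about $q$. That $\log(1/q)$ factor is the saving. It is your ``$k$ queries buy $\Omega(\log k)$'' intuition made correct, but it holds only on average over the target.

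Summing against $H(\mu)\le\log|\Cc|=O(\Q\log N)$ bounds the expected depth by $O(\Q^3\log N/\log\Q)$. Markov's inequality then finishes.
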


Apart from these simulations, the best known separations between quantum and classical learning was by two well-known classes: $(i)$ point functions, $\Cc=\{e_i:i\in [N]\}$, for which the randomized query complexity is $\Omega(N)$ and the quantum query complexity is $O(\sqrt{N})$ by Grover's algorithm~\cite{grover1996fast} and $(ii)$ parity functions, $\Cc=\{\chi_S:\chi_S(x)=(-1)^{S\cdot x}\text{ for all } x\}_{S \subseteq [N]}$, for which the randomized query complexity is $O(\log N)$ and the quantum query complexity is $O(1)$ by Fourier sampling~\cite{bernstein1997quantum}. Apart from these two generic separations and reasonably natural generalizations of them~\cite{atici2005improved}, no  non-trivial separation was known between quantum and classical learning which also accounted for the domain size.\footnote{We remark that Belovs~\cite{belovs2015quantum} (and also Atici and Servedio~\cite{atici2005improved}) gave a quartic separation between quantum and classical learning, however, these classes are consistent with Conjecture~\ref{conj:aticiservedio}: the concepts are functions on \(n\) variables, and the \(Q(\Cc)\log N\) term in the conjectured upper bound is sufficiently large to accommodate the separation. }  This led to the natural question, raised and conjectured in several works~\cite{servedio2004equivalences,atici2005improved,montanaro2007structure,arunachalam2017guest,arunachalam2021two}, with its earliest explicit formulation appearing two decades ago in~\cite[Question~4.4]{atici2005improved}.

    \begin{conjecture}
    \label{conj:aticiservedio}
        For every concept class $\Cc\subseteq \{c:\01^n\rightarrow \01\}$, we have
    \[
 \R(\cal{C})=O\!\left(\Q(\cal{C})^2+\Q(\cal{C})\log N\right).
    \]\end{conjecture}
   This bound was believed to be tight for the point function and the parity function class.
\paragraph{Main result}
Our main contribution in this work is in refuting this conjecture.\footnote{The class $\Cc=\{\chi_s:\chi_s(x)=(-1)^{S\cdot x}\text{ for all } x\}_s$ of parity functions witnesses the bounds in our results, but has $\Q(\Cc) = 1$, so this is not a very interesting example. The question is thus interesting in the regime where the underlying concept class has $\Q(\cal C) = \omega(1)$.}

\begin{restatable}{theorem}{dqseparation}\label{thm:dqseparation}
There exists $\cal C\subseteq\zone^N$ with $\Q(\cal C)=\omega(1)$ and $\D(\cal C)=\Omega(\Q(\cal C)^3\log N)$.
\end{restatable}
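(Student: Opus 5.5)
We would prove the statement by an explicit composition of the two canonical speedups, Grover search on the outside and Bernstein--Vazirani on the inside, arranged so that a deterministic learner pays for both the $k^3$ search space and the $\log N$ cost of certifying each candidate, while a quantum learner pays only $O(k)$. Fix parameters $k$ and $m$ with $m \ge C k \log k$. Let $\mathcal S \subseteq \{0,1\}^m\setminus\{0\}$ be a set of ``keys'', and let $T:\mathcal S \to \binom{[k^3]}{k}$ map each key $s$ to a set $T(s)$ of $k$ marked blocks among $k^3$ blocks. We define the concept $c_s$ on domain $[k^3]\times\{0,1\}^m$ by $c_s(j,x) = s\cdot x \bmod 2$ if $j\in T(s)$, and $c_s(j,x)=0$ otherwise. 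Thus $N = k^3 2^m$ and $\log N = \Theta(m)$, and every concept is determined by its key $s$. We will choose $T$ so that the marked positions are encoded by $s$ itself, for instance by letting the first $O(k\log k)$ bits of $s$ name the set, and so that the family $\{T(s)\}$ is spread out (a design-like property: any fixed block lies in only a $O(1/k^2)$ fraction of the sets).

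\emph{Quantum upper bound.} Running Bernstein--Vazirani on block $j$ returns $s$ exactly if $j\in T(s)$ and returns $0$ otherwise, using a single query, and this computation can be done coherently. So ``block $j$ is marked'' is checkable with $O(1)$ queries. Since a $k/k^3 = 1/k^2$ fraction of blocks is marked, Grover search (amplitude amplification) finds a marked block with $O(k)$ queries and bounded error. One further BV query on that block reveals $s$, and hence $T(s)$ and the whole concept. This gives $\Q(\Cc)=O(k)=\omega(1)$. We would also note the lower bound $\Q(\Cc)=\Omega(k)$, which follows from the embedded search problem, although only the upper bound is needed.

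\emph{Deterministic lower bound.} We would run an adversary that answers every query with $0$. After $t$ queries, a key $s$ remains consistent exactly when, for every queried pair $(j,x)$ with $j\in T(s)$, we have $s\cdot x = 0$. The learner cannot stop while two keys remain consistent. For each block $j$, let $d_j$ be the dimension of the span of the vectors $x$ queried in block $j$, so that $\sum_j d_j \le t$. The goal is to show that if $t \le c\,k^3 m$, then at least two keys remain consistent. We would do this by counting. A uniformly random key from $\mathcal S$ survives with probability roughly $\prod_j 2^{-d_j\,\Pr[j\in T(s)]}$, modulo the correlation between the bits that choose $T(s)$ and the bits constrained by the queries. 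With $\Pr[j\in T(s)] \approx 1/k^2$, the expected number of surviving keys is about $|\mathcal S|\,2^{-t/k^2}$. Since $|\mathcal S| \approx 2^m$, this expectation stays above $1$ until $t = \Omega(k^2 m)$.

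That estimate only reaches $k^2 m$, which is the main obstacle: we need $k^3 m$. The adversary must exploit the fact that blocks queried heavily can be declared ``unmarked'' cheaply, while the learner is forced to spend about $m$ queries on each of roughly $k^3$ blocks. To fix this, we would decouple the set-choosing part of the key from the linear part. Let the key be $s=(a,r)$, where $a$ names $T$ and $r\in\{0,1\}^m$ is a fresh linear part, and let the marked blocks compute $r\cdot x$. The adversary then keeps alive every set $T$ that avoids the ``fully spanned'' blocks (those with $d_j \ge m/2$, of which there are fewer than $2t/m$). The linear constraints that $T$ imposes on $r$ have total dimension at most $k\cdot m/2 < m$, so some nonzero $r$ survives for each such $T$. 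If $t < c\,k^3 m$, fewer than $k^3/2$ blocks are fully spanned, so at least two sets $T$ avoid them. This yields $\D(\Cc)=\Omega(k^3 m)=\Omega(\Q(\Cc)^3\log N)$.

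The delicate points are two. First, the quantum algorithm must still recover $a$ with $O(1)$ extra queries once a marked block is found. We would handle this by encoding $a$ into the marked block's function, for example as the first coordinates of $r$ or via an additional BV instance. Second, this encoding must not give the deterministic learner a shortcut, and checking that is the step we expect to need the most care.
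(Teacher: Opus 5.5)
\textbf{There is a genuine gap.} It sits in the deterministic lower bound for your decoupled class, and it cannot be repaired within that class.

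\emph{The false step.} The inequality $k\cdot m/2<m$ fails for every $k\ge 2$. The single linear part $r$ is shared by all $k$ marked blocks. So a nonzero $r$ survives for a set $T$ only if the vectors queried in \emph{all} blocks of $T$ together span a proper subspace of $\F_2^m$. The natural sufficient condition $\sum_{j\in T}d_j<m$ forces a per-block threshold of about $m/k$, which returns you to your first estimate $\Omega(k^2m)$.

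\emph{The class really is easy classically.} It can be learned deterministically with $O(k^2m)$ queries.
\begin{itemize}
\item Query every block on $d=m/k+O(\log k)$ random vectors. For a fixed nonzero $r$, the number of blocks all of whose queried vectors are orthogonal to $r$ is distributed as $\mathrm{Bin}(k^3,2^{-d})$. It reaches $k$ with probability at most $\binom{k^3}{k}2^{-dk}\le(ek^2 2^{-d})^k$.
\item A union bound over the $2^m$ choices of $r$ then gives a fixed query set on which every concept answers $1$ somewhere.
\item After that, $m$ queries in that block recover $r$, and $k^3$ queries (one per block, at some $x$ with $r\cdot x=1$) recover $T$.
\end{itemize}
The total is $O(k^2m+k^3)=O(k^2m)$. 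Meanwhile $\Q(\Cc)=\Omega(k)$ gives $\Q(\Cc)^3\log N=\Omega(k^3m)$, so there is no cubic separation.

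\emph{The deferred encoding step is also fatal.} This persists even if you give each marked block its own linear part $r_j$, which would make an ``$m$ queries per block'' adversary sound. Suppose the name $a$ of $T$ can be read linearly from a marked block, either as a prefix of $r_j$ or via an extra Bernstein--Vazirani instance.
\begin{itemize}
\item A deterministic learner can then query each block on only $d'=\lceil\log_2(ek^2)\rceil+1$ random directions of the $a$-part.
\item For each fixed nonzero name $a$, all $k$ blocks of $T(a)$ answer $0$ with probability $2^{-kd'}$. A zero name, if you allow one, can be checked directly with $km$ queries.
\item A union bound over the at most $\binom{k^3}{k}\le(ek^2)^k$ names forces a $1$-answer after $O(k^3\log k)$ queries. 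This is $O(k^2m)$ since $m\ge Ck\log k$.
\item After that, $a$, $T$ and all $r_j$ follow from $O(km)$ more queries.
\end{itemize}

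\emph{The missing idea.} The paper avoids multiple marked blocks and uses a single hidden block. There are $q^2$ blocks, and one of them carries $x^{\mathsf T}Ay$ for an arbitrary nonzero $A\in\F_2^{q\times q}$.
\begin{itemize}
\item The $q$ Bernstein--Vazirani instances $x\mapsto\langle x,Ae_j\rangle$ live inside that one block, so no pointer is needed. Grover finds the block because at least a $1/(4q^2)$ fraction of the domain is $1$, and then $q$ more queries give $A$.
\item Classically, each zero answer is a single homogeneous linear constraint on the $q^2$ entries of $A$. With fewer than $q^4$ queries, some block receives fewer than $q^2$ of them and so admits a nonzero consistent $A^\star$.
\end{itemize}
This gives $\D\ge q^4=\Omega(\Q^3\log N)$ with $\log N=\Theta(q)$.
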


\begin{restatable}{theorem}{rqseparation}\label{thm:rqseparation}
There exists $\cal C\subseteq\zone^N$ with $\Q(\cal C)=\omega(1)$ and $\R(\cal C)=\Omega\!\left(\frac{\Q(\cal C)^3}{\log\Q(\cal C)}\log N\right)$.
\end{restatable}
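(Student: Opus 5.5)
The plan is to build $\Cc$ as a composition: an outer Grover-type search over many blocks, where each block holds an inner gadget. Quantumly the gadget can be tested with $O(1)$ queries, Bernstein--Vazirani style. Classically, testing it requires roughly $q\log N$ queries (deterministic) or roughly $q\log N/\log q$ queries (randomized). Fix a parameter $q$. Take $m=q^2$ blocks, each a copy of $\zone^{\ell}$, so $N=m\cdot 2^{\ell}$. A concept is specified by a hidden block index $i\in[m]$ and a hidden gadget instance $s$. It equals an inner function $g_s$ on block $i$ and the all-zero function on every other block. Every component is chosen so that identifying the concept amounts to first locating $i$ and then learning $s$.

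\textbf{Deterministic case (Theorem~\ref{thm:dqseparation}).} Take $g_s=\chi_s$ with $s\in\zone^{\ell}\setminus\{0\}$ and $\ell=\Theta(q\log N)$; the relation to $N$ is consistent if $\log N=\Theta(\ell+\log q)$, so we choose $\ell$ and $q$ with $q=\omega(1)$ and $\ell\gg\log q$. Each step is then as follows.
\begin{enumerate}[label=(\roman*)]
\item \emph{Quantum upper bound.} Build an $O(1)$-query ``block is nonzero'' checker: run Fourier sampling on block $j$ and test whether the outcome is nonzero. Amplitude amplification of this checker over the $q^2$ blocks finds $i$ with $O(q)$ queries, and one further Fourier sample recovers $s$. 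Hence $\Q(\Cc)=O(q)$.
\item \emph{Deterministic lower bound.} Use an adversary argument. The adversary answers $0$ to every query. After $T$ queries, let $V_j$ be the span of the queried points in block $j$. While some block $j$ has $V_j\neq\zone^{\ell}$, there is a nonzero $s$ with $\chi_s$ vanishing on $V_j$; more precisely we use the $\zone$-valued version $s\cdot x$, which is zero on $V_j$. So block $j$ is still possibly the marked one. Since $\dim V_j$ can grow by at most one per query, this forces $\D(\Cc)\ge m\ell=\Omega(q^3\log N)$.
\end{enumerate}
Since $\Q=O(q)$, this gives $\D=\Omega(\Q^3\log N)$.

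\textbf{Randomized case (Theorem~\ref{thm:rqseparation}).} Parities fail here, because a single random query detects a nonzero parity with probability $1/2$. The inner gadget must therefore still be quantumly checkable with $O(1)$ queries, while a randomized tester needs $\Omega(q\log N/\log q)$ queries to tell it apart from zero. My first attempt would be to let $g_s$ be a function whose Fourier spectrum is concentrated on a hidden set, but which is sparse in the Boolean domain. An example is the indicator of an affine subspace $A_s$ of codimension $k\approx\log q$ inside $\zone^{\ell}$, with $s$ encoding $A_s$. Fourier sampling from an appropriately prepared state then still reveals $A_s$ in $O(1)$ queries per coordinate batch. A random query hits $A_s$ only with probability $2^{-k}$. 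The quantity $\ell$ is split among roughly $q\log N/\log q$ such pieces, which is where the $\log q$ loss should enter. The randomized lower bound would then use Yao's principle on the hard distribution: $i$ is uniform, $s$ is uniform, and each unmarked block is zero. We argue that each query to block $j$ yields $O(\log q/(q\log N))$ bits of information about whether $j=i$. A mutual-information or hybrid argument would then give $\R=\Omega(m\cdot q\log N/\log q)=\Omega(q^3\log N/\log q)$.

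\textbf{Main obstacle.} I expect the main obstacle to be the randomized inner gadget together with its lower bound. The gadget must be simultaneously $O(1)$-query testable quantumly and cost $\approx q\log N/\log q$ queries to test with randomness, and the outer search must compose correctly. In particular we need a direct-sum-type statement showing that the classical learner cannot amortize work across the $q^2$ blocks. Theorem~\ref{thm:twonew} shows that the target bound is tight, so any gadget lower bound must lose exactly a $\log q$ factor and no more. If the affine-subspace gadget does not achieve this, I would next try a hidden-shift or Simon-type inner problem with output restricted to $\zone$.
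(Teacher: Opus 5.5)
\textbf{The proposal has a genuine gap: the inner gadget it asks for cannot exist.} Take any family $Y$ of nonzero strings and consider distinguishing $0$ from $Y$. The randomized cost is $O(\mathsf{FC}(0))$: sample coordinates from an optimal fractional certificate of the all-zero string, and each sample lands in $\mathrm{supp}(y)$ with probability at least $1/\mathsf{FC}(0)$. The quantum cost is $\Omega(\sqrt{\mathsf{fbs}(0)})=\Omega(\sqrt{\mathsf{FC}(0)})$, by the star-shaped adversary matrix with weights $\sqrt{w_y}$ on the pairs $(0,y)$, taken from an optimal fractional block packing.

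So an $O(1)$-query quantum block tester forces an $O(1)$-query randomized one. Locating the nonzero block then costs only $O(q^2)$ randomized queries, so ``search for the active block'' can never supply the cubic factor. Your affine-subspace example shows this directly. Fourier sampling $(-1)^{\mathbf 1_{A_s}}$ returns a nonzero outcome with probability only $O(2^{-k})=O(1/q)$, so detecting $A_s$ costs $\Theta(\sqrt q)$ quantum queries, not $O(1)$.

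There is also a counting obstruction. Arunachalam et al.\ give $\R(\Cc)=O(\Q(\Cc)^2\log|\Cc|/\log\Q(\Cc))$, so the target bound forces $\log|\Cc|=\Omega(\Q\log N)$. Codimension-$\log q$ affine subspaces give only $\log|\Cc|=O(\log q\cdot\log N)$. Since $\Q=\Omega(q)$ here, you get at best $\R=O(\Q^2\log N)$.

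Relatedly, $N=m2^{\ell}$ forces $\ell\le\log N$. So $\ell$ can be neither $\Theta(q\log N)$ nor split into $q\log N/\log q$ pieces. Even your deterministic sketch therefore gives only $\D\le m\ell\le q^2\log N=O(\Q^2\log N)$. This is why the paper uses the bilinear form $x^{\mathsf T}Ay$ there: its $q^2$-bit hidden matrix lives in a block of only $2^{2q}$ coordinates.

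\textbf{The missing idea is to make \emph{every} block informative and to run Grover coherently over all blocks as a subroutine of a single Fourier sample.} In $\Cc_t$, each block $(x,c)\in\mathbb F_{t^6}^2$ has a unique marked address $P_{\mathsf{trunc}}(c+xs)$ among $t^2$ possibilities.
\begin{itemize}
\item Quantumly, exact amplitude amplification in superposition over $(x,c)$ prepares the hidden-line state with $O(t)$ queries. One Fourier sample then yields $s$, except with probability at most $1/t$ (via $|P_{\mathsf{trunc}}^{-1}(z)|\le t^5$). Finally, $t$ Bernstein--Vazirani queries to the cheat sheet indexed by $s$ recover $P$.
\item Classically, the random monic degree-$t$ polynomial makes the labels $t$-wise independent. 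The equality-query lemma replaces $P_{\mathsf{trunc}}$ by a random function at cost $\binom Dt(2/t^2)^t$. A $\binom D2/t^6$ collision bound then makes the transcript independent of $s$. This forces $\Omega(t^3)$ queries: roughly $t$ labels, each a search over $t^2$ addresses.
\end{itemize}
The $\log\Q$ in the denominator is not matched by a gadget that ``loses a $\log q$ factor''. It is matched by a polynomial-size domain $N=\Theta(t^{14})$, so that $\log N=\Theta(\log\Q)$ while $\log|\Cc_t|=\Theta(t\log t)=\Theta(\Q\log N)$.

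You correctly flagged the inner gadget as the crux. But none of these ingredients appears in your plan, so as written there is no construction, and the proposed route cannot be completed.
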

Putting these together with Theorem~\ref{thm:sg} and Theorem~\ref{thm:twonew}, our work gives the optimal relation between quantum and classical learning, both in the deterministic and randomized setting, upto constant factors. In particular, it refutes the two-decade-old conjecture (\Cref{conj:aticiservedio}) of At{\i}ci and~Servedio~\cite{atici2005improved}.

\subsection{Deterministic separation}

\subsubsection{High-level overview}\label{sec:det-high-level}
We begin with the deterministic setting, where the idea behind the separation is particularly simple. The two canonical quantum advantages in exact learning come from Grover search~\cite{grover1996fast, bennett1997strengths} and Bernstein-Vazirani~\cite{bernstein1997quantum}: Grover gives a quadratic speedup for locating a hidden marked item, while Bernstein-Vazirani learns a hidden parity with one quantum query compared to the logarithmic classical lower bound.

Our construction combines these two advantages so that the classical costs multiply while the quantum costs add. We hide a Bernstein-Vazirani-amenable object at one of many possible locations. A deterministic learner must spend many queries ruling out such an object at each location, whereas a quantum learner can first locate the correct location using Grover search and then recover the hidden object using Bernstein-Vazirani.
 Our main observation is that a bilinear form provides exactly these two properties: classical queries reveal only one constraint at a time, while quantum queries can access many of its coordinates simultaneously. Fix $q\geq 2$. Let $c_0$ denote the all-zero function on $[q]^2\times(\mathbb F_2^q)^2$, and for each $b\in[q^2]$ and nonzero $A\in\mathbb F_2^{q\times q}$ define
\begin{equation}\label{eq:cba}
c_{b,A}(b',x,y)
:= \underbrace{\mathbf 1[b'=b]}_{\text{Grover structure}}
\cdot
\underbrace{x^\top Ay}_{\text{Bilinear structure}}.
\end{equation}
Our concept class is
\begin{equation}\label{eq:det-concept-class}
\mathcal C:=\{c_0\}\cup\{c_{b,A}:b\in[q^2],\ A\in\mathbb F_2^{q\times q}\setminus\{0\}\}.
\end{equation}
For a concept $c_{b,A}$, all $q^2-1$ ``blocks'' $b'\neq b$ are identically zero, while the unique hidden block $b$ contains the bilinear form associated with $A$. Quantumly, the hidden block can be located using $O(q)$ queries, after which $A$ can be recovered using another $O(q)$ queries. Deterministically, however, each zero answer within a block imposes only one linear constraint on the $q^2$ entries of $A$, so ruling out a block can require $q^2$ queries. Since there are $q^2$ possible blocks, this leads to an $\Omega(q^4)$ deterministic lower bound, compared with an $O(q)$ quantum upper bound. Finally, $N=q^2 2^{2q}$, so $\log N=\Theta(q)$, and therefore $\D(\mathcal C)=\Omega(\Q(\mathcal C)^3\log N)$, proving \Cref{thm:dqseparation}.

\subsubsection{Quantum and classical bounds}
\textbf{Quantum upper bound.} Our quantum learner for the concept class described in \Cref{eq:cba} and \Cref{eq:det-concept-class} first locates the hidden block, then learns $A$.

 \emph{Learning the hidden block.}
    Let $A_{b'}=A \cdot \mathbf 1 [b=b']$.
The algorithm prepares a uniform superposition over \(b',x,y\) and use one membership query as a phase query. Up to normalization, this produces
$$
    \sum_{b'\in[q^2]}
    \sum_{x,y\in\mathbb{F}_2^q}
       (-1)^{x^\top A_{b'}y}
       \ket{b',y,x}.
$$
Applying a Hadamard transform to the \(x\)-register and using the orthogonality of characters the new state is
$$
    \sum_{b'\in[q^2]}
    \sum_{y\in\mathbb{F}_2^q}
       \ket{b',y,A_{b'}y}.
$$
Thus every basis state corresponding to a zero block has $0$ in its final register, whereas for the hidden block the final register contains $Ay$.
Since $A\neq 0$, we have $\rank(A)\geq 1$, and hence
\[
\Pr_{y\in\mathbb F_2^q}[Ay\neq 0]
=1-2^{-\rank(A)}
\geq \frac12.
\]
Since only one of the \(q^2\) blocks is nonzero, the total probability mass on states with final register nonzero is at least
$1/2q^2$. 
Amplitude amplification therefore boosts the probability of observing a basis state with nonzero final register to a constant using $O(q)$ membership queries.  The reflections required for amplitude amplification are efficiently implementable, since the good subspace is determined by whether the final register is nonzero and the initial state is prepared by a known unitary using one membership query. 
We then measure. If the final register is nonzero, the first register reveals the hidden block $b$; otherwise, the learner outputs $c_0$.

\emph{Learning $A$.}
Once $b$ is known, we recover $A$ column by column using Bernstein-Vazirani. This is done as follows: For each $j\in[q]$, fix $y=e_j$ and consider the function
\[
x\mapsto c_{b,A}(b,x,e_j)=x^\top A e_j.
\]
Since $x^\top A e_j=\langle x,Ae_j\rangle$, this is precisely a Bernstein-Vazirani instance with hidden string $Ae_j$. Hence one Bernstein-Vazirani query recovers the $j$th column of $A$. Repeating this for $j=1,\ldots,q$ recovers the entire matrix using $q$ additional queries. Overall,
\[
\Q(\Cc)=O(q).
\]

\paragraph{Deterministic lower bound.}
We prove the lower bound using an adversary argument. Consider a deterministic learner making fewer than $q^4$ queries, and consider the transcript obtained by answering $0$ to every query. Since there are $q^2$ blocks, by averaging there exists a block $b^\star$ that receives fewer than $q^2$ queries. Let $(b^\star,x_1,y_1),\ldots,(b^\star,x_t,y_t)$
be the queries made in block $b^\star$, where $t<q^2$. We claim that there exists a nonzero matrix $A^\star\in\mathbb F_2^{q\times q}$ such that $x_s^\top A^\star y_s=0$ for every $s\in[t]$. Each condition $x_s^\top A^\star y_s=0$ is a homogeneous linear equation in the $q^2$ entries of $A^\star$. Since there are only $t<q^2$ such equations, the resulting homogeneous system has a nonzero solution.
Therefore the all-zero transcript is consistent with both $c_0$ and $c_{b^\star,A^\star}$: queries outside block $b^\star$ are answered $0$ automatically, while the choice of $A^\star$ ensures that every query inside block $b^\star$ is also answered $0$. Hence no deterministic learner making fewer than $q^4$ queries can identify the target concept, and so $\D(\Cc)\geq q^4$. Combining this with $\Q(\Cc)=O(q)$ and $\log N=\Theta(q)$, where the latter follows from $N=q^2 2^{2q}$, we obtain
\[
\D(\Cc)
=\Omega(q^4)
=\Omega\!\left(\Q(\Cc)^3\log N\right).
\]
This proves Theorem~\ref{thm:dqseparation}.
\subsection{Randomized separation}
\subsubsection{High-level overview}
First observe that the deterministic separation above does not directly yield a randomized separation. Indeed, consider the concept class $\Cc$ described in \Cref{eq:cba} and \Cref{eq:det-concept-class}. For uniformly random $x,y\in\mathbb F_2^q$, we have $\Pr[x^\top Ay=1]=\frac12(1-2^{-\rank(A)})\geq \frac14$ whenever $A\neq 0$.
Thus, for each $b'\in[q^2]$, a randomized learner can test whether $b'=b$ using $O(1)$ random queries to block $b'$. All zero blocks always return $0$, while the hidden block is detected with constant probability. Once the hidden block $b$ is found, the learner can recover $A$ using $q^2$ queries: for every $i,j\in[q]$, query $(e_i,e_j)$ in block $b$, which returns $e_i^\top A e_j=A_{ij}$.
Thus all entries of $A$ can be recovered using $q^2$ queries. Hence the previous class satisfies $\R(\Cc)=O(q^2)$ in contrast with the $\D(\Cc) = \Omega(q^4)$ lower bound outlined in the previous section.

As a first step towards our randomized separation, we modify the construction above, which yields a concept class $\Cc$ that already refutes Conjecture~\ref{conj:aticiservedio} by obtaining
\[
\R(\Cc)=\Omega(\Q(\Cc)^2\log N).
\]
We discuss this candidate as a warm-up in Section~\ref{sec:warmup}. However, pushing beyond this requires new ideas. We discuss a second issue that we had to circumvent for our separation.
Recall that~\cite{arunachalam2021two} showed the upper bound of
\[
    \R(\Cc)=O\!\left(\frac{\Q(\Cc)^3\log N}{\log \Q(\Cc)}\right),
\]
for all concept classes $\Cc$.  To obtain our stronger separation, we aim for a concept class satisfying
\[
\R(\Cc)=\Omega(\Q(\Cc)^3)
\qquad\text{and}\qquad
\log N=\Theta(\log \Q(\Cc)).
\]
The second requirement prevents us from using Bernstein-Vazirani in the same way as in the deterministic separation. Indeed, obtaining a factor-$t$ classical cost from a parity on $t$ bits requires a domain of size $2^t$, and hence forces $\log N=\Omega(t)$. We therefore need a different inner problem that retains a one-versus-$t$ quantum-classical gap while having domain size only polynomial in $t$.

Our main idea is to move away from Bernstein-Vazirani as the main source of the separation and instead use a problem inspired by the hidden subgroup problem. In particular, consider the following \emph{hidden line problem over a finite field}.

Let $t\ge4$ be a power of two and let $s\in \mathbb F_{t^6}$ be unknown. Define $F_s:\mathbb F_{t^6}^{\,2}\to \mathbb F_{t^6}$ by
\[
F_s(x,c)=P(c+xs),
\]
where $P$ is a polynomial whose precise form will be specified later. The function is constant on each affine line $c+xs=y$, with the hidden parameter $s$ determining the slope of these lines. If one had direct query access to $F_s$, then one quantum query followed by Fourier sampling would recover $s$ with high probability. Our Boolean concept will encode these hidden-line values using two parts. In the first part, each pair $(x,c)\in \mathbb F_{t^6}^{\,2}$ indexes a block containing a unique marked address encoding $F_s(x,c)$. The second part contains auxiliary information that allows the learner to recover $P$ efficiently once $s$ is known.

Fix a block $(x,c)$. It contains $t^2$ possible addresses and exactly one of them is marked, namely $P_{\mathsf{trunc}}(c+xs)$. Thus, classically, uncovering the marked address in this block is an unstructured search problem of finding a unique marked element among $t^2$ possibilities, which is well known to require $\Omega(t^2)$ queries. Quantumly, the same marked address can be found in $O(t)$ queries using Grover search. Moreover, for any $t$ blocks whose associated hidden inputs are distinct, their marked addresses are independent and uniform. Thus, heuristically, uncovering the marked addresses in $t$ such blocks requires about $t\cdot t^2=t^3$ queries, suggesting an $\Omega(t^3)$ randomized lower bound. There are, however, a few nontrivial issues in turning this intuition into the desired separation.
\begin{enumerate}
    \item \textbf{Hiding the marked addresses.} For the randomized lower bound, it is not enough merely to hide each useful value at one of $t^2$ locations: the locations of the marked addresses must remain difficult to predict even for an adaptive learner. We therefore choose $P$ at random from a suitable family of degree-$t$ polynomials. The randomness in the coefficients ensures that the values of $P$ at any $t$ distinct inputs are independent and uniform. Consequently, after seeing fewer than $t$ such values, the marked address corresponding to a new input remains uniform and unpredictable.

    \item \textbf{Hiding the slope.} Even after a randomized learner discovers some marked addresses, the corresponding values should not determine the hidden slope $s$ too quickly. As long as fewer than $t$ distinct underlying inputs have been exposed, the corresponding polynomial values remain independent and uniform. We then show that collisions among these underlying inputs are unlikely, so a small number of revealed values gives essentially no information about $s$.
    
    \item \textbf{Recovering the polynomial.} The previous two issues concern the randomized lower bound. On the quantum side, recovering the hidden slope $s$ is not enough to identify the concept, since the polynomial $P$ remains unknown. We therefore add an auxiliary ``cheat sheet'' component whose unique nonzero sheet is indexed by $s$. Once $s$ has been recovered, the quantum learner can access this sheet and efficiently recover the coefficients of $P$. All other sheets are identically zero, and since there are many possible sheet addresses, a randomized learner cannot locate the active sheet without essentially learning $s$ first.
\end{enumerate}

With this motivation, we now formally define the concept class.
Fix a binary basis $e_1,\ldots,e_{6\log t}$ of $\mathbb F_{t^6}$, and let $\mathsf{trunc}:\mathbb F_{t^6}\to \{0,1\}^{2\log t}$ denote truncation to the first $2\log t$ coordinates. 
Since each element of $\{0,1\}^{2\log t}$ has exactly $t^4$ preimages under $\mathsf{trunc}$, truncation maps a uniform element of $\mathbb F_{t^6}$ to a uniform element of $\{0,1\}^{2\log t}$.
An unknown concept is indexed by a pair $(P,s)$, where $s\in \mathbb F_{t^6}$ and
$P\in \mathbb F_{t^6}[X]$ is the monic degree-$t$ polynomial
\begin{equation}\label{eq:p-props}
P(X)=X^t+\sum_{j=0}^{t-1}a_jX^j,
\qquad
a_0,\ldots,a_{t-1}\in \mathbb F_{t^6}.
\end{equation}
Thus $P$ defines a function $P:\mathbb F_{t^6}\to \mathbb F_{t^6}$. Write $P_{\mathsf{trunc}}:=\mathsf{trunc}\circ P:\mathbb F_{t^6}\to \{0,1\}^{2\log t}$. For each pair $(P,s)$ with $s \in \mathbb F_{t^6}$ and $P$ as in \Cref{eq:p-props}, the concept $c_{P,s}$ consists of two parts.
\begin{itemize}
    \item \textbf{Block part.} For every pair $(x,c)\in \mathbb F_{t^6}^{\,2}$, the coordinates $\{(x,c,z):z\in \{0,1\}^{2\log t}\}$ form a \emph{block} of size $t^2$. Exactly one coordinate in this block is \emph{marked}, namely the coordinate with $z=P_{\mathsf{trunc}}(c+xs)$. The concept takes value $1$ at this coordinate and $0$ at every other coordinate in the block.

\begin{figure}[!ht]
\centering
\begin{tikzpicture}[x=1cm,y=1cm]
\node at (-1.25,0.4) {$\text{block }(x,c)$};
\draw (0,0) rectangle (1,0.8);
\draw (1,0) rectangle (2,0.8);
\draw[fill=black!8] (3,0) rectangle (5,0.8);
\draw (6,0) rectangle (7,0.8);
\draw (7,0) rectangle (8,0.8);
\node at (0.5,0.4) {$0$};
\node at (1.5,0.4) {$0$};
\node at (2.5,0.4) {$\cdots$};
\node at (4,0.4) {$1$};
\node at (5.5,0.4) {$\cdots$};
\node at (6.5,0.4) {$0$};
\node at (7.5,0.4) {$0$};
\node at (4,-0.35) {$z=P_{\mathsf{trunc}}(c+xs)$};
\draw[decorate,decoration={brace,mirror,amplitude=5pt}]
    (0,-0.8) -- (8,-0.8)
    node[midway,yshift=-0.45cm] {$t^2$};
\end{tikzpicture}
\caption{Each block has one marked address, namely $z=P_{\mathsf{trunc}}(c+xs)$. Finding the marked address reveals the value $P_{\mathsf{trunc}}(c+xs)$.}\label{fig:block}
\end{figure}
There are $t^{12}$ choices of $(x,c)$, and each block contains $t^2$ coordinates. Hence the block part~contains
$t^{12}\cdot t^2=t^{14}$ 
bits in total. See \Cref{fig:block} for a visual description of the block part.
    \item \textbf{Cheat-sheet part.} Recovering $s$ alone does not identify the concept, since the polynomial $P$ remains unknown. We therefore include one auxiliary sheet for each $u\in \mathbb F_{t^6}$. Every sheet with $u\neq s$ is identically zero, while the unique active sheet $u=s$ contains Hadamard encodings of the coefficients $a_0,\ldots,a_{t-1}$ of $P$ described in \Cref{eq:p-props}. Concretely, the coordinate indexed by $(u,j,b)$, where $j\in\{0,\ldots,t-1\}$ and $b\in\mathbb F_{t^6}$, has value
\[
\mathbf 1[u=s]\cdot \langle b,a_j\rangle.
\]
Once $s$ is known, the learner knows which sheet is active and can recover the coefficients of $P$ efficiently. There are $t^6$ choices of $u$, $t$ choices of $j$, and $t^6$ choices of $b$. Hence the cheat-sheet part contains
$t^6\cdot t\cdot t^6=t^{13}$ 
bits in total. See \Cref{fig:cheat} for a visual description of the cheat-sheet~part.
    \begin{figure}[!ht]
\centering
\begin{tikzpicture}[x=1cm,y=1cm]

\node at (-0.9,1.8) {$u_1\neq s$};
\draw (0,0) rectangle (1.6,3.2);
\node at (0.8,2.7) {$0$};
\node at (0.8,2.0) {$0$};
\node at (0.8,1.3) {$\vdots$};
\node at (0.8,0.6) {$0$};

\node at (2.2,1.6) {$\cdots$};

\node at (4.2,3.55) {$u=s$};
\draw[fill=black!8] (3,0) rectangle (5.4,3.2);
\draw (3,2.4) -- (5.4,2.4);
\draw (3,1.6) -- (5.4,1.6);
\draw (3,0.8) -- (5.4,0.8);

\node at (4.2,2.8) {$(\langle b,a_0\rangle)_{b \in \mathbb F_{t^6}}$};
\node at (4.2,2.0) {$(\langle b,a_1\rangle)_{b \in \mathbb F_{t^6}}$};
\node at (4.2,1.2) {$\vdots$};
\node at (4.2,0.4) {$(\langle b,a_{t-1}\rangle)_{b \in \mathbb F_{t^6}}$};

\node at (6.2,1.6) {$\cdots$};

\node at (10.2,1.8) {$u_{t^6}\neq s$};
\draw (8,0) rectangle (9.6,3.2);
\node at (8.8,2.7) {$0$};
\node at (8.8,2.0) {$0$};
\node at (8.8,1.3) {$\vdots$};
\node at (8.8,0.6) {$0$};

\node at (7.5,1.6) {$\cdots$};
\draw[decorate,decoration={brace,mirror,amplitude=5pt}]
    (0,-0.45) -- (8.6,-0.45)
    node[midway,yshift=-0.45cm] {$t^6$ sheets indexed by $u\in \mathbb F_{t^6}$};

\draw[decorate,decoration={brace,mirror,amplitude=5pt}]
    (5.7,0) -- (5.7,3.2)
    node[midway,xshift=0.75cm] {$t$ rows};

\end{tikzpicture}
\caption{The cheat-sheet part consists of \({t^6}\) sheets indexed by \(u\in \mathbb F_{t^6}\). All sheets are zero except the active sheet \(u=s\), whose \(j\)-th row is the  Hadamard encoding \((b\mapsto \langle b,a_j\rangle)_{b \in \mathbb F_{t^6}}\) of the coefficients~\(a_j\).}\label{fig:cheat}
\end{figure}
\end{itemize}

The concept class  $\mathcal C_t$ is now implicitly defined to be the set of all Boolean-valued concepts, one for each $P,s$ defined as above.

\subsubsection{Quantum and classical bounds}
\paragraph{Quantum upper bound.}
The quantum learner first recovers the hidden slope $s$ from the block part. For every $(x,c)\in \mathbb F_{t^6}^{\,2}$, exactly one of the $t^2$ addresses is marked, namely $z=P_{\mathsf{trunc}}(c+xs)$. Using exact amplitude amplification over the address register, the learner can therefore prepare
\[
\frac1{t^6}\sum_{x,c\in \mathbb F_{t^6}}\ket{x,c,P_{\mathsf{trunc}}(c+xs)}
\]
using $O(t)$ membership queries. Importantly, this can be done coherently with $(x,c)$ in superposition.

Next, apply $H^{\otimes 6\log t}$ to each of the first two registers and measure them. Writing $y=c+xs$ and using orthogonality of characters, the resulting outcome $(\alpha,\beta)$ satisfies $\langle\alpha,x\rangle=\langle\beta,xs\rangle$ for every $x\in \mathbb F_{t^6}$.
Whenever $\beta\neq 0$, taking $x=e_i$ for $i=1,\ldots,6\log t$ gives $\alpha_i=\langle\beta,e_i s\rangle$, which is a system of $6\log t$ linear equations over $\F_2$ that uniquely determines $s$.
Indeed, suppose that $s'$ is another solution. Then
\[
\langle\beta,x(s-s')\rangle=0
\qquad\text{for every }x\in \mathbb F_{t^6}.
\]
If $s\neq s'$, multiplication by $s-s'$ is a bijection on $\mathbb F_{t^6}$, and hence
$\langle\beta,y\rangle=0$ for every $y\in \mathbb F_{t^6}$, which implies $\beta=0$, a contradiction. Thus, whenever $\beta\neq0$, the outcome $(\alpha,\beta)$ uniquely determines $s$. 

It remains to bound the probability of the exceptional case $\beta=0$. For every label $z\in \{0,1\}^{2\log t}$, there are exactly $t^4$ elements $w\in \mathbb F_{t^6}$ satisfying $\mathsf{trunc}(w)=z$. Since $P$ has degree $t$, each equation $P(y)=w$ has at most $t$ solutions. Hence
\[
|P_{\mathsf{trunc}}^{-1}(z)|
\le t\cdot t^4
=t^5.
\]

For $\beta=0$, necessarily $\alpha=0$. For each $z\in \{0,1\}^{2\log t}$, the amplitude of $\ket{0,0,z}$ is
\[
\frac{1}{t^{12}}\bigl|\{(x,c)\in \mathbb F_{t^6}^{\,2}:P_{\mathsf{trunc}}(c+xs)=z\}\bigr|
=
\frac{|P_{\mathsf{trunc}}^{-1}(z)|}{t^6},
\]
since for each $x$ there are exactly $|P_{\mathsf{trunc}}^{-1}(z)|$ choices of $c$. Therefore
\[
\Pr[\beta=0]
=
\frac1{t^{12}}\sum_{z\in \{0,1\}^{2\log t}}|P_{\mathsf{trunc}}^{-1}(z)|^2
\le
\frac1{t^{12}}
\left(\max_{z\in \{0,1\}^{2\log t}}|P_{\mathsf{trunc}}^{-1}(z)|\right)
\sum_{z\in \{0,1\}^{2\log t}}|P_{\mathsf{trunc}}^{-1}(z)|
\le \frac1t,
\]
where we used $\sum_{z\in \{0,1\}^{2\log t}}|P_{\mathsf{trunc}}^{-1}(z)|={t^6}$. Thus the learner recovers $s$ with probability at least $1-1/t$ using $O(t)$ membership queries.
Finally, once $s$ is known, the learner knows which cheat sheet to
query. Its $t$ rows contain Hadamard encodings of the
coefficients of $P$. Each coefficient can be recovered with one query,
so all of $P$ is learned using another $t$ queries. Therefore, the
quantum query complexity is
\[
\underbrace{O(t)}_{\substack{\text{prepare the}\\\text{label superposition}}}
\cdot
\underbrace{1}_{\substack{\text{one Fourier}\\\text{sample}}}
\;+\;
\underbrace{O(t)}_{\substack{\text{recover }P\\\text{from cheat sheet }s}}
=O(t).
\]

\paragraph{Randomized lower bound.}
We prove that \(\R(\cal C_t)=\Omega(t^3)\) using Yao's minimax principle and a sequence of hybrid experiments. Choose \(s,a_0,\ldots,a_{t-1}\) independently and uniformly from \(\F_{t^6}\), and fix a deterministic learner making at most \(D\) queries. We compare its execution under this distribution with three successively simpler experiments. Each hybrid removes one source of dependence on the hidden slope \(s\), while changing the learner's success probability by only a small amount. In the final experiment, the learner's entire transcript is independent of \(s\), at which point the desired lower bound is immediate. We now present the three hybrids and summarize the eventual lower bound afterwards.

\begin{enumerate}
    \item \textbf{We may assume the active cheat sheet is never queried.}
    First modify the execution by answering every cheat-sheet query with $0$, while leaving all block answers unchanged. Let $A$ be the event that, in this modified execution, the learner either queries the sheet indexed by $s$ or outputs a concept with slope $s$.
    
    The real and modified executions agree until the learner first queries the sheet indexed by $s$. Hence, if the learner succeeds in the real execution, then either it must query this sheet, or else the two executions remain identical throughout and its final output must have the correct slope. Therefore,
    \[
    \Pr[\mathrm{success}]
    \le
    \Pr[A].
    \]
    \item \textbf{Replacing the polynomial labels by a random function.}
    For any $r\le t$ distinct points $y_1,\ldots,y_r\in \mathbb F_{t^6}$, the values $P(y_1),\ldots,P(y_r)$ are independent and uniform in $\mathbb F_{t^6}$. Hence $P_{\mathsf{trunc}}(y_1),\ldots,P_{\mathsf{trunc}}(y_r)$ are independent and uniform in $\{0,1\}^{2\log t}$.
    
    For fixed $s$, every block query is an equality query of the form $    \mathbf 1[P_{\mathsf{trunc}}(c+xs)=z]$. We show that replacing $P_{\mathsf{trunc}}$ by a fully random function $g:\mathbb F_{t^6}\to \{0,1\}^{2\log t}$ changes the probability of the event $A$ from the previous step by at most
    \[
    \varepsilon_D
    =
    \binom Dt\left(\frac{2}{t^2}\right)^t.
    \]
    For $D$ a sufficiently small constant times $t^3$, this error is exponentially small in $t$.
    \item \textbf{Replacing the random function by independent block labels.}
    After the previous step, the label of block $(x,c)$ is $g(c+xs)$, where $g:\mathbb F_{t^6}\to \{0,1\}^{2\log t}$ is a fully random function. We compare this with the \emph{independent-block experiment} in which each block $(x,c)$ is assigned its own independent uniform label $U_{x,c}\in \{0,1\}^{2\log t}$, reused on repeated queries to that block.
    
    The two experiments can be coupled to agree until two distinct queried blocks $(x,c)$ and $(x',c')$ satisfy 
    \[
    c+xs=c'+x's.
    \]
    Indeed, as long as all these hidden inputs are distinct, the corresponding values of the random function $g$ are themselves independent and uniform in $\{0,1\}^{2\log t}$.
    For any fixed pair of distinct blocks, the above equality holds for at most one slope $s$: if $x=x'$, it is impossible since then $c\neq c'$, while if $x\neq x'$, it requires $s=(c'-c)/(x-x')$. In the independent-block experiment, all answers are independent of $s$, and hence so are the learner's adaptive choices of which blocks to query. Conditioning on the resulting transcript therefore fixes at most $D$ queried blocks while leaving $s$ uniform in $\mathbb F_{t^6}$. A union bound over pairs shows that a collision occurs with probability at most $\binom D2/{t^6}$. Thus this replacement changes the probability of $A$ by at most $\binom D2/{t^6}$.
\end{enumerate}
Thus, after the three hybrid steps, we have reduced to an experiment in which the learner's entire view is independent of the hidden slope \(s\). The only remaining way to identify \(s\) is therefore to guess it, either through one of its cheat-sheet queries or in its final output. 
 Hence the cheat-sheet addresses it queries, as well as its final slope guess, are independent of $s$. There are at most $D+1$ such guesses, so the probability of the event $A$ is at most $(D+1)/{t^6}$. Combining the three steps~gives
\[
\Pr[\mathrm{success}]
\le \frac{D+1}{t^6} + \binom Dt\left(\frac2{t^2}\right)^t + \frac{\binom D2}{t^6}.
\]
For $D=\lfloor t^3/10\rfloor$, this is at most $O(t^{-3})+(e/5)^t+1/200<2/3$ for sufficiently large $t$. Therefore, by Yao's principle,
\[
\R(\mathcal C_t)
\geq 
\underbrace{t}_{\text{finding marked address}}
\cdot
\underbrace{t^2}_{\text{queries per address}}
=
\Omega(t^3).
\]
The $t^{12}$ blocks contribute $t^{14}$ coordinates and the $t^6$ cheat sheets contribute $t^{13}$ coordinates, giving domain size $N=t^{14}+t^{13}=\Theta(t^{14})$, while the $t^{6t}$ choices of $P$ and $t^6$ choices of $s$ define distinct concepts, so $|\mathcal C_t|=t^{6(t+1)}$. Hence the standard exact-learning lower bound of~\cite{servedio2004equivalences} gives
\[
\Q(\mathcal C_t) = \Omega\!\left(\frac{\log |\mathcal C_t|}{\log N}\right) = \Omega(t).
\]
Together with the quantum upper bound, this yields $\Q(\mathcal C_t)=\Theta(t)$ and $\log N=\Theta(\log \Q(\mathcal C_t))$. Thus,
\[
\R(\mathcal C_t) = \Omega\!\left(\frac{\Q(\mathcal C_t)^3\log N}{\log \Q(\mathcal C_t)}\right),
\]
proving \Cref{thm:rqseparation} and matching the general simulation upper bound of~\cite{arunachalam2021two} for this family.

\subsection{Other results}
We now present other structural results on exact learning with membership queries, that might be of independent interest.

\subsubsection{Booleanization of quantum learnability}
Every Boolean decision about an unknown concept is at most as hard as identifying the concept itself: one can first learn the concept and then evaluate the desired function with no extra queries. A natural question is to ask when the opposite direction is true: for what concept classes $\Cc \subseteq \zone^N$ is it true that some Boolean decision witnesses a matching quantum query lower bound? Equivalently, 
\begin{quote}
    \emph{For what promise domains $D\subseteq\zone^N$ does there exist a partial Boolean function $f:D\to\zone$ whose quantum query complexity is asymptotically equal to that of identifying an unknown string in $D$?}
\end{quote}
For the full domain $D=\zone^N$, any maximally hard function for quantum query complexity (e.g., parity) easily witnesses such a lower bound, but it is not clear whether this remains true for an arbitrary restricted domain.
 Prior work used Boolean decisions to lower-bound exact learning~\cite{ambainis2004quantum} and studied learning with respect to prescribed partitions~\cite{atici2005improved}. We show that the answer to the question above is affirmative for \emph{every} promise domain: every concept class admits a Boolean decision whose quantum query complexity matches that of learning the concept itself, i.e.,
\[
\Q(\mathcal C)=\Theta\!\left(\max_{P_1\uplus P_2=\mathcal C}\Q_{\{P_1,P_2\}}(\mathcal C)\right).
\]
We defer the proof to Section~\ref{sec:booleanization}. In contrast, we show that the analogous statement for randomized query complexity is false in a strong sense.

\subsubsection{A new combinatorial measure}
The extended teaching dimension (\(\ETD\)) was introduced by Heged\H{u}s~\cite{hegedHus1995generalized} in connection with the query complexity of exact learning, building on earlier work of Moshkov~\cite{moshkov1982conditional}. In particular, \(\ETD\) and its variants have been used to obtain upper bounds on the number of membership queries required for exact learning~\cite{hegedHus1995generalized,balcazar2002new,BshoutyMakhoul18,hanneke2024star}. In a rather different line of work, Servedio and Gortler~\cite{servedio2004equivalences} introduced the \(\gamma\)-parameter, which had already appeared implicitly as a lower-bound measure in~\cite{bshouty1994oracles}. They showed that
\begin{equation}\label{eq:1/sqrt gamma lower bound}
    \Q(\cal C)
    =
    \Omega\!\left(\frac{1}{\sqrt{\gamma(\cal C)}}\right).
\end{equation}
Although \(\ETD\) and \(\gamma\) arose from different approaches to exact learning, they play closely related roles: both measure, in different ways, how much progress can be forced by a membership query. It is therefore natural to ask whether the two parameters are manifestations of the same underlying combinatorial phenomenon. We show that this is indeed the case after passing to natural fractional relaxations. We introduce fractional analogues \(\fgamma\) and \(\fetd\) of \(\gamma\) and \(\ETD\), respectively, satisfying
\[
    \frac{1}{\gamma(\cal C)}
    \leq
    \frac{1}{\fgamma(\cal C)}
    \qquad\text{and}\qquad
    \fetd(\cal C)\leq \etd(\cal C).
\]
Our main structural observation is that the distinction between the two measures disappears completely after fractionalization: \Cref{lemma:fgamma-fetd} shows that
\[
    \frac{1}{\fgamma(\cal C)}
    =
    \Theta\!\bigl(\fetd(\cal C)\bigr).
\]
This phenomenon is reminiscent of the relationship between block sensitivity and certificate complexity for Boolean functions. While \(\bs\) and \(\C\) can differ polynomially, Tal~\cite{Tal13} observed that their natural fractional relaxations coincide by linear programming duality:
$   \fbs(f,x)=\fc(f,x)$
for every input \(x\); see also~\cite{GSS16}. The analogy is particularly natural here because extended teaching dimension is closely related to certificate complexity, with a specifying set playing the role of a certificate. Lemma~\ref{lemma:fgamma-fetd} shows that an analogous collapse occurs for the splitting parameter and extended teaching dimension in exact learning. 
Moreover, this common fractional parameter has direct consequences for query complexity. 

\begin{theorem}\label{thm:qr-fetd}
    For every \(\cal C\subseteq \zone^N\),
    \[
        \Q(\cal C)
        =
        \Omega\!\left(\sqrt{\fetd(\cal C)}\right),
        \qquad
        \R(\cal C)
        =
        O\!\left(
            \frac{\fetd(\cal C)}
                 {\log(\fetd(\cal C)+1)}
            \log|\cal C|
        \right).
    \]
    Equivalently, by Lemma~\ref{lemma:fgamma-fetd},
    \[
        \Q(\cal C)
        =
        \Omega\!\left(
            \sqrt{\frac{1}{\fgamma(\cal C)}}
        \right),
        \qquad
        \R(\cal C)
        =
        O\!\left(
            \frac{1}
                 {\fgamma(\cal C)\log(1/\fgamma(\cal C)+1)}
            \log|\cal C|
        \right).
    \]
\end{theorem}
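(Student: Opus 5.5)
The plan is to prove both bounds in the $\fetd$ form and then transfer them to the $\fgamma$ form using Lemma~\ref{lemma:fgamma-fetd}. Throughout I take $\fetd(\cal C)$ to be the LP relaxation of $\etd$. For every string $y\in\zone^N$ (not necessarily in $\cal C$) there is a weighting $w^y:[N]\to\mathbb R_{\ge 0}$ of total weight at most $\fetd(\cal C)$ such that
\[
\sum_{i:c_i\neq y_i} w^y_i\ge 1 \qquad\text{for all } c\in\cal C \text{ except at most one}.
\]
Its LP dual gives the following. For the worst string $y$ and every candidate exceptional concept $c_0$, there is a probability distribution $\mu$ on $\cal C\setminus\{c_0\}$ such that
\[
\Pr_{c\sim\mu}[c_i\neq y_i]\le \frac{1}{\fetd(\cal C)} \qquad \text{for every coordinate } i.
\]
I would first write out this duality explicitly. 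The min over the exceptional concept sits outside the LP, so one has to dualize for each fixed $c_0$.

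\textbf{Quantum lower bound.} I would follow the hybrid (BBBV / Servedio--Gortler) argument. Run a $T$-query learner on the oracle $y$; this is legitimate even if $y\notin\cal C$. Let $m_{t,i}$ be the query magnitude on coordinate $i$ at step $t$, so $\sum_i m_{t,i}=1$. For any $c$, the standard hybrid bound gives
\[
\|\psi_y-\psi_c\|\le 2\sum_t\sqrt{\textstyle\sum_{i:c_i\neq y_i} m_{t,i}}.
\]
Take the expectation over $c\sim\mu$ and apply Jensen together with the dual constraint. This gives
\[
\E_{c\sim\mu}\|\psi_y-\psi_c\|\le \frac{2T}{\sqrt{\fetd(\cal C)}}.
\]
On the other hand, correctness gives a lower bound on the same quantity. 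If two distinct concepts $c,c'$ both had final states within a small constant of $\psi_y$, then the output distribution on $y$ would put probability more than $1/2$ on both $c$ and $c'$, which is impossible. So at most one concept, call it $c_0$, is close to $\psi_y$. Choosing $\mu$ to be the dual distribution that excludes this $c_0$, every $c$ in its support satisfies $\|\psi_y-\psi_c\|=\Omega(1)$. Hence $T=\Omega(\sqrt{\fetd(\cal C)})$. The trivial case $\fetd=O(1)$ is covered by $\Q\ge1$.

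\textbf{Randomized upper bound.} I would randomize Hegedűs's algorithm, in the spirit of the~\cite{arunachalam2021two} simulation. Maintain the version space $V\subseteq\cal C$, starting with $V=\cal C$. In each round, let $y$ be the coordinatewise majority of $V$, and let $w$ be an optimal fractional specifying set for $y$ with respect to $V$, normalized into a distribution. Note that $\fetd(V)\le\fetd(\cal C)$. Sample coordinates from this distribution and query them one at a time. There are two outcomes.
\begin{itemize}[leftmargin=*]
\item \emph{Some answer disagrees with $y$.} Since $y$ is the majority, this removes at least half of $V$.
\item \emph{The first $k$ answers all agree with $y$.} Each non-exceptional $c$ survives a single sample with probability at most $1-1/\fetd$. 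So the expected surviving fraction is at most $e^{-k/\fetd}$, and by Markov's inequality $|V|$ shrinks by a factor $e^{-\Omega(k/\fetd)}$ with constant probability.
\end{itemize}

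To obtain the $\log\fetd$ saving, I would set $y$ with a biased threshold $\alpha\approx 1/\fetd$ instead of $1/2$. With this choice, a disagreement eliminates a $(1-\alpha)$ fraction, that is, roughly $\log\fetd$ bits, at cost $O(\fetd)$. Alternatively, the all-agree case can be run with $k=\Theta(\fetd)$ and the two cases balanced via a potential $\log|V|$ that drops by $\Omega(\log\fetd)$ per $O(\fetd)$ queries in expectation. A martingale/Chernoff bound then converts this expected progress into $O\!\left(\frac{\fetd}{\log(\fetd+1)}\log|\cal C|\right)$ queries with probability $2/3$.

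\textbf{Main obstacle.} I expect the main obstacle to be exactly this last step. With a biased threshold, the fractional specifying set has to be taken for a string $y$ that is no longer the majority, and one must check that the exceptional concept and the coordinates where $y$ is the minority do not break the per-round progress. Here I would mirror the $\gamma$-based potential argument of~\cite{arunachalam2021two}, replacing $\gamma$ by $\fgamma$ through Lemma~\ref{lemma:fgamma-fetd}, since their analysis only uses the fractional averaging property. Once both bounds hold in the $\fetd$ form, the $\fgamma$ form follows immediately from Lemma~\ref{lemma:fgamma-fetd}.
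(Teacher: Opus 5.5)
\textbf{Quantum half: correct, by a different route.} Your quantum lower bound is sound. The paper takes a distribution $\mu$ attaining $\fgamma(\Cc)$, plugs $\Gamma_{c,c'}=\sqrt{\mu(c)\mu(c')}$ into the positive-weight adversary bound, and converts to $\fetd$ via Lemma~\ref{lemma:fgamma-fetd}. You instead dualize the $\fetd(\Cc,y)$ LP separately for each fixed exceptional $c_0$, and run a hybrid argument at the oracle $y$, which need not lie in $\Cc$. That lets you work with $\fetd$ directly, without the adversary bound or the lemma; LP duality plays the role of the lemma's minimax step.

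\textbf{Randomized half: genuine gap.} The gap is exactly at the step you flag. As described, your rounds give at best $O(\fetd(\Cc)\log|\Cc|)$. A disagreement with the majority string gains one bit but can cost $k=\Theta(\fetd)$ queries. The biased-threshold variant leaves two things unhandled: coordinates where neither bit is rare, and the fact that the specifying set is then taken for a non-majority string. More fundamentally, tracking $\log|V|$ against a fixed target does not see where the saving comes from. A target that keeps landing on the majority side of coordinates with minority fraction $q=\Theta(1/\fetd)$ lowers $\log|V|$ by only $\log\frac{1}{1-q}=\Theta(1/\fetd)$ per query. The $\log(\fetd+1)$ factor comes from averaging the gain over the answer, $H_2(q)\approx q\log(1/q)$, and that average is only available once the target is random.

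\textbf{The missing idea.} Apply Yao's principle first and measure progress by posterior entropy, as the paper does.
\begin{enumerate}
\item Fix an input distribution $\mu$. Build the deterministic tree that stops once some posterior atom reaches $5/6$.
\item At any other node, with posterior $\nu$, let $h$ be the coordinatewise posterior majority and take $(c^*,w)$ optimal for $\fetd(\Cc,h)$.
\item Exchanging sums gives $\sum_i w_i\min\{\nu(\Cc_{i,0}),\nu(\Cc_{i,1})\}\ge 1-\nu(c^*)>1/6$. Since $\sum_i w_i\le\fetd(\Cc)$, some coordinate has minority mass exceeding $1/(6\fetd(\Cc))$.
\item Querying the coordinate of largest minority mass lowers the expected entropy by at least $H_2(1/(6\fetd(\Cc)))=\Omega(\log(\fetd+1)/\fetd)$.
\item Summing these decrements against $H(C)\le\log|\Cc|$ bounds the expected depth by $O(\fetd\log|\Cc|/\log(\fetd+1))$. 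Markov truncation then finishes the proof.
\end{enumerate}

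Your fallback of mirroring Arunachalam et al.\ with $\fgamma$ points in this direction, but it has to be carried out in this distributional setting. Also, $\fgamma$ only controls distributions whose atoms are all at most $1/2$. Posteriors with a heavy atom in $(1/2,5/6)$ therefore need either the $1-\nu(c^*)$ term of the $\fetd$ LP, as above, or an explicit reweighting. None of this is in the proposal.
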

The possibility that \(\ETD\) might also yield quantum query lower bounds was raised by Arunachalam and de Wolf~\cite{arunachalam2017guest}. Theorem~\ref{thm:qr-fetd} confirms this intuition after fractionalization: the fractional extended teaching dimension captures, up to constant factors, the same combinatorial quantity as the splitting parameter underlying the Servedio-Gortler quantum lower bound. The theorem can also be viewed as simultaneously refining two classical bounds. On the lower-bound side, it strengthens Eq.~\eqref{eq:1/sqrt gamma lower bound} by replacing \(\gamma\) with the potentially smaller fractional parameter \(\fgamma\). On the upper-bound side, our argument refines the randomized simulation of Arunachalam et al.~\cite[Theorem~10]{arunachalam2021two},
\[
    \R(\cal C)
    =
    O\!\left(
        \frac{\Q(\cal C)^2}{\log \Q(\cal C)}
        \log|\cal C|
    \right),
\]
and may be viewed as a fractional analogue of the classical bounds of~\cite{hegedHus1995generalized,moshkov1982conditional} who showed
\[
    \D(\cal C)
    =
    O\!\left(
        \frac{\etd(\cal C)}
             {\log(\etd(\cal C)+1)}
        \log|\cal C|
    \right).
\]
There is one important distinction: the Heged\H{u}s bound is deterministic, whereas ours is randomized. In exchange, our bound replaces \(\etd(\cal C)\) by the potentially smaller and more refined quantity \(\fetd(\cal C)\). Our randomized upper bound is inspired by the framework of~\cite[Theorem~10]{arunachalam2021two}.

\paragraph{AI statement.} We used GPT-5.6 in the process of discovering most of our results, but notably the construction underlying our separation in \Cref{thm:rqseparation}. Our use of the model was iterative with a long thread of discussions. The model first provided us with a $\R = \Omega(\Q(\cal C)^2\log N)$ separation, which itself emerged only after substantial back-and-forth, with several intermediate candidates and repeated simplifications. We present the final simplified version as a warm-up separation in \Cref{sec:warmup}.  The final construction there is considerably simpler than the earlier versions considered during these discussions. 

We then provided several possible ingredients and directions, including standard quantum query separations, the idea of totalizing a partial construction using a cheat-sheet gadget and how to use that to construct a \emph{learning separation}. We then asked  it to search for candidate exact-learning classes inspired by these. GPT-5.6 produced an initial ``superquadratic" separation that ignored several nuances of exact learning and was considerably more complicated than necessary. After isolating the core mechanism, we suggested replacing the central gadget (i.e., Forrelation, which was its initial inspiration) by a hidden-shift problem (since it was using the Forrelation candidate as a $1$ vs $\Omega(n)$ quantum-classical query separation). GPT-5.6 then helped simplify the construction around this idea while preserving the same separation. The eventual construction was still fairly complicated, and the current version is after a substantial simplification. The resulting arguments were subsequently verified, refined, simplified and rewritten in final form by the authors and we take full responsibility for any mistakes.

\section{Optimal quantum-randomized separation}\label{sec:rqseparation}

\subsection{Warm up}
\label{sec:warmup}
Before proving our optimal quantum-randomized separation, we first give a simpler construction that already refutes the At{\i}ci-Servedio conjecture~\cite{atici2005improved} (\Cref{conj:aticiservedio}).
\[
    \R(\Cc)=O\!\left(\Q(\Cc)^2+\Q(\Cc)\log N\right).
\]
We first give the concept class witnessing this separation, together with a proof sketch of the quantum and classical bounds. For integers $\ell,m\ge 1$, consider an unknown matrix $A\in\F_2^{m\times \ell}$ and define the concept
\[
    c_A:\F_2^\ell\times\F_2^m\to\{0,1\},
    \qquad
    c_A(z,y)=\mathbf 1[Az=y].
\]
Let
\[
    \Cc_{\ell,m}:=\{c_A:A\in\F_2^{m\times\ell}\}.
\]
Thus, the learner is given membership-query access to an unknown matrix
$A\in\F_2^{m\times\ell}$ through queries of the form ``is $Az=y$?''

\paragraph{Quantum upper bound.}
A pseudocode for our algorithm witnessing the upper bound is in \Cref{alg:rq quantum upper bound}.
\begin{algorithm}[!ht]
\caption{Quantum exact learning algorithm for $\mathcal{C}_{\ell,m}$ from \Cref{sec:warmup}}
\label{alg:rq quantum upper bound}
\begin{algorithmic}[1]
\Require Query access to the oracle $O_A:(z,y)\mapsto\mathbb{I}[Az=y]$, where $A\in\mathbb{F}_2^{m\times \ell}$ is unknown.
\State For $i\in[m]$, let $r_i\in\mathbb{F}_2^\ell$ denote the $i$th row of $A$.
\For{$i=1,2,\ldots,m$}
    \State Define a phase oracle $O_i$ acting on $\ket{z}$ as follows:
    \State \hspace{1em} Compute the known prefix $(\langle r_1,z\rangle,\ldots,\langle r_{i-1},z\rangle)$. \label{line:rq known prefix}
    \State \hspace{1em} Grover-search to find the unique $y$ satisfying $O_A(z,y)=1$. \label{line:rq grover}
    \State \hspace{1em} Apply the phase $(-1)^{y_i}$ and uncompute. \label{line:rq phase}
    \State $r_i\gets\textnormal{Bernstein-Vazirani}(O_i)$. \label{line:rq bv}
\EndFor
\State \Return $A$.
\end{algorithmic}
\end{algorithm}
The algorithm is described below:
Write $r_1,\ldots,r_m\in\F_2^\ell$ for the rows of $A$. We learn them sequentially. To learn $r_i$, we use Grover search to coherently recover the $i$th bit of $Az$, and then Bernstein-Vazirani to recover the linear form $z\mapsto r_i\cdot z$. How we do this is described below.
Suppose $r_1,\ldots,r_{i-1}$ are already known. For any fixed $z\in\F_2^\ell$, the first $i-1$ coordinates of $Az$ are known, so the set of possible values of $Az$, namely $\{y\in\F_2^m:y_j=r_j\cdot z\text{ for every }j<i\}$, has size $2^{m-i+1}$.  The membership oracle itself serves as the Grover marking oracle, since $c_A(z,y)=1 \Longleftrightarrow y=Az$.
The reflection about the uniform superposition over the candidate values of $y$ is query-free: this superposition is prepared by a known unitary depending only on the previously learned rows $r_1,\ldots,r_{i-1}$. Among these $2^{m-i+1}$ candidates there is a unique correct value $y=Az$. Hence, using the membership oracle to test whether a candidate $y$ satisfies $c_A(z,y)=1$, exact Grover search (zero error) finds $Az$ using $O\!\left(2^{(m-i+1)/2}\right)$ membership queries.

Moreover, this search can be implemented coherently with $z$ in superposition. We can therefore compute the $i$th bit $(Az)_i=r_i\cdot z$ into an ancilla while leaving $z$ unchanged. Applying a phase conditioned on this ancilla and then uncomputing the search workspace implements the phase oracle
\[
    |z\rangle \longmapsto (-1)^{r_i\cdot z}|z\rangle .
\]
A Bernstein-Vazirani query to this phase oracle recovers the entire row $r_i$.
For the query upper bound, summing the costs over all $i$ gives
\begin{equation}\label{eq:warmup-ub}    
    \Q(\Cc_{\ell,m}) = O\!\left(\sum_{i=1}^m2^{(m-i+1)/2}\right)=    O(2^{m/2}).
\end{equation}
The matching lower bound follows by restricting to matrices of the form $A=(a,0,\ldots,0)$, which reduces learning to unstructured search over $a\in\F_2^m$. Hence $Q(\Cc_{\ell,m})=\Theta(2^{m/2}).$

\paragraph{Randomized lower bound.}
We apply Yao's minimax principle to the uniform distribution over $A\in\F_2^{m\times\ell}$.  Fix a deterministic decision tree making at most $T$ queries, after removing queries whose answers are already forced by the preceding transcript. At a node $v$, let $\mathcal A_v$ denote the set of matrices consistent
with the transcript so far, and define
\[
    D_v := \left\{z\in\F_2^\ell : Az=A'z \text{ for all }A,A'\in\mathcal A_v\right\}.
\]
Thus $D_v$ is the subspace of inputs on which the value of the unknown linear map has already been determined. Suppose a nonredundant query $(z,y)$ receives answer $1$.  Then every matrix surviving this answer satisfies $Az=y$, so $z$ belongs to the new determined subspace.  On the other hand, $z$ did not belong to $D_v$ before the query: otherwise $Az$ was already fixed by the transcript, and the answer to $(z,y)$ would have been forced.  Since $D_v$ is a subspace, every $1$-answer therefore increases $\dim D_v$ by at least one. As $D_v\subseteq\F_2^\ell$, every root-to-leaf transcript contains at most $\ell$ answers equal to $1$. 
Consequently, a depth-$T$ decision tree has at most
\begin{equation}\label{eq:leaves-ub}    
    \sum_{j=0}^{\ell}\binom{T}{j} \le (eT/\ell)^\ell
\end{equation}
reachable leaves. Each leaf outputs a single matrix and hence can be correct on at most one of the $2^{m\ell}$ possible targets. Therefore, success probability at least $2/3$ under the uniform distribution requires
\begin{equation}\label{eq:leaves-lb}    
    \sum_{j=0}^{\ell}\binom{T}{j}\ge \frac23\,2^{m\ell}.
\end{equation}
\Cref{eq:leaves-ub} and \Cref{eq:leaves-lb} imply $T=\Omega(2^m\ell)$. Yao's minimax principle therefore gives
\begin{equation}\label{eq:warmup-lb}
    \R(\Cc_{\ell,m})=\Omega(2^m\ell).
\end{equation}
 Finally, take $\ell=m$. Since the domain of $\Cc_{m,m}$ has size $N=2^{2m},$ we have $\log N=2m$. \Cref{eq:warmup-ub} and \Cref{eq:warmup-lb} thus give the separation of
\[
    \R(\Cc)=\Omega\!\left(\Q(\Cc)^2\log N\right).
\]

The lower bound presented above is closely related to the search-by-hyperplane-queries lower bound of Yun~\cite{yun2015generic}. Yun's result is stated over prime fields and for unrestricted affine-hyperplane queries, whereas our setting involves the restricted queries arising from the concept class above. We therefore included the short proof above for completeness.

\subsection{The concept class}
We now formally define the concept class that proves Theorem~\ref{thm:rqseparation}. An accompanying figure that describes concepts from this class is \Cref{fig:concept-class}.
\begin{defi}\label{def:class}
Let $t\ge4$ be a power of two. 
Fix a binary basis $e_1,\ldots,e_{6\log t}$ of $\mathbb F_{t^6}$. All inner products use coordinates in this basis and are taken over $\F_2$. Define truncation to the first $2\log t$ coordinates by the map
\[
\mathsf{trunc}:\mathbb F_{t^6}\longrightarrow \{0,1\}^{2\log t},\qquad
\mathsf{trunc}(w_1,\ldots,w_{6\log t})=(w_1,\ldots,w_{2\log t}).
\]
For an element $s\in \mathbb F_{t^6}$ and a monic polynomial
\[
    P(X)=X^t+\sum_{j=0}^{t-1}a_jX^j\in\mathbb F_{t^6}[X]
\]
of degree $t$, write $P_{\mathsf{trunc}}:=\mathsf{trunc}\circ P:\mathbb F_{t^6}\to\{0,1\}^{2\log t}$. For each pair $(P,s)$ as above, we define a concept $c_{P,s}$ with a \emph{block part} and a \emph{cheat sheet part}. We use a first coordinate $\mathsf{blk}$ or $\mathsf{sheet}$ to indicate which part is being queried, with respective domains
\[
\mathcal X_t^{\mathrm{blk}}=\mathbb F_{t^6}^{\,2}\times\{0,1\}^{2\log t},\qquad
\mathcal X_t^{\mathrm{sheet}}=\mathbb F_{t^6}\times\{0,\ldots,t-1\}\times\mathbb F_{t^6}.
\]

\noindent \textbf{Block part.} For $x,c\in\mathbb F_{t^6}$ and $z\in\{0,1\}^{2\log t}$, define
\[
c_{P,s}(\mathsf{blk},x,c,z)=\one[z=P_{\mathsf{trunc}}(c+xs)].
\]
For each fixed $(x,c)\in\mathbb F_{t^6}^{\,2}$, the corresponding block has $t^2=2^{2\log t}$ locations indexed by $z\in\{0,1\}^{2\log t}$, with a unique $1$ at $z=P_{\mathsf{trunc}}(c+xs)$.

\noindent \textbf{Cheat sheet part.} For $u,b\in\mathbb F_{t^6}$ and $j\in\{0,\ldots,t-1\}$, define
\[
c_{P,s}(\mathsf{sheet},u,j,b)=\one[u=s]\,\langle b,a_j\rangle.
\]
For each fixed $u\in\mathbb F_{t^6}$, the corresponding cheat sheet is the function $(j,b)\mapsto c_{P,s}(\mathsf{sheet},u,j,b)$. It is identically zero for $u\ne s$, while at $u=s$ its $j$th row is the linear function $b\mapsto\langle b,a_j\rangle$.

\noindent \textbf{Concept definition.} Let
\[
\mathcal X_t:=\{\mathsf{blk}\}\times\mathcal X_t^{\mathrm{blk}}\;\sqcup\;\{\mathsf{sheet}\}\times\mathcal X_t^{\mathrm{sheet}}.
\]
Thus an input to $c_{P,s}$ is either a block query $(\mathsf{blk},x,c,z)$, on which
\[
c_{P,s}(\mathsf{blk},x,c,z)=\one[z=P_{\mathsf{trunc}}(c+xs)],
\]
or a cheat sheet query $(\mathsf{sheet},u,j,b)$, on which
\[
c_{P,s}(\mathsf{sheet},u,j,b)=\one[u=s]\,\langle b,a_j\rangle.
\]
We suppress the tags $\mathsf{blk}$ and $\mathsf{sheet}$ when the intended part is clear from context. Hence
\[
\Cc_t=\left\{c_{P,s}: s\in\mathbb F_{t^6},\quad P(X)=X^t+\sum_{j=0}^{t-1}a_jX^j,\quad a_0,\ldots,a_{t-1}\in\mathbb F_{t^6}\right\}.
\]
The domain size is
\[
N_t:=|\mathcal X_t|=t^{14}+t^{13}=\Theta(t^{14}).
\]
The learner may make membership queries to either part in any adaptive order, and its goal is to identify the unknown concept, or equivalently, to recover both $P$ and $s$.
\end{defi}

\begin{figure}[H]
\centering
\resizebox{\linewidth}{!}{%
\begin{tikzpicture}[
  x=1cm,y=1cm,
  font=\normalsize,
  every node/.style={inner sep=2pt}
]
\definecolor{conceptblue}{RGB}{35,95,145}

\draw[rounded corners=5pt,black!25]
  (0,-.1) rectangle (7.8,6);
\draw[rounded corners=5pt,black!25]
  (8.3,-.1) rectangle (19.8,6);

\node[font=\bfseries] at (3.9,5.55) {Block part};
\node[font=\bfseries] at (14.05,5.55) {Cheat sheet part};


\node at (3.7,4.6) {$z\in \{0,1\}^{2\log t}$};
\fill[conceptblue!5] (1.3,1.45) rectangle (6.1,4.15);

\foreach \r/\k in {0/1,1/6,2/3,3/5,4/0,5/4} {
  \fill[conceptblue]
    ({1.3+.6*\k},{1.45+.45*\r})
    rectangle ({1.9+.6*\k},{1.9+.45*\r});
  \node[text=white]
    at ({1.6+.6*\k},{1.675+.45*\r}) {$1$};
}

\foreach \k in {0,...,8}
  \draw[black!30]
    ({1.3+.6*\k},1.45) -- ({1.3+.6*\k},4.15);

\foreach \r in {0,...,6}
  \draw[black!30]
    (1.3,{1.45+.45*\r}) -- (6.1,{1.45+.45*\r});

\draw[black!60] (1.3,1.45) rectangle (6.1,4.15);

\draw[decorate,decoration={brace,amplitude=4pt}]
  (1.05,1.45) -- (1.05,4.15);
\node at (.55,2.8) {$t^{12}$};

\draw[decorate,decoration={brace,mirror,amplitude=4pt}]
  (1.3,1.2) -- (6.1,1.2);
\node at (3.7,.85) {${t^2}$};

\draw[black!50] (6.1,3.025) -- (6.4,3.025);
\node[anchor=west] at (6.5,3.025) {$(x,c)$};

\node[anchor=east] at (3.95,.3) {$z=\mathsf{trunc}(P(c+x$};
\node[text=conceptblue] (blockslope) at (4,.29) {$s$};
\node[anchor=west] at (4,.29) {$))$};


\foreach \x in {9.0,18.3} {
  \draw[black!40,fill=black!3]
    (\x,1.45) rectangle ({\x+.95},4.15);
  \node at ({\x+.475},2.8) {\Large $0$};
  \node at ({\x+.475},.9) {$u\ne s$};
}

\node at (11.1,2.8) {$\cdots$};
\node at (17.65,2.8) {$\cdots$};

\node at (14.45,4.75) {$b\in\mathbb F_{t^6}$};

\fill[conceptblue!8] (12.5,1.45) rectangle (16.4,4.15);
\draw[conceptblue,thick]
  (12.5,1.45) rectangle (16.4,4.15);

\foreach \y in {2.125,2.8,3.475}
  \draw[conceptblue!40] (12.5,\y) -- (16.4,\y);

\node at (14.45,3.8125) {$\langle b,a_0\rangle$};
\node at (14.45,3.1375) {$\langle b,a_1\rangle$};
\node at (14.45,2.4625) {$\vdots$};
\node at (14.45,1.7875) {$\langle b,a_{t-1}\rangle$};

\draw[decorate,decoration={brace,mirror,amplitude=4pt}]
  (16.65,1.45) -- (16.65,4.15);
\node at (17.05,2.8) {$t$};

\draw[decorate,decoration={brace,mirror,amplitude=4pt}]
  (12.5,1.2) -- (16.4,1.2);
\node at (14.45,.85) {$t^6$};

\node[text=conceptblue] (cheatslope) at (14.45,.55) {$u=s$};

\draw[->,semithick,conceptblue]
  (blockslope.east)
  to[out=0,in=180]
  (cheatslope.west);

\end{tikzpicture}%
}
\caption{The two parts of $c_{P,s}$. Each block (represented by rows in the block part) has a unique marked address; blank entries are zero. The cheat sheet part consists of $t^6$ sheets indexed by $u\in \mathbb F_{t^6}$: sheet $s$ encodes the coefficients of $P$, and every other sheet is zero. The arrow shows that the $s$ part that is uncovered in the block part serves as the pointer for the row in the cheat sheet part.}
\label{fig:concept-class}
\end{figure}

\subsubsection{Useful facts}
In this section we list useful facts about our concept class from \Cref{def:class} which will be useful in the analysis of the quantum upper bound and classical lower bound subsequently.
\begin{fact}\label{fact:truncation}
Every label in $\{0,1\}^{2\log t}$ has exactly $t^4$ preimages under $\mathsf{trunc}$.
Consequently, $\mathsf{trunc}$ sends a uniform element of $\mathbb F_{t^6}$ to a uniform
label in $\{0,1\}^{2\log t}$.
\end{fact}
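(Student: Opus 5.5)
The first claim is a counting statement about the coordinate projection $\mathsf{trunc}$, and I would prove it directly from the definition of $\mathsf{trunc}$ in Definition~\ref{def:class}. Fix the binary basis $e_1,\ldots,e_{6\log t}$ of $\mathbb F_{t^6}$. Then the coordinate map $w\mapsto(w_1,\ldots,w_{6\log t})$ is an $\F_2$-linear bijection $\mathbb F_{t^6}\to\{0,1\}^{6\log t}$, because $|\mathbb F_{t^6}|=t^6=2^{6\log t}$ and $t$ is a power of two. Under this identification, $\mathsf{trunc}$ is the projection onto the first $2\log t$ coordinates.

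Fix a label $z\in\{0,1\}^{2\log t}$. Its preimages are exactly the vectors $(z_1,\ldots,z_{2\log t},w_{2\log t+1},\ldots,w_{6\log t})$, where the last $4\log t$ coordinates are arbitrary. There are therefore exactly $2^{4\log t}=t^4$ preimages, and this count does not depend on $z$.

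For the consequence, let $W$ be uniform on $\mathbb F_{t^6}$. For each $z$,
\[
\Pr[\mathsf{trunc}(W)=z]=\frac{|\mathsf{trunc}^{-1}(z)|}{t^6}=\frac{t^4}{t^6}=\frac{1}{t^2}=2^{-2\log t}.
\]
This is the uniform distribution on $\{0,1\}^{2\log t}$.

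None of these steps is a real obstacle. The only point needing care is that $6\log t$ and $2\log t$ are integers, which holds since $t$ is a power of two. This is what makes the identification of $\mathbb F_{t^6}$ with $\{0,1\}^{6\log t}$, and the $t^4$ count, exact.
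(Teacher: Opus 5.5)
Your proposal is correct and follows essentially the same argument as the paper: identify $\mathbb F_{t^6}$ with $\F_2^{6\log t}$ via the fixed basis, so that $\mathsf{trunc}$ is a coordinate projection whose fibers have size $2^{4\log t}=t^4$, which makes the pushforward of the uniform distribution uniform. Your explicit computation $\Pr[\mathsf{trunc}(W)=z]=t^4/t^6$ spells out the final step that the paper states in words.
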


\begin{proof}
Using the fixed binary basis of $\mathbb F_{t^6}$, we identify
 $  \mathbb F_{t^6} \cong \F_2^{6\log t}$, $    \{0,1\}^{2\log t} \cong \F_2^{2\log t}$ 
and $\mathsf{trunc}$ simply discards the last $4\log t$ coordinates.  Thus, for any fixed $z\in \{0,1\}^{2\log t}$, the first $2\log t$ coordinates are prescribed while the
remaining $4\log t$ coordinates may be chosen arbitrarily.  Hence
\[
    |\mathsf{trunc}^{-1}(z)|
    =
    2^{4\log t}=
   t^4.
\]
Since every $z\in \{0,1\}^{2\log t}$ has the same number of preimages, the image under
$\mathsf{trunc}$ of a uniformly random element of $\mathbb F_{t^6}$ is uniform on $\{0,1\}^{2\log t}$.
\end{proof}

\begin{fact}
\label{fact:frequency}
Let $P(X)=X^t+\sum_{j=0}^{t-1}a_jX^j\in\mathbb F_{t^6}[X]$ be monic of degree $t$. Then for every $z\in\{0,1\}^{2\log t}$, $|P_{\mathsf{trunc}}^{-1}(z)|\le t^5$, and $P_{\mathsf{trunc}}$ is nonconstant.
\end{fact}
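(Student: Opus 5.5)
The first claim is a fiber count. We decompose the fiber of $P_{\mathsf{trunc}}$ over each label, and deduce nonconstancy by pigeonhole.

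Fix $z\in\{0,1\}^{2\log t}$. Since $P_{\mathsf{trunc}}=\mathsf{trunc}\circ P$, we can write
\[
P_{\mathsf{trunc}}^{-1}(z)=\bigcup_{w\in\mathsf{trunc}^{-1}(z)} P^{-1}(w).
\]
By Fact~\ref{fact:truncation}, the index set has exactly $t^4$ elements. For each $w$, the polynomial $P(X)-w$ is monic of degree $t$, so it is nonzero. A nonzero polynomial of degree $t$ over the field $\mathbb F_{t^6}$ has at most $t$ roots, hence $|P^{-1}(w)|\le t$. A union bound over the $t^4$ values of $w$ then gives $|P_{\mathsf{trunc}}^{-1}(z)|\le t\cdot t^4=t^5$.

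For nonconstancy, suppose $P_{\mathsf{trunc}}$ were constant, with value $z_0$. Then the single fiber $P_{\mathsf{trunc}}^{-1}(z_0)$ would be all of $\mathbb F_{t^6}$, which has $t^6$ elements. But the bound just proved gives $|P_{\mathsf{trunc}}^{-1}(z_0)|\le t^5$, and $t^5<t^6$ because $t\ge 4$. This contradiction shows $P_{\mathsf{trunc}}$ is nonconstant.

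No step is a genuine obstacle. The only point needing care is that $P-w$ is nonzero for every $w$, which is why monicity is used: it prevents $P$ from being a constant polynomial.
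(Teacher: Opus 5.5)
Your proposal is correct and follows essentially the same route as the paper's proof. Both write the fiber as the union of $P^{-1}(w)$ over the $t^4$ labels $w\in\mathsf{trunc}^{-1}(z)$, use that each monic $P(X)-w$ has at most $t$ roots, and deduce nonconstancy from $t^5<t^6$.
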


\begin{proof}
Fix $z\in\{0,1\}^{2\log t}$. By Fact~\ref{fact:truncation}, there are exactly $t^4$ elements $w\in\mathbb F_{t^6}$ satisfying $\mathsf{trunc}(w)=z$. Therefore
\[
P_{\mathsf{trunc}}^{-1}(z)=\bigcup_{\mathsf{trunc}(w)=z}\{y\in\mathbb F_{t^6}:P(y)=w\}.
\]
For each such $w$, the polynomial $P(X)-w$ is monic (and hence a nonzero polynomial) of degree $t$, and hence has at most $t$ roots. Consequently,
\[
|P_{\mathsf{trunc}}^{-1}(z)|\le \sum_{w:\mathsf{trunc}(w)=z}|\{y\in\mathbb F_{t^6}:P(y)=w\}|\le t^4\cdot t=t^5.
\]
Since $t^5<t^6=|\mathbb F_{t^6}|$, no fiber of $P_{\mathsf{trunc}}$ is the entire domain, so $P_{\mathsf{trunc}}$ is nonconstant.
\end{proof}

\begin{fact}\label{fact:parameters}
The map $(P,s)\longmapsto c_{P,s}$ is injective. In particular, $|\mathcal C_t|=t^{6(t+1)}$.
\end{fact}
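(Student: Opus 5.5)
To prove Fact~\ref{fact:parameters}, I will show that two distinct parameter pairs $(P,s)\neq(P',s')$ give concepts that differ on at least one input. The count then follows directly. The index set consists of $t^6$ choices of $s$ and $(t^6)^t$ choices of $(a_0,\ldots,a_{t-1})$. A monic degree-$t$ polynomial is determined by its coefficient vector, so distinct coefficient vectors give distinct pairs. Injectivity therefore gives $|\mathcal C_t|=t^6\cdot t^{6t}=t^{6(t+1)}$.

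The plan is to read $s$ off the block part and the coefficients off the cheat-sheet part, handling two cases.

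\emph{Case $s\neq s'$.} I will find a block query on which $c_{P,s}$ and $c_{P',s'}$ disagree. By Fact~\ref{fact:frequency}, $P_{\mathsf{trunc}}$ is nonconstant, so pick $y_0,y_1$ with $P_{\mathsf{trunc}}(y_0)\neq P_{\mathsf{trunc}}(y_1)$. I want a block $(x,c)$ with $c+xs=y_1$ and $c+xs'=y_0'$, where $y_0'$ is a point satisfying $P'_{\mathsf{trunc}}(y_0')\neq P_{\mathsf{trunc}}(y_1)$. Such a $y_0'$ exists because $P'_{\mathsf{trunc}}$ takes at least two values. Given $y_1$ and $y_0'$, I solve $x(s-s')=y_1-y_0'$, which is possible since $s\neq s'$, and then set $c=y_1-xs$. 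Now take $z=P_{\mathsf{trunc}}(y_1)$. Then $c_{P,s}(\mathsf{blk},x,c,z)=1$, while $c_{P',s'}(\mathsf{blk},x,c,z)=\one[z=P'_{\mathsf{trunc}}(y_0')]=0$, so the concepts differ.

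\emph{Case $s=s'$, $P\neq P'$.} Some coefficient differs, say $a_j\neq a'_j$. Choose $b$ with $\langle b,a_j\rangle\neq\langle b,a'_j\rangle$, for instance a basis vector $e_i$ at a coordinate where $a_j$ and $a'_j$ differ. On the active sheet $u=s$ the query $(\mathsf{sheet},s,j,b)$ then separates the two concepts.

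The only mildly delicate step is the first case, where I need both that $P_{\mathsf{trunc}}$ and $P'_{\mathsf{trunc}}$ are nonconstant and that $(x,c)$ can be chosen to realize any prescribed pair of hidden inputs. Fact~\ref{fact:frequency} supplies the first, and solvability of the linear system supplies the second. I expect no other obstacle.
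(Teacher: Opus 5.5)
Your proof is correct and essentially the paper's argument. Both use the nonconstancy from \Cref{fact:frequency} on the block part to force $s=s'$, and then read the coefficients off the active cheat sheet; the paper argues by contradiction (setting $x=0$ to get $P_{\mathsf{trunc}}=P'_{\mathsf{trunc}}=:f$, then $f(y)=f(y+x(s'-s))$ for all $x,y$ makes $f$ constant), whereas you exhibit an explicit distinguishing block query, and your Case~1 needs only that $P'_{\mathsf{trunc}}$ is nonconstant (the choice of $y_0$ plays no role).
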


\begin{proof}
Suppose $c_{P,s}=c_{P',s'}$. Taking $x=0$ in the block part gives $P_{\mathsf{trunc}}=P'_{\mathsf{trunc}}$; write $f$ for this common function. Equality of the block parts then gives
\[
f(y)=f\bigl(y+x(s'-s)\bigr)\qquad\text{for every }y,x\in\mathbb F_{t^6}.
\]
If $s\ne s'$, then multiplication by $s'-s$ is a bijection of $\mathbb F_{t^6}$, so as $x$ ranges over $\mathbb F_{t^6}$, the quantity $x(s'-s)$ also ranges over all of $\mathbb F_{t^6}$. Hence $f$ is constant, contradicting \Cref{fact:frequency}. Therefore $s=s'$. At the cheat sheet address $u=s$, equality of the concepts gives $\langle b,a_j\rangle=\langle b,a'_j\rangle$ for every $b\in\mathbb F_{t^6}$ and every $j\in\{0,\ldots,t-1\}$, and hence $a_j=a'_j$ for every $j$. Thus $P=P'$.

Finally, since $P(X)=X^t+\sum_{j=0}^{t-1}a_jX^j$ is monic, each of its $t$ coefficients $a_0,\ldots,a_{t-1}$ can be chosen independently from $\mathbb F_{t^6}$, giving $(t^6)^t=t^{6t}$ choices for $P$. There are also $t^6$ choices for $s \in \mathbb{F}_{t^6}$, and therefore $|\mathcal C_t|=t^{6t}\cdot t^6=t^{6(t+1)}$.
\end{proof}

\begin{fact}
\label{fact:independence}
Suppose $a_0,\ldots,a_{t-1}$ are independent and uniformly chosen from $\mathbb F_{t^6}$. Then for any distinct $y_1,\ldots,y_r\in\mathbb F_{t^6}$ with $r\le t$, the random variables $P_{\mathsf{trunc}}(y_1),\ldots,P_{\mathsf{trunc}}(y_r)$ are independent and uniform in $\{0,1\}^{2\log t}$.
\end{fact}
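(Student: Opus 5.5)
The plan has two steps. First, show that the full-field values $P(y_1),\ldots,P(y_r)$ are independent and uniform on $\mathbb F_{t^6}$. Then push this through truncation using Fact~\ref{fact:truncation}.

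For the first step, write $P(y_i)=y_i^t+\sum_{j=0}^{t-1}a_jy_i^j$. The map
\[
(a_0,\ldots,a_{t-1})\longmapsto \bigl(P(y_1),\ldots,P(y_r)\bigr)
\]
is an affine map $\mathbb F_{t^6}^{\,t}\to\mathbb F_{t^6}^{\,r}$. Its translation part is the fixed vector $(y_1^t,\ldots,y_r^t)$. Its linear part is the $r\times t$ Vandermonde-type matrix $V$ with entries $V_{ij}=y_i^{\,j}$ for $j=0,\ldots,t-1$.

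Since the $y_i$ are distinct and $r\le t$, the first $r$ columns of $V$ form a square Vandermonde matrix with nonzero determinant $\prod_{i<k}(y_k-y_i)$. Hence $V$ has rank $r$ and is surjective onto $\mathbb F_{t^6}^{\,r}$.

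A surjective linear map from a uniform vector in $\mathbb F_{t^6}^{\,t}$ gives a uniform vector in $\mathbb F_{t^6}^{\,r}$, because every fiber is a coset of the kernel and all fibers have size $t^{6(t-r)}$. Adding the fixed translation preserves uniformity. So $(P(y_1),\ldots,P(y_r))$ is uniform on $\mathbb F_{t^6}^{\,r}$. Equivalently, the coordinates are independent and each is uniform on $\mathbb F_{t^6}$.

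For the second step, apply $\mathsf{trunc}$ coordinatewise. A function applied separately to independent random variables yields independent random variables. By Fact~\ref{fact:truncation}, each $\mathsf{trunc}(P(y_i))$ is uniform on $\{0,1\}^{2\log t}$. This gives the claim.

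The only point needing care is that $r\le t$ rather than $r\le t+1$. This is exactly what the argument requires: the monic leading term $X^t$ is fixed rather than random, so only the $t$ free coefficients $a_0,\ldots,a_{t-1}$ supply randomness. The $t$ Vandermonde columns $1,X,\ldots,X^{t-1}$ therefore give full rank $r$ precisely when $r\le t$. I expect no real obstacle beyond verifying this rank count.
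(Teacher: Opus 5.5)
Your proposal is correct and follows essentially the same route as the paper: show the evaluation map $(a_0,\ldots,a_{t-1})\mapsto(R(y_1),\ldots,R(y_r))$ is a surjective linear map, so its output is uniform, then add the fixed vector $(y_1^t,\ldots,y_r^t)$ and truncate coordinatewise using Fact~\ref{fact:truncation}. The one difference is how surjectivity is justified: the paper uses interpolation by a polynomial of degree at most $r-1<t$, and you use the nonzero $r\times r$ Vandermonde minor, which is the same fact.
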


\begin{proof}
Fix distinct $y_1,\ldots,y_r\in\mathbb F_{t^6}$ with $r\le t$, and write
\[
R(X):=\sum_{j=0}^{t-1}a_jX^j,
\qquad
P(X)=X^t+R(X).
\]
Consider the evaluation map
\[
\mathbb F_{t^6}^{\,t}\to\mathbb F_{t^6}^{\,r},
\qquad
(a_0,\ldots,a_{t-1})\mapsto\bigl(R(y_1),\ldots,R(y_r)\bigr).
\]
This map is surjective: for any prescribed values $v_1,\ldots,v_r\in\mathbb F_{t^6}$, polynomial interpolation gives a polynomial of degree at most $r-1<t$ taking value $v_i$ at $y_i$ for every $i$. Since the evaluation map is linear and surjective, all of its fibers have the same size. Hence, because $(a_0,\ldots,a_{t-1})$ is uniform in $\mathbb F_{t^6}^{\,t}$, the vector $(R(y_1),\ldots,R(y_r))$ is uniform in $\mathbb F_{t^6}^{\,r}$. Adding the fixed vector $(y_1^t,\ldots,y_r^t)$ shows that $(P(y_1),\ldots,P(y_r))$ is also uniform in $\mathbb F_{t^6}^{\,r}$. Finally, applying $\mathsf{trunc}$ coordinatewise and using \Cref{fact:truncation} shows that $\bigl(P_{\mathsf{trunc}}(y_1),\ldots,P_{\mathsf{trunc}}(y_r)\bigr)$ is uniform in $\bigl(\{0,1\}^{2\log t}\bigr)^r$, equivalently the random variables $P_{\mathsf{trunc}}(y_1),\ldots,P_{\mathsf{trunc}}(y_r)$ are independent and uniform in $\{0,1\}^{2\log t}$.
\end{proof}

\subsection{Quantum upper bound}
We now show that the concept class $\mathcal C_t$ can be learned with $O(t)$ quantum membership queries.
\begin{theorem}
\label{thm:rq-quantum-upper-bound}
Let $t\ge 4$ be a power of two, and let $\mathcal C_t$ be the concept class from \Cref{def:class}. Then
\[
\Q(\mathcal C_t)=O(t).
\]
\end{theorem}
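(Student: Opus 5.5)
The quantum learner works in two phases: it first recovers the hidden slope $s$ from the block part, and then reads off $P$ from the cheat sheet indexed by $s$.

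\textbf{Phase 1: a coherent search unitary.} For a fixed block $(x,c)$, the function $z\mapsto c_{P,s}(\mathsf{blk},x,c,z)$ has exactly one marked element among $t^2=2^{2\log t}$ candidates. Since the number of marked items is known to be exactly one, exact amplitude amplification (Grover with a final adjusted rotation) maps $\ket{0}$ to $\ket{P_{\mathsf{trunc}}(c+xs)}$ with certainty, using $O(t)$ queries. The iteration count and rotation angles do not depend on $(x,c)$. Hence the same procedure, controlled on the $(x,c)$ registers, gives a unitary
\[
U:\ket{x,c}\ket{0}\mapsto\ket{x,c}\ket{P_{\mathsf{trunc}}(c+xs)}
\]
costing $O(t)$ queries. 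Here zero error matters, because it means no garbage is left entangled with $(x,c)$.

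\textbf{Phase 1: Fourier sampling.} Apply $U$ to the uniform superposition over $(x,c)\in\mathbb F_{t^6}^2$. Then apply $H^{\otimes 6\log t}$ to both the $x$ and $c$ registers and measure, obtaining $(\alpha,\beta)$. For a fixed label $z$, write $S_z=P_{\mathsf{trunc}}^{-1}(z)$. The branch with label $z$ is the indicator of the union of affine lines $\{c+xs\in S_z\}$, so its Fourier coefficients are supported on pairs $(\alpha,\beta)$ with $\alpha_i=\langle\beta,e_is\rangle$ for all $i$; this is the orthogonality computation from the overview. When $\beta\neq0$, this $\F_2$-linear system in $s$ has a unique solution, by the bijectivity argument given there, so the learner solves for $s$ by Gaussian elimination with no further queries. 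To bound the failure event $\beta=0$, I will compute $\Pr[\beta=0]=t^{-12}\sum_z|S_z|^2$. By \Cref{fact:frequency}, $|S_z|\le t^5$, and since the $S_z$ partition $\mathbb F_{t^6}$ we have $\sum_z|S_z|=t^6$. Together these give $\Pr[\beta=0]\le 1/t\le 1/4$.

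\textbf{Phase 2: recovering $P$.} With $s$ in hand, for each $j\in\{0,\ldots,t-1\}$ the function $b\mapsto c_{P,s}(\mathsf{sheet},s,j,b)=\langle b,a_j\rangle$ is a Bernstein--Vazirani instance. One phase query recovers $a_j$ exactly, so $t$ queries recover all of $P$. The learner outputs $c_{P,s}$, which it identifies correctly with probability at least $1-1/t\ge 2/3$ using $O(t)+t=O(t)$ queries. If needed, it can confirm $s$ by checking that the recovered cheat sheet is consistent and rerun otherwise, but the single-shot bound already suffices.

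\textbf{Main obstacle.} The delicate step is making the Grover subroutine act coherently and exactly across the superposition over $(x,c)$, so that the Fourier sampling analysis applies to the ideal state. I will invoke exact amplitude amplification with a known success amplitude of $1/t$, which is uniform over all blocks. If exactness were unavailable, I would instead bound the error of approximate search by $O(1/t)$ per branch and absorb its effect on the final measurement distribution through a trace-distance argument.
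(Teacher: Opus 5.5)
Your proposal is correct and follows essentially the same route as the paper: exact amplitude amplification over the $t^2$ addresses, run coherently across all blocks, prepares $\frac{1}{t^6}\sum_{x,c}\ket{x,c,P_{\mathsf{trunc}}(c+xs)}$; Fourier sampling then yields $\alpha=M_s^\top\beta$, and your subgroup-invariance phrasing is the same character-sum computation. The remaining steps also match the paper: the bound $\Pr[\beta=0]=t^{-12}\sum_z|S_z|^2\le 1/t$ via Fact~\ref{fact:frequency}, followed by $t$ Bernstein--Vazirani queries on sheet $s$. The only cosmetic difference is that you run exact Grover per block, controlled on $(x,c)$, whereas the paper runs one amplitude amplification over the uniform superposition on all $(x,c,z)$. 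These coincide here because every block has exactly one marked address, so the good fraction is exactly $1/t^2$ in both views.
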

\begin{proof}
Fix an unknown concept $c_{P,s}\in\mathcal C_t$. We first recover the hidden slope $s$ from the block part, and then learn $P$ from the cheat sheet part.

\begin{itemize}
\item Prepare the uniform superposition over all block coordinates,
\[
\frac{1}{t^7}\sum_{x,c\in\mathbb F_{t^6}}\sum_{z\in\{0,1\}^{2\log t}}\ket{x,c,z}.
\]
For every pair $(x,c)\in\mathbb F_{t^6}^2$, there is exactly one marked value of $z$, namely $z=P_{\mathsf{trunc}}(c+xs)$. Since there are $t^{12}$ choices of $(x,c)$ and $t^2$ choices of $z$, exactly a $1/t^2$ fraction of the basis states are marked. We may therefore use exact amplitude amplification to prepare
\begin{equation}
\ket{\psi_{P,s}}:=\frac{1}{t^6}\sum_{x,c\in\mathbb F_{t^6}}\ket{x,c,P_{\mathsf{trunc}}(c+xs)}
\label{eq:hidden-line-state}
\end{equation}
using $O(t)$ membership queries.

Apply the Hadamard transform $H^{\otimes 6\log t}$ to each of the first two registers of \Cref{eq:hidden-line-state}. This gives
\[
\frac{1}{t^{12}}\sum_{\alpha,\beta,x,c\in\mathbb F_{t^6}}(-1)^{\langle\alpha,x\rangle+\langle\beta,c\rangle}\ket{\alpha,\beta,P_{\mathsf{trunc}}(c+xs)}.
\]
Writing $y=c+xs$, this state becomes
\begin{equation}
\frac{1}{t^{12}}\sum_{\alpha,\beta,y\in\mathbb F_{t^6}}(-1)^{\langle\beta,y\rangle}\left(\sum_{x\in\mathbb F_{t^6}}(-1)^{\langle\alpha,x\rangle+\langle\beta,xs\rangle}\right)\ket{\alpha,\beta,P_{\mathsf{trunc}}(y)}.
\label{eq:fourier-state}
\end{equation}
For fixed $s\in\mathbb F_{t^6}$, multiplication by $s$ is an $\mathbb F_2$-linear map on $\mathbb F_{t^6}$. Let $M_s\in\mathbb F_2^{(6\log t)\times(6\log t)}$ denote its matrix with respect to our fixed binary basis, so that $xs=M_sx$. Hence $\langle\beta,xs\rangle=\langle\beta,M_sx\rangle=\langle M_s^\top\beta,x\rangle$.
The inner sum in \Cref{eq:fourier-state} is therefore equal to
\[
\sum_{x\in\mathbb F_{t^6}}(-1)^{\langle\alpha+M_s^\top\beta,x\rangle}.
\]
By orthogonality of characters, this sum equals $t^6$ when $\alpha=M_s^\top\beta$ and is zero otherwise. Hence the state in \Cref{eq:fourier-state} simplifies to
\[
\frac{1}{t^6}\sum_{\beta,y\in\mathbb F_{t^6}}(-1)^{\langle\beta,y\rangle}\ket{M_s^\top\beta,\beta,P_{\mathsf{trunc}}(y)}.
\]
Measuring the first two registers therefore returns a pair $(\alpha,\beta)$ satisfying $\alpha=M_s^\top\beta$.
If $\beta\ne 0$, then the relation $\alpha=M_s^\top\beta$ uniquely determines $s$. Indeed, the $i$th coordinate of $\alpha$ is
\[
\alpha_i=\langle\beta,e_is\rangle,\qquad i=1,\ldots,6\log t,
\]
since the $i$th column of $M_s$ is the coordinate vector of $e_is$. These are $6\log t$ linear equations over $\mathbb F_2$ in the $6\log t$ coordinate bits of $s$.

To see uniqueness, suppose $s'$ also satisfies $\alpha=M_{s'}^\top\beta$. Then $M_{s-s'}^\top\beta=0$, and hence $\langle\beta,x(s-s')\rangle=0$ for every $x\in\mathbb F_{t^6}$. If $s\ne s'$, multiplication by $s-s'$ is a bijection of $\mathbb F_{t^6}$, so this implies $\langle\beta,w\rangle=0$ for every $w\in\mathbb F_{t^6}$, and therefore $\beta=0$, a contradiction. Thus $s$ is uniquely determined whenever $\beta\ne 0$.

It remains to bound the probability that $\beta=0$. Since $\alpha=M_s^\top\beta$, this also forces $\alpha=0$. For each $z\in\{0,1\}^{2\log t}$, the amplitude of $\ket{0,0,z}$ is
\[
\frac{1}{t^6}\left|\{y\in\mathbb F_{t^6}:P_{\mathsf{trunc}}(y)=z\}\right|=\frac{|P_{\mathsf{trunc}}^{-1}(z)|}{t^6}.
\]
Therefore
\begin{equation}\label{eq:prbeta}
    \Pr[\beta=0]=\frac{1}{t^{12}}\sum_{z\in\{0,1\}^{2\log t}}|P_{\mathsf{trunc}}^{-1}(z)|^2.
\end{equation}

By \Cref{fact:frequency}, $\max_z |P_{\mathsf{trunc}}^{-1}(z)|\le t^5$. Moreover, $\sum_z |P_{\mathsf{trunc}}^{-1}(z)|=|\mathbb F_{t^6}|=t^6$ since the fibers of $P_{\mathsf{trunc}}$ partition $\mathbb F_{t^6}$.
Hence \Cref{eq:prbeta} gives
\[
\Pr[\beta=0]\le \frac{1}{t^{12}}\left(\max_z |P_{\mathsf{trunc}}^{-1}(z)|\right)\sum_z |P_{\mathsf{trunc}}^{-1}(z)|\le \frac{t^{11}}{t^{12}} = \frac1t.
\]
Thus the hidden slope $s$ is recovered with probability at least $1-1/t$.

\item Finally, once $s$ is known, the learner knows the unique active cheat sheet (recall \Cref{def:class}). Also recall that for each $j\in\{0,\ldots,t-1\}$, the $j$th row of the active sheet $u=s$ is the Boolean function
\[
b\longmapsto c_{P,s}(\mathsf{sheet},s,j,b)=\langle b,a_j\rangle,
\]
where $a_j\in\mathbb F_{t^6}\cong\F_2^{6\log t}$ is the $j$th coefficient of $P$. Thus each row is a Bernstein--Vazirani instance with hidden string $a_j$. One quantum membership query therefore recovers $a_j$ exactly, and repeating this for $j=0,\ldots,t-1$ recovers all coefficients of $P$ using $t$ additional queries.
\end{itemize}

Combining the query cost of $t$ from the last bullet to learn $P$ with the $O(t)$ queries used in the previous bullets to recover $s$, the total query complexity is $O(t)+t=O(t)$. The only possible failure occurs when the Fourier sample has $\beta=0$, which happens with probability at most $1/t$. Hence the learner succeeds with bounded error, proving $\Q(\mathcal C_t)=O(t)$.
\end{proof}

\subsection{Randomized lower bound}

We now prove our randomized lower bound for $\mathcal C_t$.

\begin{theorem}
\label{thm:rq-randomized-lower-bound}
Let $t\ge 4$ be a power of two, and let $\mathcal C_t$ be the concept class from \Cref{def:class}.~Then
\[
\R(\mathcal C_t)=\Omega(t^3).
\]
\end{theorem}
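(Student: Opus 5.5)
We prove the bound via Yao's minimax principle. Put the product distribution on concepts: $s,a_0,\ldots,a_{t-1}$ independent and uniform in $\mathbb F_{t^6}$. Fix a deterministic learner making at most $D=\lfloor t^3/10\rfloor$ queries. We show its success probability is below $2/3$ for large $t$. The argument is a chain of three hybrids, each removing one channel through which the transcript can depend on $s$.

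\textbf{Hybrid 1 (cheat sheet zeroed).} Run the learner with every sheet query answered $0$ and block queries answered truthfully. Let $A$ be the event that in this modified run the learner either queries sheet $u=s$ or outputs a concept with slope $s$. The real and modified runs coincide until sheet $s$ is first queried. Success in the real run therefore forces $A$, since an output with the wrong slope is wrong. Hence $\Pr[\mathrm{success}]\le\Pr[A]$.

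\textbf{Hybrid 2 (random function).} Replace $P_{\mathsf{trunc}}$ by a uniformly random $g:\mathbb F_{t^6}\to\{0,1\}^{2\log t}$. Condition on $s$. Every block answer is then an equality test $\one[g(y)=z]$ with $y=c+xs$. Call a query a \emph{hit} if it answers $1$ at a $y$ not previously revealed. The transcript is determined by the learner, $s$, and the sequence of hits. Before $t$ distinct $y$'s are revealed, \Cref{fact:independence} guarantees that the joint law of $P_{\mathsf{trunc}}$ on the at most $t$ relevant points equals that of $g$. So the two experiments can be coupled to agree unless at least $t$ hits occur. Since at most $t-1$ values are known before each hit, each unrevealed $y$ has a uniform label. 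Each query at an unrevealed $y$ is therefore a hit with probability at most about $1/(t^2-\text{(number of excluded labels)})\le 2/t^2$; excluded labels come from earlier $0$-answers in that block, and we assume the learner queries at most $t^2/2$ addresses per $y$, otherwise we pay differently. A union bound over which $t$ of the $D$ queries are hits gives distance at most $\binom Dt(2/t^2)^t\le(eD\cdot 2/t^3)^t=(e/5)^t$. I expect this to be the main obstacle. It needs a careful inductive coupling and has to handle blocks searched exhaustively. An alternative is to count transcripts: a transcript with fewer than $t$ ones is determined by the positions of its ones.

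\textbf{Hybrid 3 (independent blocks).} Replace $g(c+xs)$ by independent labels $U_{x,c}$, one per block. Couple the two so they agree unless two distinct queried blocks collide, i.e.\ $c+xs=c'+x's$. For a fixed pair of blocks this happens for at most one value of $s$: when $x=x'$ it is impossible, and otherwise it forces $s=(c'-c)/(x-x')$. In the independent experiment the queried blocks are independent of $s$. A union bound therefore bounds the collision probability by $\binom D2/t^6$.

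\textbf{Final step.} In the last experiment the whole view is independent of $s$. The at most $D$ sheet addresses queried and the output slope are then at most $D+1$ guesses, so $\Pr[A]\le (D+1)/t^6$. Summing the errors,
\[
\Pr[\mathrm{success}]\le \frac{D+1}{t^6}+\Bigl(\frac e5\Bigr)^t+\frac{\binom D2}{t^6}<\frac23 .
\]
Yao's principle then gives $\R(\mathcal C_t)=\Omega(t^3)$.
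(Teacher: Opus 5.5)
Your architecture is the same as the paper's: Yao with the product distribution, zeroing the cheat sheet, replacing $\Ptrunc$ by a random $g$, then by independent block labels, and finally the $(D+1)/t^6$ guessing bound. Hybrids 1 and 3 and the final count are fine.

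The gap is Hybrid 2, and it is more than a matter of care: the coupling ``the two experiments agree unless at least $t$ hits occur'' does not exist. A zero answer $\one[\Ptrunc(y)=z]=0$ is also information about $\Ptrunc(y)$. So every queried hidden input is relevant, not just the revealed ones. Once zero answers have been received at more than $t$ distinct inputs, $t$-wise independence says nothing about their joint law, and the conditional law of the label at a fresh $y$ need not be uniform. Consequently, transcripts with fewer than $t$ ones can have different probabilities in the two worlds. A toy example with $k=2$: take $\ell=2$, $F(x_1),F(x_2)$ independent uniform bits and $F(x_3)=F(x_1)\oplus F(x_2)$, which is pairwise independent. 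For the queries $(x_i,z_i)$, $i=1,2,3$, the all-zero transcript (no hits) has probability $0$ or $1/4$ under $F$, but $1/8$ under a uniform $G$.

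Your transcript-counting remark does give $\Pr[\geq t\text{ hits}]\le\binom Dt t^{-2t}$ in both worlds. Fixing the positions of the first $t$ hits fixes the transcript prefix and forces $t$ equalities at $t$ distinct points. This would even remove your ``at most $t^2/2$ addresses per $y$'' caveat. But it does not show that the remaining transcripts are equally likely, which is what your coupling needs.

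The missing idea is Lemma~\ref{lem:equality}. Let $(x_1,z_1),\ldots,(x_d,z_d)$ be the all-zero path of the decision tree, and let $h_j^b$ be the subtree at the $b$-child of the $j$th node. Then the indicator of any transcript event can be written as
$h(f)=h_0+\sum_{j}\one[f(x_j)=z_j]\bigl(h_j^1(f)-h_j^0(f)\bigr)$.
Taking expectations, the $j$th term becomes $\frac1\ell\,\mathbb E[h_j^1-h_j^0\mid f(x_j)=z_j]$. The key point is that you condition only on the single equality $f(x_j)=z_j$, never on the preceding zero answers. This leaves a $(k-1)$-wise independent function on the other points, and later queries to $x_j$ become forced.

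Induction on $k$ then gives $\Delta_k(D)\le\frac2\ell\sum_{j=1}^{D}\Delta_{k-1}(D-j)\le\binom Dk(2/\ell)^k$. The factor $2$ comes from the two children $b\in\{0,1\}$, not from excluded labels. With $k=t$ and $\ell=t^2$ this is exactly the $\varepsilon_D$ you wanted, and the rest of your argument then goes through.
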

We prove the lower bound using Yao's minimax principle~\cite{yao1977probabilistic}. Choose $s,a_0,\ldots,a_{t-1}$ independently and uniformly from $\mathbb F_{t^6}$, where $s$ is the hidden slope and $a_0,\ldots,a_{t-1}$ are the coefficients of the unknown polynomial $P$, and let $\mu$ be the resulting distribution on $\mathcal C_t$. We will show that every deterministic learner making $D\ll t^3$ queries has success probability less than $2/3$ under the input distribution $\mu$.

We first describe the proof strategy. The idea is to compare the learner's real execution under $\mu$ with a sequence of simpler fictitious executions. At each step, we modify how some queries are answered, and show that with high probability the learner cannot distinguish the modified execution from the previous one. After three such modifications, we arrive at an execution whose transcript is independent of the hidden slope $s$, and is therefore easy to analyze. Since the real execution is close to this final fictitious one, any learner that succeeds with high probability in the real experiment would also have to succeed in the fictitious experiment with noticeable probability, which we will show is impossible with $D\ll t^3$ queries.

\begin{enumerate}
    \item First, we modify the execution by answering every cheat-sheet query with $0$, while leaving all block answers unchanged. The real and modified executions agree until the learner first queries the active sheet $u=s$. Thus, if the learner succeeds in the real execution, then either it queries the active sheet in both executions, or the two executions agree throughout and its final output has the correct slope $s$. It therefore suffices to bound the probability that, in the modified execution, the learner either queries the active sheet or outputs a concept with slope $s$.
    \item Second, we replace $P_{\mathrm{trunc}}$ by a uniformly random function $g:\mathbb F_{t^6}\to\{0,1\}^{2\log t}$. This makes the labels at distinct hidden inputs independent, removing the dependencies imposed by the polynomial. By \Cref{fact:independence}, $P_{\mathrm{trunc}}$ is $t$-wise independent, but the learner may query more than $t$ distinct hidden inputs, so this alone does not justify the replacement. We also use the fact that each block query only tests whether the label equals one specified address among $t^2$ possibilities. We prove that, when $D\ll t^3$, this replacement changes the probability of the event from the previous bullet, namely that the learner queries the active sheet or outputs the correct slope, by only a small amount. Since the previous bullet bounds the learner's success probability by the probability of this event, it suffices to bound its probability in the new experiment, allowing for this small error.
    \item Finally, we assign each block $(x,c)$ an independent uniformly random marked address from $\{0,1\}^{2\log t}$, which remains fixed throughout the execution. To compare this with the previous experiment, observe that a random function $g$ already assigns independent uniformly random marked addresses to blocks with distinct hidden inputs $c+xs$. We can therefore arrange for the two experiments to give exactly the same answers unless two distinct queried blocks have the same hidden input. In the new experiment, the learner's queries are independent of $s$, and each pair of distinct blocks has the same hidden input for at most one slope. A union bound thus bounds the probability of such a collision by $\binom{D}{2}/t^6$, which also bounds the change in the probability of the event tracked in the previous bullets.

    In this final experiment, the learner's entire transcript is independent of $s$. Its cheat-sheet queries and final output therefore give at most $D+1$ guesses for a uniformly random slope in $\mathbb F_{t^6}$. Hence the probability that it queries the active sheet or outputs the correct slope is at most $(D+1)/t^6$.
\end{enumerate} 
Combining the three steps, the learner's success probability in the real execution is at most $$(D+1)/t^6+\binom{D}{2}/t^6+\binom{D}{t}(2/t^2)^t.
$$ Taking $D=\lfloor t^3/10\rfloor$, the first term is $O(t^{-3})$, the second is at most $1/200$, and the third is at most $(2eD/t^3)^t\leq(e/5)^t$. Thus the success probability is less than $2/3$ for all sufficiently large $t$. By Yao's minimax principle, this proves $R(\mathcal C_t)=\Omega(t^3)$.

We now formalize the hybrid argument. Fix a deterministic learner making at most $D$ queries. We run this same learner in the real experiment and in each of the hybrids defined below. In each experiment, $T$ denotes the transcript produced in that experiment, including the queries, answers, and final output; its distribution therefore depends on the experiment. We may assume that the learner always outputs a concept in $\mathcal C_t$, since replacing an invalid output by an arbitrary fixed concept in $\mathcal C_t$ cannot decrease its success probability. By \Cref{fact:parameters}, each concept has a unique parameter pair $(P,s)$. We can therefore read a slope guess from the learner's output, which we denote by $\widehat s(T)$. In the real experiment, successful identification of the target concept $c_{P,s}$ necessarily implies $\widehat s(T)=s$. In the hybrids, we will track whether this slope guess is correct or the learner queries the sheet indexed by $s$.

\subsubsection{Hybrid 1}
Sample $(P,s)$ from $\mu$ exactly as in the original learning problem. Answer each block query $(x,c,z)$ by $\one[P_{\mathsf{trunc}}(c+xs)=z]$, but answer every cheat-sheet query by $0$. Write $\Pr_{\mathrm{H1}}$ for probability in this experiment, and let $T$ denote its transcript. Define
\begin{equation}\label{eq:a}
    A:=\{\text{the learner queries a cheat sheet with address }u=s\}\cup\{\widehat s(T)=s\}.
\end{equation}

Thus $A$ occurs if the learner queries the active sheet or its final output has the correct slope. Querying the active sheet does not require the learner to recognize that it is active. Let $\Pr_\mu[\mathrm{success}]$ denote the probability that the fixed deterministic learner outputs the target concept $c_{P,s}$ in the real experiment, where $(P,s)\sim\mu$.

\begin{lemma}\label{lem:zero-sheet}
Let $A$ be as in \Cref{eq:a}. For every deterministic learner, $\Pr_\mu[\mathrm{success}]\leq\Pr_{\mathrm{H1}}[A]$.
\end{lemma}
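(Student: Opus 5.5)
The plan is a coupling argument: run the real experiment and Hybrid~1 on the same sample $(P,s)\sim\mu$. The fixed learner is deterministic, so each query is a function of the previous answers. Both experiments therefore produce identical query sequences up to the first query whose answer differs.

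First, I will identify exactly where the answers can differ. Block queries are answered identically in both experiments, namely by $\one[P_{\mathsf{trunc}}(c+xs)=z]$. A cheat-sheet query $(\mathsf{sheet},u,j,b)$ receives $\one[u=s]\langle b,a_j\rangle$ in the real experiment and $0$ in Hybrid~1. When $u\neq s$ both answers are $0$. Hence the two answer sequences can first disagree only at a query to a sheet with address $u=s$.

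Next, I will fix an outcome $(P,s)$ and split into two cases.

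\emph{Case 1: in the real execution the learner queries some sheet with address $s$.} Let $k$ be the first such query. All earlier answers coincide, so the Hybrid~1 execution also makes query $k$, which is a sheet query with address $s$. Hence $A$ occurs in Hybrid~1.

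\emph{Case 2: the real execution never queries a sheet with address $s$.} Then every answer agrees, by induction on the query index. The two transcripts, including the final outputs, are therefore identical. If the real execution succeeds, it outputs $c_{P,s}$. By \Cref{fact:parameters} its slope guess is $\widehat s(T)=s$, and since the Hybrid~1 output is the same, $A$ occurs there too.

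Together the two cases give the pointwise inclusion $\{\text{real success}\}\subseteq\{A\text{ in H1}\}$ for each $(P,s)$, taken over the common sample. Taking probabilities over $\mu$ yields $\Pr_\mu[\mathrm{success}]\le\Pr_{\mathrm{H1}}[A]$.

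I do not expect a real obstacle here. The only care needed is in the induction showing that the transcripts stay equal until the first query to the active sheet. That step relies on the learner being deterministic and on the assumption that its output is always a concept in $\mathcal C_t$, so that $\widehat s(T)$ is well defined.
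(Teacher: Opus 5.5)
Your proposal is correct and follows the paper's own proof: you couple the two experiments on the same $(P,s)$, note that the deterministic learner's executions agree until the first query to the sheet with address $s$, and split into the same two cases (that sheet is queried, or the transcripts coincide and success forces $\widehat s(T)=s$). Your remarks on determinism and on the output lying in $\mathcal C_t$ are the same points the paper relies on to make $\widehat s(T)$ well defined.
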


\begin{proof}
Fix a pair $(P,s)$ and run the deterministic learner in both the real experiment and Hybrid~1 with this same pair. Block queries receive the same answers in both experiments, as do queries to inactive cheat sheets. Thus, until the learner first queries the active sheet, it receives the same answers and makes the same subsequent queries in both executions.

If the learner queries the active sheet in the real experiment, it makes the same query in Hybrid~1, since the two executions agree up to that query. Hence the first event in the definition of $A$ from \Cref{eq:a} occurs in Hybrid~1. Otherwise, the learner never queries the active sheet in the real experiment, so the two executions agree throughout and have the same final output. If the learner succeeds in the real experiment, this output has slope $s$. Hence $\widehat s(T)=s$ in Hybrid~1, so the second event in the definition of $A$ occurs.
 In either case, whenever the learner succeeds in the real experiment, the event $A$ occurs in Hybrid~1. Taking probabilities over $(P,s) \sim \mu$,
\[
    \Pr_\mu[\mathrm{success}] \le \Pr_{\mathrm{H1}}[A],
\]
proving the lemma.
\end{proof}

\subsubsection{Hybrid 2}
Choose $s\in\mathbb F_{t^6}$ uniformly at random and, independently, choose a uniformly random function $g:\mathbb F_{t^6}\to\{0,1\}^{2\log t}$. Thus the values $g(y)$ at distinct inputs are independent and uniform in $\{0,1\}^{2\log t}$. Answer each block query $(x,c,z)$ by $\one[g(c+xs)=z]$, and continue to answer every cheat-sheet query by $0$. Write $\Pr_{\mathrm{H2}}$ for probability in this experiment.

The following lemma will allow us to compare Hybrid~1, where the marked address in block $(x,c)$ is $P_{\mathsf{trunc}}(c+xs)$, with Hybrid~2, where it is $g(c+xs)$. By \Cref{fact:independence}, the values of $\Ptrunc$ on at most $t$ distinct inputs are independent and uniform in $\{0,1\}^{2\log t}$. The learner may of course query more than $t$ distinct inputs, so this fact alone does not justify replacing $\Ptrunc$ by a fully random function. The following lemma shows that such a replacement nevertheless changes the transcript distribution only slightly, because each query only tests whether a function value equals one specified address.

\begin{lemma}\label{lem:equality}
Let $X,Z$ be finite sets with $|Z|=\ell$, and let $k\geq1$. Let $F:X\to Z$ be a random $k$-wise independent function, and let $G:X\to Z$ be a uniformly random function. Then any deterministic adaptive algorithm making at most $D$ equality queries of the form $\mathbf 1[F(x)=z]$ in the first experiment and $\mathbf 1[G(x)=z]$ in the second induces transcript distributions whose total variation distance is at most
\[
\binom{D}{k}\left(\frac{2}{\ell}\right)^k.
\]
\end{lemma}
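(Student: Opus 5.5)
The plan is to view the deterministic adaptive algorithm as a decision tree. Each transcript is determined by its answer string in $\{0,1\}^{\le D}$, and the TV distance equals $\tfrac12\sum_\tau |p_F(\tau)-p_G(\tau)|$. First I would normalize the algorithm. We may assume it never repeats a point $x$ after receiving a $1$ there, and never asks $\mathbf 1[F(x)=z]$ once that answer is forced. Such queries can be simulated without querying, so removing them does not change the transcript distributions. After this normalization, the positions answered $1$ along any path query pairwise distinct points $x$.

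I would split transcripts according to the number $|S(\tau)|$ of $1$-answers, where $S(\tau)$ denotes the set of positions answered $1$.

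\emph{Heavy transcripts} are those with $|S(\tau)|\ge k$. For these I only need to bound the total mass in each experiment by $\binom Dk\ell^{-k}$. Fix a set $I$ of $k$ positions. The event that every position in $I$ is answered $1$ forces, along the path, $k$ equalities $F(x_i)=z_i$ at distinct points. The path up to those positions is determined by the answers, and only answers in $I$ are $1$ among $I$. Hence, summing over the at most $\ell^{0}$-branching choices determined by the $0$-answers off $I$, and using $k$-wise independence, each such event has probability at most $\ell^{-k}$. The same bound holds for the fully random $G$. A union bound over $I$ then bounds the heavy mass by $\binom Dk\ell^{-k}$ in each experiment.

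\emph{Light transcripts} are those with $|S|<k$. Along the path, let $Q$ be the set of queried pairs and $E_R$ the event that all equalities indexed by $R\subseteq Q$ hold. Then
\[
p(\tau)=\Pr\Bigl[E_S\wedge\bigwedge_{j\in Q\setminus S}\neg E_{\{j\}}\Bigr]=\sum_{R\subseteq Q\setminus S}(-1)^{|R|}\Pr[E_{S\cup R}],
\]
and for $|S\cup R|\le k$ these terms coincide for $F$ and $G$ by $k$-wise independence. Truncating the alternating sum at $|R|=k-|S|$ and applying Bonferroni's inequalities, each of $p_F(\tau),p_G(\tau)$ lies within $\sum_{|R|=k-|S|}\Pr[E_{S\cup R}]$ of the common truncated sum. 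This quantity involves only $k$-sets, so it is the same under $F$ and $G$. Therefore $|p_F(\tau)-p_G(\tau)|\le 2\sum_{|R|=k-|S|}\Pr_G[E_{S\cup R}]$.

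The step I expect to be the main obstacle is the accounting that sums this error over all light $\tau$ and recovers $\binom Dk(2/\ell)^k$ rather than something much larger. Adaptivity means $Q$ depends on $\tau$, and the different $\tau$ overlap. My first attempt would reindex by a $k$-set $J$ of query positions. I would charge each pair $(\tau,R)$ to $J=S\cup R$, together with the $0$-answers of $\tau$ at positions before $\max J$ outside $J$. I would then observe that for each fixed $J$, the charged events are disjoint refinements of the single event that the path obtained by answering $1$ exactly on $J$ has all equalities at $J$ true. That event has probability at most $\ell^{-k}$. The $2^{k}$ ways of splitting $J$ into $S$ and $R$ supply the factor $2^k$. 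Adding the heavy contribution, I expect the sum to be bounded by $\binom Dk(2/\ell)^k$, with constants absorbed by the factor $\tfrac12$ in TV. If this charging fails to be disjoint, the fallback is a hybrid over positions. That alternative would reveal answers one by one and couple $F$ and $G$ until the $k$-th $1$-answer, stopping there and bounding the stopping probability by the heavy-mass estimate above.
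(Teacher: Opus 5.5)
Your route differs from the paper's, and after two repairs it proves exactly the stated bound.

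\textbf{How the routes compare.} The paper inducts on $k$. It expands the leaf indicator along the all-zero path as $h(f)=h_0+\sum_j\mathbf 1[f(x_j)=z_j]\bigl(h_j^1(f)-h_j^0(f)\bigr)$. Each term therefore conditions on a single equality, never on the preceding $0$-answers, and this leaves $F$ still $(k-1)$-wise independent on the other points. That gives $\Delta_k(D)\le\frac2\ell\sum_j\Delta_{k-1}(D-j)$ with no counting at all.

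You instead compare leaf probabilities directly. Inclusion--exclusion terms of order at most $k$ coincide for $F$ and $G$, Bonferroni bounds the remainder, and leaves with at least $k$ one-answers are handled by a tail bound. This makes the factor $\binom Dk 2^k$ transparent: choose $k$ positions, and each is either a $1$- or a $0$-answer. Moreover, $p_F(\tau)$ and $p_G(\tau)$ lie between the \emph{same} two consecutive Bonferroni partial sums. So you actually get $|p_F(\tau)-p_G(\tau)|\le\sum_{|R|=k-|S|}\Pr[E_{S\cup R}]$ without your factor $2$, hence a slightly better constant. The paper's induction is shorter and avoids leaf bookkeeping.

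\textbf{Repair to the heavy bound.} $I$ must be the set of positions of the \emph{first} $k$ one-answers. Then every other answer before $\max I$ is $0$, and the path through $\max I$ is unique. The event ``every position in $I$ is answered $1$'', with the other positions unconstrained, can exceed $\ell^{-k}$. For example, take $k=1$ and $\ell\ge3$: ask $(x,z_1)$, then ask $(x,z_2)$ after a $0$ or a fresh point after a $1$. Under $G$, position $2$ is answered $1$ with probability $1/\ell+1/\ell^2$.

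\textbf{Repair to the charging step.} The justification you give for the step you flagged is wrong. Take a $k$-set $J$ and a split $J=S\sqcup R$ with $R\neq\emptyset$. The path of $\tau$ answers $0$ at $\min R$, so it diverges there from the path answering $1$ on all of $J$. Thus $E_{S\cup R}$ concerns different query pairs, and these events are not refinements of a single event. The correct reason is injectivity. Leaves are prefix-free, so two distinct leaves first differ at a position answered $1$ by one and $0$ by the other. Hence $\tau\mapsto S(\tau)$ is injective, and each $(J,S)$ with $S\subsetneq J$ determines at most one pair $(\tau,R)$. Its term is the probability of $k$ equalities along $\tau$'s own path, which is at most $\ell^{-k}$ after your normalization.

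So the light sum is at most $(2^k-1)\binom Dk\ell^{-k}$. The heavy contribution is $\tfrac12\bigl(\Pr_F[\text{heavy}]+\Pr_G[\text{heavy}]\bigr)\le\binom Dk\ell^{-k}$. With your factor $2$, the total is exactly $\binom Dk(2/\ell)^k$.

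\textbf{Drop the fallback.} $F$ and $G$ cannot be coupled exactly up to the $k$-th one-answer, because $k$-wise independence says nothing about conditioning on many $0$-answers. For $k=1$, let $F$ be constant with a uniform value: one $0$-answer already changes the distribution at every other point. That is precisely the difficulty the Bonferroni step handles.
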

\begin{proof}
Let $\Delta_k(D)$ denote the supremum, over all deterministic decision trees of depth at most $D$, of the total variation distance between the transcript distributions obtained when the queried function is $k$-wise independent and when it is fully random. We prove the desired bound by induction on $k$. Clearly $\Delta_0(D)\leq 1$, while $\Delta_k(0)=0$ for every $k$, since a depth-zero decision tree makes no queries.

Fix a deterministic decision tree of depth at most $D$ and an event $E$ of transcripts. Label each leaf by $1$ if its transcript lies in $E$, and by $0$ otherwise, and let $h(f)$ be the resulting output on a function $f:X\to Z$. Follow the path obtained by answering $0$ to every equality query, and let $(x_1,z_1),\ldots,(x_d,z_d)$, where $d\leq D$, be the queries encountered on this path. Let $h_0$ be the label of the terminal leaf. For each $j$ and $b\in\{0,1\}$, let $h_j^b:Z^X\to\{0,1\}$ denote the Boolean function that maps an oracle $f:X\to Z$ to the final leaf label obtained by continuing the computation from the $b$-child of the $j$th node with oracle $f$. Then
\begin{equation}\label{eq:h}
h(f)
=
h_0+\sum_{j=1}^d
\mathbf 1[f(x_j)=z_j]
\bigl(h_j^1(f)-h_j^0(f)\bigr).
\end{equation}
This identity follows by repeatedly expanding the computation along the all-zero path, starting from
\[
h(f)=h_1^0(f)+\mathbf 1[f(x_1)=z_1]\bigl(h_1^1(f)-h_1^0(f)\bigr).
\]

Now suppose $k\geq1$. Since both $F$ and $G$ are uniform at every individual input,
\[
\Pr[F(x_j)=z_j]=\Pr[G(x_j)=z_j]=\frac1\ell.
\]
By the assumptions in the lemma, conditioned on $F(x_j)=z_j$, the values of $F$ on any $k-1$ distinct inputs different from $x_j$ remain independent and uniform, while conditioned on $G(x_j)=z_j$, the values of $G$ on the remaining inputs remain fully independent and uniform. Any subsequent query to $x_j$ has a predetermined answer under this conditioning and may therefore be removed. Thus, for each $b\in\{0,1\}$,
\begin{equation}\label{eq:hexp}
\left|
\mathbb E\!\left[h_j^b(F)\mid F(x_j)=z_j\right]
-
\mathbb E\!\left[h_j^b(G)\mid G(x_j)=z_j\right]
\right|
\leq
\Delta_{k-1}(D-j).
\end{equation}
Importantly, we condition only on the single event $F(x_j)=z_j$ (respectively $G(x_j)=z_j$), and not on the preceding zero answers along the all-zero path. This is what allows us to retain $(k-1)$-wise independence after conditioning. Taking expectations in \Cref{eq:h} gives
\[
\mathbb E[h(F)]
=
h_0+\frac1\ell\sum_{j=1}^d
\mathbb E\!\left[
h_j^1(F)-h_j^0(F)
\,\middle|\,
F(x_j)=z_j
\right],
\]
and similarly
\[
\mathbb E[h(G)]
=
h_0+\frac1\ell\sum_{j=1}^d
\mathbb E\!\left[
h_j^1(G)-h_j^0(G)
\,\middle|\,
G(x_j)=z_j
\right].
\]
Subtracting these expressions, applying the triangle inequality, and using \Cref{eq:hexp} for $b=0,1$ gives
\[
\left|\mathbb E[h(F)]-\mathbb E[h(G)]\right|
\leq
\frac{2}{\ell}\sum_{j=1}^d \Delta_{k-1}(D-j).
\]
Since $h$ is the indicator of the chosen transcript event $E$, we have $\mathbb E[h(F)]=\Pr_F[E]$ and $\mathbb E[h(G)]=\Pr_G[E]$. Since $d\leq D$, taking the supremum over all deterministic decision trees of depth at most $D$ and all transcript events $E$ therefore gives
\[
\Delta_k(D)
\leq
\frac{2}{\ell}\sum_{j=1}^D \Delta_{k-1}(D-j).
\]
By induction,
\[
\Delta_k(D)\leq \left(\frac{2}{\ell}\right)^k\sum_{j=1}^D\binom{D-j}{k-1}
=\binom{D}{k}\left(\frac{2}{\ell}\right)^k,
\]
where the last equality uses $\sum_{r=k-1}^{D-1}\binom{r}{k-1}=\binom{D}{k}$.
Since total variation distance is the maximum difference in the probability of an event, this proves the lemma. 
\end{proof}
We now show that the transcript distributions in Hybrid 1 and Hybrid 2 are close in total variation distance.
\begin{lemma}\label{lem:replace-poly}
The joint distributions of $(s,T)$ in Hybrid~1 and Hybrid~2 have total variational distance at most 
\[
    \varepsilon_D
    :=
    \binom Dt\left(\frac2{t^2}\right)^t.
\]
Consequently,
$    \Pr_{\mathrm{H1}}[A]
    \le
    \Pr_{\mathrm{H2}}[A]+\varepsilon_D.$
\end{lemma}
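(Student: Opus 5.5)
We plan to derive \Cref{lem:replace-poly} by conditioning on the slope $s$ and then applying \Cref{lem:equality} with $k=t$ and $\ell=t^2$. In both hybrids $s$ is uniform on $\mathbb F_{t^6}$ and independent of the labelling function: $P_{\mathsf{trunc}}$ in Hybrid~1, $g$ in Hybrid~2. It therefore suffices to show that for every fixed $s_0$, the conditional transcript distributions given $s=s_0$ are within total variation $\varepsilon_D$. The joint total variation distance of $(s,T)$ is then the average over $s_0$ of these conditional distances, hence also at most $\varepsilon_D$.

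Fix $s_0$. We convert the learner into a deterministic adaptive algorithm $\mathcal B$ that makes equality queries to an unknown function $f:\mathbb F_{t^6}\to\{0,1\}^{2\log t}$. $\mathcal B$ simulates the learner as follows.
\begin{itemize}[nosep]
\item Whenever the learner asks a block query $(x,c,z)$, $\mathcal B$ asks the equality query $\mathbf 1[f(c+xs_0)=z]$ and forwards the answer. This is legitimate because $s_0$ is fixed and known to $\mathcal B$.
\item Whenever the learner asks a cheat-sheet query, $\mathcal B$ answers $0$ itself, without querying $f$.
\end{itemize}
$\mathcal B$ makes at most $D$ queries. Its transcript determines the learner's transcript $T$, including the final output, and vice versa up to a deterministic relabelling. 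Setting $f=P_{\mathsf{trunc}}$ reproduces Hybrid~1 conditioned on $s=s_0$, and setting $f=g$ reproduces Hybrid~2 conditioned on $s=s_0$.

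By \Cref{fact:independence}, with coefficients uniform and independent of $s$, $P_{\mathsf{trunc}}$ is a $t$-wise independent function with uniform marginals on $\{0,1\}^{2\log t}$, a set of size $\ell=t^2$. \Cref{lem:equality} then gives conditional total variation distance at most $\binom Dt(2/t^2)^t=\varepsilon_D$. Since $T$ is a deterministic function of $\mathcal B$'s transcript, data processing gives the same bound for $T$.

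For the consequence, note that $A$ is an event determined by the pair $(s,T)$. It asks whether some cheat-sheet query in $T$ has address $u=s$, or whether $\widehat s(T)=s$. Hence $|\Pr_{\mathrm{H1}}[A]-\Pr_{\mathrm{H2}}[A]|\le\varepsilon_D$, and the stated inequality follows.

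The main point needing care is the requirement that $s$ be independent of the labelling function in both hybrids, so that conditioning on $s_0$ leaves $P_{\mathsf{trunc}}$ $t$-wise independent. This holds because $\mu$ samples $s$ independently of $a_0,\ldots,a_{t-1}$. A second point is that \Cref{lem:equality} is stated for transcripts of equality queries only. The cheat-sheet queries are answered deterministically and depend only on the history, so they can be absorbed into $\mathcal B$'s internal computation without affecting its query count.
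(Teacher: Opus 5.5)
Your proposal is correct and follows the paper's proof. Like the paper, you condition on $s$, treat each block query $(x,c,z)$ as the equality query $\mathbf 1[f(c+xs)=z]$, apply \Cref{fact:independence} and \Cref{lem:equality} with $k=t$ and $\ell=t^2$, average over the uniform $s$, and use that $A$ is determined by $(s,T)$; your explicit handling of the cheat-sheet queries inside the simulated algorithm, and of the independence of $s$ from the coefficients, spells out details the paper leaves implicit.
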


\begin{proof}
Fix a value of $s\in \mathbb F_{t^6}$.  In Hybrid~1, every block query
$(x,c,z)$ is an equality query to the function $P_{\mathsf{trunc}}$ at the input
$    y=c+xs$, i.e., 
$    \one[P_{\mathsf{trunc}}(y)=z].$ 
By Fact~\ref{fact:independence}, the values of $P_{\mathsf{trunc}}$  on any set of at most \(t\) distinct inputs are independent and uniform in $\{0,1\}^{2\log t}$.  In Hybrid~2,
these labels are instead generated by a fully random function
$g:\mathbb F_{t^6}\to \{0,1\}^{2\log t}$.  Therefore, by Lemma~\ref{lem:equality}, conditioned on this
fixed value of $s$, the two transcript distributions differ in total
variation distance by at most
\[
    \binom Dt\left(\frac2{t^2}\right)^t
    =
    \varepsilon_D.
\]
Since $s$ is uniform in both hybrids, averaging over $s$ gives the same
bound for the joint distributions of $(s,T)$.  Since $A$ is an event
determined by $(s,T)$, we have 
$  \Pr_{\mathrm{H1}}[A]
    \le
    \Pr_{\mathrm{H2}}[A]+\varepsilon_D.$ 
\end{proof}

\subsubsection{Hybrid 3}

In Hybrid 3, assign to each block $(x,c)$ an independent uniformly random marked address $U_{x,c}\in\{0,1\}^{2\log t}$, which is reused on repeated queries to the same block. Thus, a block query $(x,c,z)$ is answered by $\mathbf{1}[U_{x,c}=z]$.
As in the previous hybrids, every cheat-sheet query is answered by $0$. Write $\Pr_{\mathrm{H3}}$ for probability in this experiment.

The only difference from Hybrid 2 is that there the marked address of block $(x,c)$ is $g(c+xs)$, where $g$ is a uniformly random function. Hence two distinct blocks receive the same marked address whenever they correspond to the same hidden input $c+xs=c'+x's$, whereas in Hybrid 3 distinct blocks are assigned independent marked addresses.

Recall that $T$ denotes the transcript produced by the learner in the hybrid under consideration, including its queries, the corresponding answers, and its final output.
\begin{lemma}\label{lem:replace-blocks}
The joint distributions of $(s,T)$ in Hybrid~2 and Hybrid~3 differ in total variation distance by at most
\[
    {\binom D2}/{t^6}.
\]
Consequently, for $A$ as in \Cref{eq:a},
$    \Pr_{\mathrm{H2}}[A]
    \le
    \Pr_{\mathrm{H3}}[A]
    +
    {\binom D2}/{t^6}.$
\end{lemma}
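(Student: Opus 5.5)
We will prove \Cref{lem:replace-blocks} by coupling Hybrid~2 and Hybrid~3 on a common probability space and bounding the probability that the coupling breaks. Sample $s$ uniformly from $\mathbb F_{t^6}$ and, independently, a family of labels $(U_{x,c})_{(x,c)\in\mathbb F_{t^6}^{\,2}}$ that are i.i.d.\ uniform in $\{0,1\}^{2\log t}$. Hybrid~3 answers block query $(x,c,z)$ by $\one[U_{x,c}=z]$. To simulate Hybrid~2 on the same space, we define $g$ lazily as the learner runs. When the learner queries a block $(x,c)$ whose hidden input $y=c+xs$ has not yet been assigned a value, we set $g(y):=U_{x,c}$. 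If $y$ already received a value from an earlier distinct block $(x',c')$, we reuse that value. Values of $g$ at inputs never touched are filled in with fresh independent randomness. Because each newly assigned $g(y)$ is a fresh uniform label independent of everything before it, $g$ is a uniformly random function independent of $s$. Hence the simulated run has exactly the Hybrid~2 distribution of $(s,T)$. Cheat-sheet queries are answered $0$ in both runs.

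Let $\mathsf{Coll}$ be the event that, in the Hybrid~3 run, two distinct queried blocks $(x,c)\neq(x',c')$ satisfy $c+xs=c'+x's$. Off $\mathsf{Coll}$, the two runs give identical answers at every step, so by induction on the number of queries they produce the same queries, the same transcript and the same final output. Therefore
\[
d_{\mathrm{TV}}\bigl((s,T)_{\mathrm{H2}},(s,T)_{\mathrm{H3}}\bigr)\le\Pr[\mathsf{Coll}].
\]
Since $A$ is determined by $(s,T)$, the second statement of the lemma follows immediately from this bound.

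The main step is bounding $\Pr[\mathsf{Coll}]$, and it is where the independence of the Hybrid~3 transcript from $s$ matters. In Hybrid~3 the answers depend only on $(U_{x,c})$, and the learner is deterministic, so the set $S$ of queried blocks, of size at most $D$, is a function of $(U_{x,c})$ alone and is independent of $s$. Condition on $(U_{x,c})$, which fixes $S$; then $s$ is still uniform. Take a pair of distinct blocks in $S$. If $x=x'$, then $c\neq c'$, so $c+xs\neq c'+x's$ for every $s$. If $x\neq x'$, a collision forces $s=(c'-c)/(x-x')$, which holds with probability exactly $1/t^6$. A union bound over the at most $\binom D2$ pairs gives $\Pr[\mathsf{Coll}]\le\binom D2/t^6$. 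The point to be careful about is measuring the collision event in the Hybrid~3 run, where the queried blocks do not depend on $s$, rather than in the Hybrid~2 run; the coupling permits this because the two runs coincide until the first collision.
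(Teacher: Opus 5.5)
Your proposal is correct and follows essentially the same route as the paper: couple the two hybrids so they agree until two distinct queried blocks share a hidden input $c+xs$, then bound that collision probability in Hybrid~3 (where the queried blocks are independent of $s$) by a union bound over at most $\binom D2$ pairs, each of which forces a single slope. Your version presamples the i.i.d.\ labels $U_{x,c}$ and builds $g$ lazily from them. This is a slightly more explicit rendering of the paper's coupling, since it also fixes what each run does after a collision and checks directly that the simulated $g$ is uniform and independent of $s$.
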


\begin{proof}
We couple Hybrid 2 and Hybrid 3 as follows. Sample the same uniformly random slope $s$ in both experiments. Whenever the learner queries a block $(x,c)$ for the first time, let $y=c+xs$. If no previously queried block $(x',c')$ satisfies $c'+x's=y$, sample a fresh uniformly random marked address $U\in\{0,1\}^{2\log t}$, set $g(y):=U$ in Hybrid 2, and set $U_{x,c}:=U$ in Hybrid 3. Repeated queries to the same block use the previously assigned marked address in both hybrids. Thus the two transcripts remain identical unless the learner queries
two distinct blocks $(x,c)\neq(x',c')$ satisfying
\begin{equation}\label{eq:collision}
    c+xs=c'+x's.
\end{equation}
Let $C$ denote the event that such a pair occurs among the queried blocks. Let $T_{\mathrm{H2}}$ and $T_{\mathrm{H3}}$ denote the transcripts in Hybrid 2 and Hybrid 3, respectively, under the coupling above, and set $X:=(s,T_{\mathrm{H2}})$ and $Y:=(s,T_{\mathrm{H3}})$. By construction, the marginal distributions of $X$ and $Y$ are precisely $(s,T)_{\mathrm{H2}}$ and $(s,T)_{\mathrm{H3}}$, respectively. By the standard coupling inequality, the total variation distance between $X$ and $Y$ is bounded by
\[
d_{\mathrm{TV}}(X,Y)\le \Pr[X\neq Y].
\]
Under our coupling, $X=Y$ whenever $C$ does not occur. Hence
\[
d_{\mathrm{TV}}\bigl((s,T)_{\mathrm{H2}},(s,T)_{\mathrm{H3}}\bigr)
\le \Pr_{\mathrm{H3}}[C].
\]
It remains to bound $\Pr_{\mathrm{H3}}[C]$. Condition on the complete transcript $T$ in Hybrid 3. Since the unique marked address assigned to each block is chosen independently of $s$, and every cheat-sheet query is answered by $0$, the transcript is independent of $s$. Hence, conditioned on $T$, the slope $s$ remains uniform in $\mathbb{F}_{t^6}$. The transcript contains at most $D$ distinct queried blocks. Fix two distinct queried blocks $(x,c)$ and $(x',c')$. If $x=x'$, then $c+xs=c'+x's$ implies $c=c'$, and hence $(x,c)=(x',c')$, contradicting that the two blocks are distinct. If $x\neq x'$, then $c+xs=c'+x's$ is equivalent to $s=(c'-c)/(x-x')$. Thus this equality holds for exactly one value of $s\in\mathbb{F}_{t^6}$, and therefore
\[
\Pr_{\mathrm{H3}}\!\left[c+xs=c'+x's\mid T\right]\le \frac{1}{t^6}.
\]
A union bound over at most $\binom{D}{2}$ pairs of queried blocks gives $\Pr_{\mathrm{H3}}[C\mid T]\le \frac{\binom{D}{2}}{t^6}$. Since this holds for every transcript $T$, averaging over $T$ yields $\Pr_{\mathrm{H3}}[C]\le \binom{D}{2}/t^6$, proving the lemma.
\end{proof}

\begin{lemma}\label{lem:guess}
In Hybrid 3, $\Pr_{\mathrm{H3}}[A]\le \frac{D+1}{t^6}$.
\end{lemma}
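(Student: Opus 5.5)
The plan is to show that in Hybrid~3 the entire execution of the deterministic learner, including every query it makes and its final output, is a function only of the block randomness $(U_{x,c})_{x,c}$, which is independent of $s$. The bound then follows by a union bound over at most $D+1$ fixed guesses.

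\textbf{Step 1: the transcript does not depend on $s$.} Fix the array $U=(U_{x,c})_{(x,c)\in\mathbb F_{t^6}^2}$. In Hybrid~3 each answer is determined without reference to $s$:
\begin{itemize}
\item a block query $(x,c,z)$ is answered by $\one[U_{x,c}=z]$;
\item every cheat-sheet query is answered by $0$.
\end{itemize}
By induction on the query index, since the learner is deterministic, the $i$th query is a function of $U$ alone. Hence the full transcript $T$ is a deterministic function $T(U)$. This includes the list of cheat-sheet addresses $u_1,\ldots,u_m$ it queries (with $m\le D$) and its output slope $\widehat s(T)$.

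\textbf{Step 2: reduce to fixed guesses.} Because $s$ is sampled independently of $U$, we may condition on $U$, which leaves $s$ uniform on $\mathbb F_{t^6}$. Conditioned on $U$, the event $A$ from \Cref{eq:a} is contained in
\[
\{s\in\{u_1,\ldots,u_m,\widehat s(T)\}\},
\]
where the set on the right is a fixed set of at most $D+1$ elements. A union bound gives $\Pr[A\mid U]\le (D+1)/t^6$. Averaging over $U$ proves the claim.

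The only delicate point is Step~1. We must check that no part of the learner's behaviour leaks $s$. Here the \emph{first} event in $A$ refers to the sheet address $u=s$, but the answer to that query is still $0$, identical to the answers at inactive sheets. So whether a query "hits" $s$ does not affect the transcript, and the learner's subsequent behaviour is unchanged. This is exactly why \Cref{lem:zero-sheet} was arranged so that all cheat-sheet answers are $0$. With this observation the argument is routine.
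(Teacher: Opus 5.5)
Your proposal is correct and follows essentially the same argument as the paper. The only cosmetic difference is that the paper conditions on the transcript $T$, after noting it is independent of $s$, whereas you condition on the block labels $U$; since $T$ is a deterministic function of $U$ and $s$ is independent of $U$, the two are interchangeable, and both finish with the same union bound over at most $D+1$ slope guesses.
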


\begin{proof}
Condition on the complete transcript $T$. In Hybrid 3, the marked address of each block is chosen independently of $s$, and every cheat-sheet query is answered by $0$. Hence $T$ is independent of $s$, so conditioned on $T$, the slope $s$ remains uniform in $\mathbb F_{t^6}$.

The transcript determines the at most $D$ cheat-sheet addresses queried by the learner, as well as its final slope guess $\hat s(T)$. Therefore, conditioned on $T$, the event $A$ can occur only if $s$ equals one of at most $D+1$ specified elements of $\mathbb F_{t^6}$. Thus
\[
\Pr_{\mathrm{H3}}[A\mid T]\le \frac{D+1}{t^6}.
\]
Since this bound holds for every transcript $T$,
\[
\Pr_{\mathrm{H3}}[A]
=
\sum_T \Pr_{\mathrm{H3}}[T]\Pr_{\mathrm{H3}}[A\mid T]
\le
\frac{D+1}{t^6},
\]
proving the lemma.
\end{proof}

\subsubsection{Putting hybrids together}
We now show how to put the hybrids together to obtain our randomized lower bound in \Cref{thm:rq-randomized-lower-bound}.

\begin{proof}[Proof of \Cref{thm:rq-randomized-lower-bound}]
Recall the definition of $A$ from \Cref{eq:a} and that $\mu$ is the distribution obtained by choosing $s,a_0,\ldots,a_{t-1}$ independently and uniformly from $\mathbb F_{t^6}$. We use Yao's minimax principle~\cite{yao1977probabilistic} and show that the success probability of any deterministic learner that makes less than $t^3/10$ queries is small under the input distribution $\mu$.
Combining \Cref{lem:zero-sheet,lem:replace-poly,lem:replace-blocks,lem:guess} gives
\begin{align}
\label{eq:puttinghybridgstogether}
    \Pr_\mu[\mathrm{success}]
    &\le
    \Pr_{\mathrm{H1}}[A] \le
    \Pr_{\mathrm{H2}}[A]+\varepsilon_D \le
    \Pr_{\mathrm{H3}}[A]
    +\frac{\binom D2}{t^6}
    +\varepsilon_D \le
    \frac{D+1}{t^6}
    +\frac{\binom D2}{t^6}
    +\varepsilon_D,
\end{align}
where
\[
    \varepsilon_D
    =
    \binom Dt\left(\frac2{t^2}\right)^t.
\]
Now, let 
$    D={t^3}/{10}.$
Thus
\[
    \binom Dt\left(\frac2{t^2}\right)^t
    \le
    \left(\frac{eD}{t}\right)^t
    \left(\frac2{t^2}\right)^t
    =
    \left(\frac{2eD}{t^3}\right)^t
    \le
    \left(\frac e5\right)^t,
\]
while
\[
    \frac{\binom D2+D+1}{t^6}
    \le
    \frac1{200}+O(t^{-3}).
\]
Hence, for all sufficiently large $t$, the right-hand side of \Cref{eq:puttinghybridgstogether} is strictly smaller than $2/3$.

Thus, for all sufficiently large $t$, every deterministic learner making at most $t^3/10$ queries has success probability less than $2/3$ under $\mu$. By Yao's minimax principle, $\R(\mathcal C_t)=\Omega(t^3)$.
\end{proof}

\subsection{Putting it all together}
\rqseparation*

\begin{proof}
We have $|\mathcal C_t|=t^{6(t+1)}$ from \Cref{fact:parameters} and $N_t=\Theta(t^{14})$ from \Cref{def:class}. The standard quantum lower bound for exact learning~\cite[Theorem~10]{servedio2004equivalences} implies
\[
  \Q(\Cc_t)=\Omega\!\left(\frac{\log|\Cc_t|}{\log N_t}\right)
  =\Omega\!\left(\frac{6t\log t}{\log N_t}\right)
  =\Omega(t).
\]
Hence $\log N_t=\Theta(\log \Q(\Cc_t))$. Together with the quantum upper bound from \Cref{thm:rq-quantum-upper-bound}, this gives $\Q(\Cc_t)=\Theta(t)$. Applying the randomized lower bound from \Cref{thm:rq-randomized-lower-bound}, we obtain
\[
  \R(\Cc_t)=\Omega(t^3)
  =\Omega\!\left(\Q(\Cc_t)^3
               \frac{\log N_t}{\log \Q(\Cc_t)}\right),
\]
as required.
\end{proof}

\section{Optimal quantum-deterministic separation}
\label{sec:dqseparation}
In this section, we prove Theorem~\ref{thm:dqseparation}.  Like before we first give the concept class, then describe the classical and quantum bounds. 

\subsection{The concept class}
The construction is based on combining two standard sources of quantum speedup, Grover search and Bernstein-Vazirani. We partition the domain into $q^2$ blocks and choose one hidden block $b$. On every block $b'\neq b$, the concept is identically zero, while on block $b$ it is given by $(x,y)\mapsto x^\top A y$ for an unknown nonzero matrix $A\in\mathbb F_2^{q\times q}$. We now formalize this construction.

\begin{defi}\label{def:bilinear-class}
Let $q\ge2$ be an integer and let
$\mathcal X_q=[q^2]\times\F_2^q\times\F_2^q$.
For $b\in[q^2]$ and nonzero $A\in\F_2^{q\times q}$, define
$c_{b,A}:\mathcal X_q\to\{0,1\}$ by
$$c_{b,A}(b',x,y)=\one[b'=b]\,x^{\mathsf T}Ay.$$
Let $c_0$ be the identically zero concept, and define
\[
  \Cc_q=\Big\{c_{b,A}:\mathcal X_q\rightarrow \{0,1\}:b\in[q^2],\ A\in\F_2^{q\times q}\setminus\{0\}\Big\}
        \cup\{c_0\}.
\]
\end{defi}

\begin{figure}[ht]
\centering
\begin{tikzpicture}[x=1cm,y=1cm,font=\small]
    \draw (0,0) rectangle (2.4,2.4);
    \node at (1.2,1.2) {\Large $0$};
    \node at (1.2,2.75) {$b'\ne b$};

    \node at (3,1.2) {$\cdots$};

    \draw[fill=black!5] (3.6,0) rectangle (6,2.4);
    \node at (4.8,1.2) {$x^{\mathsf T}Ay$};
    \node at (4.8,2.75) {$b'=b$};

    \node at (6.6,1.2) {$\cdots$};

    \draw (7.2,0) rectangle (9.6,2.4);
    \node at (8.4,1.2) {\Large $0$};
    \node at (8.4,2.75) {$b'\ne b$};

    \draw[decorate,decoration={brace,amplitude=4pt}]
        (-0.2,0) -- (-0.2,2.4)
        node[midway,left=6pt] {$2^q$};
    \draw[decorate,decoration={brace,amplitude=4pt}]
        (0,3.05) -- (2.4,3.05)
        node[midway,above=6pt] {$2^q$};

    \draw[decorate,decoration={brace,mirror,amplitude=5pt}]
        (0,-0.3) -- (9.6,-0.3)
        node[midway,below=7pt] {$q^2\text{ blocks}$};
\end{tikzpicture}
\caption{The concept $c_{b,A}$. Rows and columns within each block
are indexed by $x,y\in\mathbb F_2^q$, respectively.
Only block $b$ is nonzero. For $c_0$, every block is zero.}
\label{fig:bilinear-concept-class}
\end{figure}

\subsection{Quantum upper bound}\label{sec:dqupper}

We first show that $\Cc_q$ can be learned with $O(q)$ quantum membership queries. The algorithm has two stages: first find the unique nonzero block, and then recover the matrix $A$.

Suppose first that the target concept is $c_{b,A}$ with $A\neq 0$. Since $\ker(A)$ is a proper subspace of $\mathbb F_2^q$, a uniformly random $y\in\mathbb F_2^q$ satisfies $Ay\neq 0$ with probability at least $1/2$. For every such $y$, exactly half of the vectors $x\in\mathbb F_2^q$ satisfy $x^\top Ay=1$. Therefore at least a $1/4$ fraction of the coordinates in block $b$ have value $1$. Since all other $q^2-1$ blocks are identically zero, at least a $1/(4q^2)$ fraction of the entire domain has value $1$. By quantum search~\cite{BHMT02}, we can therefore find a coordinate of value $1$ using $O(q)$ queries with probability at least $2/3$. The block containing this coordinate must be $b$. If the target is $c_0$, quantum search reports failure, and we output $c_0$.

It remains to recover $A$ once $b$ is known. For each $j\in[q]$, fix $y=e_j$. Then
\[
x\mapsto c_{b,A}(b,x,e_j)
= x^\top A e_j
= \langle x,Ae_j\rangle.
\]
Thus this is precisely a Bernstein-Vazirani instance with hidden string $Ae_j$, the $j$th column of $A$. One query suffices to therefore recover $Ae_j$ exactly. Repeating this for all $j\in[q]$ recovers $A$ using $q$ queries. This uses $O(q)+q=O(q)$ queries in total and succeeds with
probability at least $2/3$ on every concept.

\subsection{Deterministic lower bound}\label{sec:dqlower}
Next we give our deterministic lower bound of $q^4$. Fix a correct deterministic learner of strictly smaller cost and consider an adversary that answers $0$ to every query; in particular, these answers are consistent with $c_0$. Suppose the learner halts after fewer than $q^4$ queries. For each $b\in[q^2]$, let $t_b$ be the number of its queries to block $b$. Since $\sum_b t_b<q^4$, some block $b^\star$ satisfies $t_{b^\star}<q^2$.

Write the queries to that block as $(b^\star,x_1,y_1),\ldots,(b^\star,x_{t_{b^\star}},y_{t_{b^\star}})$.
A matrix $A$ consistent with their zero answers must satisfy
\[
x_i^{\mathsf T}Ay_i=0
\]
for every $i\in[t_{b^\star}]$. These are fewer than $q^2$ homogeneous linear constraints on the $q^2$ entries of $A$, so there exists a nonzero matrix $A^\star$ satisfying all of them.

The concept $c_{b^\star,A^\star}$ therefore answers $0$ to every query in block $b^\star$, and answers $0$ to every query outside that block by definition. Hence the learner receives exactly the same transcript on $c_{b^\star,A^\star}$ as on $c_0$, and, being deterministic, produces the same output on both concepts. It cannot identify both correctly, which is a contradiction.

\subsection{Putting it all together}

\dqseparation*

\begin{proof}
Let $\Cc_q$ be as in \Cref{def:bilinear-class}. Observe that the domain size of $\Cc_q$ satisfies  $N_q=q^2\,2^{2q}$, so $\log N_q=\Theta(q)$. Using our quantum upper bound of $\Q(\Cc_q) = O(q)$ from \Cref{sec:dqupper} and $\log N_q=\Theta(q)$, we have $\Q(\Cc_q)^3\log N_q=O(q^4)$. The theorem now follows from our deterministic lower bound of $\D(\C_q)\ge q^4$ from \Cref{sec:dqlower}.
\end{proof}

This matches the general simulation upper bound
$\D(\C)=O(\Q(\C)^3\log N)$ of Servedio and
Gortler~\cite[Theorem~1]{servedio2004equivalences} for this family.

\section{Further results}
\subsection{Booleanization of exact learning}
\label{sec:booleanization}
For every concept class, every Boolean decision about an unknown concept can be solved by first identifying the concept, and then evaluating the desired Boolean function. We show in this section that, quantumly, the converse holds up to a constant factor: every concept class admits a Boolean decision whose query complexity matches that of exact learning.

For any concept class \(\mathcal C\subseteq\{0,1\}^{N}\) and any set \(\mathcal P\subseteq \mathcal C\), define a ``Booleanization'' of $\mathcal C$ as the Boolean function that identifies an element of $\cal P$, i.e., $b_{\mathcal P}(c)=\mathbf 1[c\in \mathcal P]$. Also define
\[
    \Q_{\rm bool}(\mathcal C)=\max_{\mathcal P\subseteq \mathcal C}\Q(b_{\mathcal P})
    \qquad\text{and}\qquad
    \R_{\rm bool}(\mathcal C)=\max_{\mathcal P\subseteq \mathcal C}\R(b_{\mathcal P}).
\]

We use the general adversary bound~\cite{lee2011quantum} given below.  For a function $f:D\to E$ with $D\subseteq\{0,1\}^N$, define
\begin{equation}\label{eq:adv}
\operatorname{Adv}^{\pm}(f)
:=
\max_{\Gamma\neq 0}
\frac{\|\Gamma\|}
{\max_{i\in[N]}\|\Gamma\circ\Delta_i\|},
\end{equation}
where the maximum is over all real symmetric matrices $\Gamma$ indexed by $D\times D$ such that
$\Gamma_{x,y}=0$ whenever $f(x)=f(y)$, and $(\Delta_i)_{x,y}:=\mathbf 1[x_i\neq y_i]$.
Then $\Q(f)=\Theta\!\left(\operatorname{Adv}^{\pm}(f)\right)$.

\begin{theorem}[Quantum Booleanization]
\label{thm:quantum-booleanization}
For every concept class \(\Cc \subseteq \zone^N\),
\[
    \Q(\mathcal C)=\Theta\!\left(\Q_{\rm bool}(\mathcal C)\right).
\]
\end{theorem}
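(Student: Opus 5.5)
The easy direction, $\Q_{\rm bool}(\mathcal C)\le\Q(\mathcal C)$, is immediate: for any $\mathcal P\subseteq\mathcal C$, run the learner, obtain $c$ with probability at least $2/3$, and output $\mathbf 1[c\in\mathcal P]$. The content is the reverse inequality $\Q(\mathcal C)=O(\Q_{\rm bool}(\mathcal C))$. Both sides are characterized by the general adversary bound \eqref{eq:adv}: $\Q(\mathcal C)=\Theta(\operatorname{Adv}^{\pm}(\mathrm{id}_{\mathcal C}))$, where learning is the identity function on $D=\mathcal C$ with range $\mathcal C$, and $\Q(b_{\mathcal P})=\Theta(\operatorname{Adv}^{\pm}(b_{\mathcal P}))$. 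It therefore suffices to show that for every feasible $\Gamma$ for the identity there is a partition $\mathcal P\uplus(\mathcal C\setminus\mathcal P)$ and a matrix $\Gamma'$, feasible for $b_{\mathcal P}$, whose adversary ratio is at least a constant fraction of that of $\Gamma$.

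For the identity function, the feasibility constraint is only $\Gamma_{c,c}=0$. The natural choice is $\Gamma'=\Gamma\circ M_{\mathcal P}$, where $(M_{\mathcal P})_{c,c'}=\mathbf 1[\mathbf 1[c\in\mathcal P]\ne\mathbf 1[c'\in\mathcal P]]$, which keeps only the entries crossing the cut. The denominator is easy. If $\Pi$ is the diagonal projector onto $\mathcal P$ and $\bar\Pi=I-\Pi$, then
\[
\Gamma'\circ\Delta_i=\Pi(\Gamma\circ\Delta_i)\bar\Pi+\bar\Pi(\Gamma\circ\Delta_i)\Pi,
\]
so $\|\Gamma'\circ\Delta_i\|\le\|\Gamma\circ\Delta_i\|$, since this block operator has norm equal to the norm of one off-diagonal block, which is a compression.

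The main obstacle is the numerator: we need a $\mathcal P$ with $\|\Pi\Gamma\bar\Pi\|\ge\Omega(\|\Gamma\|)$. My first attempt is a random cut. Take a unit principal eigenvector $v$ of $\Gamma$ (WLOG $v^\top\Gamma v=\|\Gamma\|$, replacing $\Gamma$ by $-\Gamma$ if needed) and put each $c$ in $\mathcal P$ independently with probability $1/2$. Because the diagonal of $\Gamma$ is zero, each off-diagonal entry crosses the cut with probability $1/2$, so
\[
\mathbb E\bigl[v^\top\Gamma' v\bigr]=\tfrac12 v^\top\Gamma v=\tfrac12\|\Gamma\|.
\]
Hence some $\mathcal P$ achieves $\|\Gamma'\|\ge v^\top\Gamma'v\ge\|\Gamma\|/2$. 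The zero diagonal of $\Gamma$ is exactly what makes this work, which is why the identity-function adversary is needed and not an arbitrary one.

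Combining the two bounds gives $\operatorname{Adv}^{\pm}(b_{\mathcal P})\ge\tfrac12\operatorname{Adv}^{\pm}(\mathrm{id}_{\mathcal C})$. The tight characterization $\Q=\Theta(\operatorname{Adv}^{\pm})$ from~\cite{lee2011quantum} then yields $\Q(\mathcal C)=O(\Q(b_{\mathcal P}))\le O(\Q_{\rm bool}(\mathcal C))$. The one point to check is that the characterization applies to non-Boolean outputs, here the identity with range $\mathcal C$, with constants independent of the output size. I believe the Lee–Mittal–Reichardt–Špalek–Szegedy theorem covers general partial functions, which is the form of \eqref{eq:adv} stated in the paper.
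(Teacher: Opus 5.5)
Your proposal is correct and is essentially the paper's proof. Your cut-masked matrix $\Gamma\circ M_{\mathcal P}=\Pi\Gamma\bar\Pi+\bar\Pi\Gamma\Pi$ is exactly the paper's $\Gamma_s=(\Gamma-S\Gamma S)/2$ with $S=\Pi-\bar\Pi$. The random cut using the zero diagonal of $\Gamma$ gives $\mathbb E[v^\top\Gamma_s v]=\tfrac12\|\Gamma\|$ just as in the paper, and your compression bound on $\Gamma'\circ\Delta_i$ is equivalent to the paper's triangle-inequality bound. The paper also relies on the Lee et al.\ characterization $\Q=\Theta(\operatorname{Adv}^{\pm})$ for the non-Boolean identity function, in the same way as the point you flag.
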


\begin{proof}
    The inequality \(\Q_{\rm bool}(\mathcal C)\leq \Q(\mathcal C)\) is immediate. Let $\mathcal P^*$ be a partition for which $\Q_{\rm bool}(\mathcal C) = \Q(b_{\mathcal P^*})$. An algorithm that identifies \(c\) can also determine whether \(c\in \mathcal P^*\) without making additional queries.    

    For the converse, we use the general adversary bound~\cite{lee2011quantum}.
Let $\operatorname{Id}: \mathcal C \to \mathcal C$ be the function $\operatorname{Id}(c) = c$. Let \(\Gamma\) be an optimal adversary matrix for \(\operatorname{Id}\). By the constraints on $\Gamma$ given after \Cref{eq:adv}, its diagonal will be zero.  Replacing \(\Gamma\) by \(-\Gamma\), if necessary, choose a unit vector \(v\) such that 
\begin{equation}\label{eq:vstar}
    v^*\Gamma v=\lVert\Gamma\rVert.
\end{equation}
    
    For a sign vector \(s\in\{-1,1\}^{\Cc}\), let $S=\operatorname{diag}(s)$ and $\Gamma_s=\frac{\Gamma-S\Gamma S}{2}$. Note that $(\Gamma_s)_{c,c'}=\frac{1-s_cs_{c'}}{2}\Gamma_{c,c'}$, and hence $(\Gamma_s)_{c,c'}=\Gamma_{c,c'}$ if $s_c\neq s_{c'}$, while $(\Gamma_s)_{c,c'}=0$ otherwise. Consequently, if $P_s:=\{c\in\Cc:s_c=1\}$, then $\Gamma_s$ is a valid adversary matrix for $b_{P_s}$.
    We have for $c\neq c'$, $\E_s[s_cs_{c'}]=\E_s[s_c]\E_s[s_{c'}]=0$.
Moreover, $\Gamma_{c,c}=0$ for every $c\in\Cc$. Therefore, for all $c,c'\in\Cc$,
\[
\bigl(\E_s[S\Gamma S]\bigr)_{c,c'}
=
\Gamma_{c,c'}\,\E_s[s_cs_{c'}]
=
0.
\]
Thus $\E_s[S\Gamma S]=0$.
For the choice of $v^*$ from \Cref{eq:vstar}, we have $\mathbb E_s\bigl[v^*\Gamma_s v\bigr]=\frac12 v^*\Gamma v=\frac12\lVert\Gamma\rVert$. By an averaging argument, this implies the existence of \(s\) for which 
\begin{equation}\label{eq:gammaslowerbound}
    \lVert\Gamma_s\rVert \geq v^*\Gamma_s v \geq \frac12\lVert\Gamma\rVert.
\end{equation}

On the other hand, for any $x\in[N]$,
\[
\Gamma_s\circ\Delta_x
=
\frac12\bigl(\Gamma\circ\Delta_x-S(\Gamma\circ\Delta_x)S\bigr),
\]
where we used that $S$ is diagonal. Since conjugation by $S$ preserves operator norm, the triangle inequality gives
\begin{equation}\label{eq:gammasupperbound}
    \|\Gamma_s\circ\Delta_x\|
\le
\frac12\left(
\|\Gamma\circ\Delta_x\|
+
\|S(\Gamma\circ\Delta_x)S\|
\right)
=
\|\Gamma\circ\Delta_x\|.
\end{equation}

For the choice of $s$ from \Cref{eq:gammaslowerbound}, \Cref{eq:gammaslowerbound,eq:gammasupperbound} give
\[
\frac{\|\Gamma_s\|}
{\max_{x\in[N]}\|\Gamma_s\circ\Delta_x\|}
\ge
\frac{\frac12\|\Gamma\|}
{\max_{x\in[N]}\|\Gamma\circ\Delta_x\|}
=
\frac12\,\operatorname{Adv}^{\pm}(\operatorname{Id}),
\]
where the last equality uses our initial assumption that $\Gamma$ is an optimal adversary matrix for $\operatorname{Id}$. Since $\Gamma_s$ is a valid adversary matrix for $b_{\mathcal P_s}$ as justified earlier, it follows that $\operatorname{Adv}^{\pm}(b_{\mathcal P_s}) \ge \frac12\,\operatorname{Adv}^{\pm}(\operatorname{Id})$.
Using the tightness of the general adversary bound~\cite{lee2011quantum},
\[
\Q_{\mathrm{bool}}(\Cc)\ge \Q(b_{\mathcal P_s})
=\Omega\!\left(\operatorname{Adv}^{\pm}(b_{\mathcal P_s})\right)
=\Omega\!\left(\operatorname{Adv}^{\pm}(\operatorname{Id})\right)
=\Omega(\Q(\Cc)),
\]
which proves the claim.
\end{proof}

The analogous statement fails for randomized query complexity, even when every Booleanization is solvable exactly with one query.

\begin{theorem}[Randomized separation]
\label{thm:randomized-booleanization-separation}
For every integer $m\geq2$, there is a concept class $\Cc_m\subseteq\zone^{2^m}$ with $|\Cc_m|=m$ such that
$\R_{\mathrm{bool}}(\Cc_m)=1$ but $\R(\Cc_m)=\Omega(\log m)$.
\end{theorem}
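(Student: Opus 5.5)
We take the domain $\zone^{2^m}$ to have its coordinates indexed by subsets $S\subseteq[m]$. For each $i\in[m]$ define the concept $c_i$ by $c_i(S)=\one[i\in S]$, and let $\Cc_m=\{c_1,\ldots,c_m\}$. These are $m$ distinct concepts, because $c_i(\{i\})=1\neq 0=c_j(\{i\})$ for $j\neq i$. So $|\Cc_m|=m$.

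\emph{Booleanizations are trivial.} Fix any $\mathcal P\subseteq\Cc_m$ and set $S_{\mathcal P}:=\{i: c_i\in\mathcal P\}$. A single query to coordinate $S_{\mathcal P}$ returns $c(S_{\mathcal P})=\one[c\in\mathcal P]=b_{\mathcal P}(c)$. So every Booleanization is computed exactly, deterministically, with one query, which gives $\R_{\mathrm{bool}}(\Cc_m)\le 1$. For the reverse inequality, take $\mathcal P=\{c_1\}$. Since $m\ge2$, $b_{\mathcal P}$ is nonconstant on $\Cc_m$, so a zero-query algorithm errs with probability at least $1/2$ on some input. Hence $\R_{\mathrm{bool}}(\Cc_m)=1$.

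\emph{Identification is hard.} We apply Yao's minimax principle with the uniform distribution on $\Cc_m$. Let $\mathcal T$ be a deterministic learner of depth $T$. It has at most $2^T$ leaves, and each leaf outputs one concept, so it is correct on at most $2^T$ of the $m$ concepts. Its success probability under the uniform distribution is therefore at most $2^T/m$. Requiring this to be at least $2/3$ forces $T\ge\log(2m/3)$. This gives $\R(\Cc_m)=\Omega(\log m)$, which is the information-theoretic bound.

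I do not expect any step to be a real obstacle. The only idea needed is the choice of coordinates: indexing coordinates by every possible subset lets a single query answer any membership question, while each answer still reveals only one bit about which of the $m$ concepts is the target.
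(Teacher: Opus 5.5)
Your proof is correct and takes the same approach as the paper. The paper uses the same dictator class $c_j(z)=z_j$ on $\zone^m$ (your subset indexing is the same class) and computes each $b_{\mathcal P}$ with the single query $z^{\mathcal P}$; for $\R(\Cc_m)=\Omega(\log m)$ it cites the standard information-theoretic bound (Servedio--Gortler, Lemma~8), which your Yao/leaf-counting argument spells out, and your check that $\R_{\mathrm{bool}}(\Cc_m)\ge 1$ via a nonconstant $b_{\mathcal P}$ fills in a detail the paper leaves implicit.
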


\begin{proof}
Let $\Cc_m=\{c_1,\ldots,c_m\}$ be the class of dictator functions, where each $c_j:\zone^m\to\zone$ is defined by $c_j(z)=z_j$ for every $z\in\zone^m$. Fix an arbitrary set $\mathcal P\subseteq\Cc_m$, and define $z^{\mathcal P}\in\{0,1\}^m$ by $z^{\mathcal P}_j=\mathbf 1[c_j\in\mathcal P]$. If the unknown concept is
$c_j$, then
\[
c_j(z^{\mathcal P})
=
z^{\mathcal P}_j
=
\mathbf 1[c_j\in\mathcal P]
=
b_{\mathcal P}(c_j).
\]
Thus every Booleanization can be computed deterministically with one query, and hence $\R_{\mathrm{bool}}(\Cc_m)=1$.
On the other hand, each membership query returns only one bit. A standard information-theoretic lower bound (see, e.g.,~\cite[Lemma~8]{servedio2004equivalences}) implies $\R(\Cc_m)=\Omega(\log m)$.
\end{proof}

\subsection{Fractional combinatorial parameters}\label{sec:frac}
In this section we first define the standard parameters $\gamma$ and $\etd$ from the literature and then go on to define their fractional analogs.

\subsubsection{Existing parameters}
Inspired by notation from~\cite{servedio2004equivalences}, we use the splitting parameter
\begin{equation*}
\gamma(\mathcal C):=
\min_{\substack{\mathcal S\subseteq\mathcal C\\|\mathcal S|\ge 2}}
\max_{i\in[N]}
\frac{\min\cbra{|\mathcal S_{i,0}|,|\mathcal S_{i,1}|}}{|\mathcal S|}.
\end{equation*}
Some intuition behind this definition is that if a learner maintains the set $\mathcal S$ of concepts consistent with the answers seen so far, then whenever $|\mathcal S|\geq 2$, there is a query such that, regardless of the answer received, at least a $\gamma(\mathcal C)$ fraction of the concepts in $\mathcal S$ are eliminated.

The definition of $\gamma$ can equivalently be viewed distributionally: for each nontrivial subclass $\cal S\subseteq\cal C$, place the uniform distribution on the concepts in $\cal S$ and zero mass on concepts outside $\cal S$.~Then
\begin{equation}\label{eq:gamma}
\gamma(\cal C)=\min_{\substack{\cal S\subseteq\cal C\\ |\cal S|\geq 2}}
\max_{i\in[N]}\min\{\Pr_{c\sim\cal S}[c_i=0],\Pr_{c\sim\cal S}[c_i=1]\}.
\end{equation}
We will also use the density parameter of Bshouty and Makhoul~\cite[Equation~(9)]{BshoutyMakhoul18}, defined~by
\[
\operatorname{DEN}(\mathcal C):=
\max_{\substack{\mathcal S\subseteq\mathcal C\\|\mathcal S|\ge 2}}
\frac{|\mathcal S|-1}
{\displaystyle\max_{i\in[N]}\min\{|\mathcal S_{i,0}|,|\mathcal S_{i,1}|\}}.
\]
We first observe that $\operatorname{DEN}(\cal C)$ is asymptotically equivalent to $1/\gamma(\cal C)$.

\begin{proposition}\label{prop:den-gamma}
For every concept class $\mathcal C$, $\operatorname{DEN}(\mathcal C)
\leq \frac{1}{\gamma(\mathcal C)}
\leq 2\operatorname{DEN}(\mathcal C)$.
\end{proposition}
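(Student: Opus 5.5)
For each subclass $\mathcal S\subseteq\mathcal C$ with $|\mathcal S|\ge2$, write $m(\mathcal S):=\max_{i\in[N]}\min\{|\mathcal S_{i,0}|,|\mathcal S_{i,1}|\}$. Then
\[
\gamma(\mathcal C)=\min_{\mathcal S} \frac{m(\mathcal S)}{|\mathcal S|}
\qquad\text{and}\qquad
\operatorname{DEN}(\mathcal C)=\max_{\mathcal S}\frac{|\mathcal S|-1}{m(\mathcal S)}.
\]
The plan is to compare the two quantities term by term for the same $\mathcal S$, using that both optimize over the same family of subclasses. A preliminary observation is that $m(\mathcal S)\ge1$ for every such $\mathcal S$. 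Two distinct concepts in $\mathcal S$ differ at some coordinate $i$, and at that coordinate both $\mathcal S_{i,0}$ and $\mathcal S_{i,1}$ are nonempty. Hence all the ratios are well defined and positive, and we may write
\[
\frac1{\gamma(\mathcal C)}=\max_{\mathcal S}\frac{|\mathcal S|}{m(\mathcal S)}.
\]

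\textbf{Lower inequality.} For every $\mathcal S$ we have $\frac{|\mathcal S|-1}{m(\mathcal S)}\le\frac{|\mathcal S|}{m(\mathcal S)}\le \frac1{\gamma(\mathcal C)}$. Maximizing over $\mathcal S$ gives $\operatorname{DEN}(\mathcal C)\le 1/\gamma(\mathcal C)$.

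\textbf{Upper inequality.} Since $|\mathcal S|\ge2$, we have $|\mathcal S|\le 2(|\mathcal S|-1)$. Therefore
\[
\frac{|\mathcal S|}{m(\mathcal S)}\le 2\,\frac{|\mathcal S|-1}{m(\mathcal S)}\le 2\operatorname{DEN}(\mathcal C).
\]
Taking the maximum over $\mathcal S$ yields $1/\gamma(\mathcal C)\le 2\operatorname{DEN}(\mathcal C)$.

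There is no real obstacle in this argument. The only point needing care is the check $m(\mathcal S)\ge1$, which ensures we never divide by zero and that inverting the minimum defining $\gamma$ into a maximum of reciprocals is legitimate.
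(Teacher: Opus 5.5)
Your proof is correct and follows essentially the same route as the paper: rewrite $1/\gamma(\mathcal C)$ and $\operatorname{DEN}(\mathcal C)$ as maxima of $|\mathcal S|/m(\mathcal S)$ and $(|\mathcal S|-1)/m(\mathcal S)$ over the same family, then use $|\mathcal S|-1\le|\mathcal S|\le 2(|\mathcal S|-1)$. Your check that $m(\mathcal S)\ge 1$ is a detail the paper leaves implicit; it is what justifies turning the minimum defining $\gamma$ into a maximum of reciprocals.
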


\begin{proof}
Writing
$m(\mathcal S):=\max_{i\in[N]}\min\{|\mathcal S_{i,0}|,|\mathcal S_{i,1}|\}$,
we have
\[
\frac{1}{\gamma(\mathcal C)}
=\max_{\substack{\mathcal S\subseteq\mathcal C\\|\mathcal S|\ge2}}
\frac{|\mathcal S|}{m(\mathcal S)},
\qquad
\operatorname{DEN}(\mathcal C)
=\max_{\substack{\mathcal S\subseteq\mathcal C\\|\mathcal S|\ge2}}
\frac{|\mathcal S|-1}{m(\mathcal S)}.
\]
The claim follows from $|\mathcal S|-1\leq|\mathcal S|\leq2(|\mathcal S|-1)$ for $|\mathcal S|\geq2$.
\end{proof}

For $h\in\{0,1\}^N$, a set $S\subseteq[N]$ is a \emph{specifying set} for $h$ with respect to $\mathcal C$ if at most one concept in $\mathcal C$ agrees with $h$ on every coordinate in $S$. The \emph{extended teaching dimension} of $\mathcal C$ with respect to $h$ is defined as
\[
\operatorname{ETD}(\mathcal C,h)
:=
\min\left\{
|S|:
S\subseteq[N],\
\bigl|\{c\in\mathcal C:c_i=h_i\text{ for every }i\in S\}\bigr|\leq 1
\right\}.
\]
Equivalently, a specifying set can be represented by weights $w_1,\ldots,w_N\in\{0,1\}$, where $w_i=1$ if coordinate $i$ is included in the set. Thus,
\begin{equation}\label{eq:etd}
\etd(\cal C,h)
=
\min_{\substack{c^*\in\cal C\\ w_1,\ldots,w_N\in\{0,1\}}}
\left\{
\sum_{i=1}^N w_i :
\sum_{i:c_i\neq h_i} w_i\geq 1
\text{ for every }c\in\cal C\setminus\{c^*\}
\right\}.
\end{equation}
The \emph{extended teaching dimension} of $\mathcal C$ is $
\operatorname{ETD}(\mathcal C)
:=
\max_{h\in\{0,1\}^N}\operatorname{ETD}(\mathcal C,h)$. The following proposition is thus immediate from Proposition~\ref{prop:den-gamma} and the observation that $\operatorname{DEN} \leq \etd + 1$~\cite[Lemma~6]{BshoutyMakhoul18}.
\begin{proposition}\label{prop:gamma-etd}
    For all concept classes $\cal C \subseteq \zone^N$, we have $\frac{1}{\gamma(\cal C)} = O(\etd(\cal C))$.
\end{proposition}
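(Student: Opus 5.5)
The statement follows by chaining Proposition~\ref{prop:den-gamma} with the inequality $\operatorname{DEN}(\mathcal C)\le \etd(\mathcal C)+1$ of Bshouty and Makhoul~\cite[Lemma~6]{BshoutyMakhoul18}. This gives $1/\gamma(\mathcal C)\le 2\operatorname{DEN}(\mathcal C)\le 2\etd(\mathcal C)+2$. To make the argument self-contained, I would also sketch why $\operatorname{DEN}\le\etd+1$ holds rather than only citing it.

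\textbf{Step 1.} Fix $\mathcal S\subseteq\mathcal C$ with $|\mathcal S|\ge2$. Let $m:=m(\mathcal S)=\max_i\min\{|\mathcal S_{i,0}|,|\mathcal S_{i,1}|\}$, where $m\ge1$ follows from the usual convention that $\mathcal S$ contains distinct concepts. Define the majority string $h\in\{0,1\}^N$ coordinatewise by letting $h_i$ be the value taken by the majority of $\mathcal S$ at coordinate $i$. Then, for every $i$, the number of concepts in $\mathcal S$ disagreeing with $h$ at $i$ is at most $m$.

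\textbf{Step 2.} Take a specifying set $T$ for $h$ with respect to $\mathcal C$ of size $\etd(\mathcal C,h)\le\etd(\mathcal C)$. Since at most one concept of $\mathcal C$, and hence at most one concept of $\mathcal S$, agrees with $h$ on all of $T$, every other concept of $\mathcal S$ disagrees with $h$ at some coordinate of $T$.

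\textbf{Step 3.} Counting the concepts eliminated gives
\[
|\mathcal S|-1\le\sum_{i\in T}\bigl|\{c\in\mathcal S:c_i\ne h_i\}\bigr|\le |T|\,m\le \etd(\mathcal C)\,m .
\]
Hence $(|\mathcal S|-1)/m(\mathcal S)\le\etd(\mathcal C)$, which is in fact slightly stronger than the cited bound. Maximizing over $\mathcal S$ yields $\operatorname{DEN}(\mathcal C)\le\etd(\mathcal C)$. Combining this with Proposition~\ref{prop:den-gamma} gives $1/\gamma(\mathcal C)\le2\etd(\mathcal C)$, and so $1/\gamma(\mathcal C)=O(\etd(\mathcal C))$.

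There is no real obstacle here. The only points needing care are choosing $h$ as the majority string, so that each coordinate of $T$ accounts for at most $m(\mathcal S)$ disagreements, and handling the degenerate case $\etd=0$. That case occurs only if $|\mathcal C|\le1$, in which case $\gamma$ is a minimum over an empty set and the statement is vacuous.
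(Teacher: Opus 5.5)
Your proof is correct and follows the paper's route: the paper gets the proposition by chaining Proposition~\ref{prop:den-gamma} with the Bshouty--Makhoul bound $\operatorname{DEN}(\mathcal C)\le\etd(\mathcal C)+1$, which it cites rather than proves. Your self-contained majority-string counting argument for that bound is also sound. It even yields the slightly sharper $\operatorname{DEN}(\mathcal C)\le\etd(\mathcal C)$, and hence $1/\gamma(\mathcal C)\le 2\,\etd(\mathcal C)$.
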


\subsubsection{Fractional measures}

We now introduce fractional analogues of $\gamma$ and $\etd$. In the distributional formulation of $\gamma$ from Equation~\eqref{eq:gamma}, the distribution is uniform on a nontrivial subclass $\cal S\subseteq\cal C$, and hence every concept has probability at most $1/2$. We generalize this by allowing an arbitrary distribution on $\cal C$ subject to the same condition that no single concept has probability greater than $1/2$. Similarly, in the formulation of $\etd$ from Equation~\eqref{eq:etd}, a specifying set is represented by weights $w_i\in\{0,1\}$ on the coordinates; we generalize this by allowing arbitrary nonnegative weights. We show that these two fractional relaxations are equivalent up to constant factors.

\begin{defi}[Fractional splitting parameter]\label{def:fgamma}
Define
\[
\mathrm{f}\gamma(\mathcal C)
:=
\min_{\substack{\mu\textnormal{ distribution on }\mathcal C\\
\max_{c\in\mathcal C}\mu(c)\leq 1/2}}
\max_{i\in[N]}
\min\left\{
\mu(\mathcal C_{i,0}),
\mu(\mathcal C_{i,1})
\right\}.
\]
\end{defi}

Thus, $\fgamma(\cal C)$ is the largest value such that for every distribution $\mu$ on $\cal C$ with no atom of mass greater than $1/2$, there is a query such that, regardless of the answer received, at least an $\fgamma(\cal C)$ fraction of the probability mass is eliminated.

\begin{proposition}\label{prop:gamma-fgamma}
For every concept class $\mathcal C \subseteq \zone^N$,
$\frac{1}{\gamma(\mathcal C)}
\leq
\frac{1}{\mathrm{f}\gamma(\mathcal C)}$.
\end{proposition}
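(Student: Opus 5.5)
The inequality $\frac{1}{\gamma(\mathcal C)}\le\frac{1}{\mathrm f\gamma(\mathcal C)}$ is equivalent to $\mathrm f\gamma(\mathcal C)\le\gamma(\mathcal C)$. Both quantities are minima of the same objective over feasible sets, so it suffices to show that the feasible set defining $\gamma$ embeds into the feasible set defining $\mathrm f\gamma$, with the objective values preserved. I will use the distributional formulation of $\gamma$ in Equation~\eqref{eq:gamma} and compare it directly with Definition~\ref{def:fgamma}.

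First, fix any $\mathcal S\subseteq\mathcal C$ with $|\mathcal S|\ge 2$, and let $\mu_{\mathcal S}$ be the uniform distribution on $\mathcal S$, extended by zero to the rest of $\mathcal C$. Every atom of $\mu_{\mathcal S}$ has mass $1/|\mathcal S|\le 1/2$ or $0$. So $\mu_{\mathcal S}$ satisfies the constraint $\max_c\mu(c)\le 1/2$ and is a feasible distribution in Definition~\ref{def:fgamma}.

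Second, check that the objectives agree. For each coordinate $i$,
\[
\mu_{\mathcal S}(\mathcal C_{i,b})=\frac{|\mathcal S_{i,b}|}{|\mathcal S|}=\Pr_{c\sim\mathcal S}[c_i=b],
\]
so $\max_i\min\{\mu_{\mathcal S}(\mathcal C_{i,0}),\mu_{\mathcal S}(\mathcal C_{i,1})\}$ equals the inner expression of Equation~\eqref{eq:gamma} for this $\mathcal S$.

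Third, choose $\mathcal S$ attaining the minimum in $\gamma(\mathcal C)$. Then
\[
\mathrm f\gamma(\mathcal C)\le\max_i\min\{\mu_{\mathcal S}(\mathcal C_{i,0}),\mu_{\mathcal S}(\mathcal C_{i,1})\}=\gamma(\mathcal C).
\]
Taking reciprocals finishes the argument, since both quantities are positive. Indeed, $\mathrm f\gamma(\mathcal C)>0$ whenever $|\mathcal C|\ge2$: a distribution with no atom above $1/2$ has at least two concepts in its support, and any coordinate where two support concepts differ splits off positive mass on each side.

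There is no real obstacle here. The points needing care are the degenerate case $|\mathcal C|=1$ (where the convention is that both quantities are vacuous or infinite) and the fact that the minimum in Definition~\ref{def:fgamma} is attained over a compact set, which is not even needed for the inequality.
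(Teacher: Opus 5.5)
Your proposal is correct and follows the paper's proof: the uniform distribution on any $\mathcal S\subseteq\mathcal C$ with $|\mathcal S|\ge 2$ is feasible for $\mathrm f\gamma(\mathcal C)$ and has the same objective value, so $\mathrm f\gamma(\mathcal C)\le\gamma(\mathcal C)$. Your extra remarks on positivity and the degenerate case $|\mathcal C|=1$ are sound but not needed beyond what the paper gives.
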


\begin{proof}
For every $\mathcal S\subseteq\mathcal C$ with $|\mathcal S|\geq2$, the uniform distribution on $\mathcal S$ is feasible in the definition of $\mathrm{f}\gamma(\mathcal C)$. Hence
$\mathrm{f}\gamma(\mathcal C)\leq\gamma(\mathcal C)$.
\end{proof}

\begin{defi}[Fractional extended teaching dimension]\label{def:fetd}
For $h\in\{0,1\}^N$, define
\[
\operatorname{fETD}(\mathcal C,h) := \min_{\substack{c^*\in\mathcal C\\w_1,\ldots,w_N\geq 0}} \left\{\sum_{i=1}^N w_i: \sum_{i:c_i\neq h_i}w_i\geq 1 \text{ for every }c\in\mathcal C\setminus\{c^*\} \right\}.
\]
Define $\operatorname{fETD}(\mathcal C) := \max_{h\in\{0,1\}^N}\operatorname{fETD}(\mathcal C,h)$.
\end{defi}

\begin{proposition}\label{prop:fetd-etd}
For every concept class $\mathcal C \subseteq \zone^N$,
$\operatorname{fETD}(\mathcal C) \leq \operatorname{ETD}(\mathcal C)$.
\end{proposition}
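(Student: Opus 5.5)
The claim follows by comparing the two optimization problems pointwise in $h$: the program defining $\etd(\cal C,h)$ in \Cref{eq:etd} is a restriction of the program defining $\fetd(\cal C,h)$ in \Cref{def:fetd}. The only difference is that the weights $w_i$ are required to lie in $\{0,1\}$ in the former and only to be nonnegative in the latter. A minimum over a larger feasible set cannot be larger.

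Concretely, we fix $h\in\zone^N$. Let $c^*\in\cal C$ and $w\in\{0,1\}^N$ attain the minimum in \Cref{eq:etd}; such a pair exists because taking $w\equiv 1$ is feasible, since distinct concepts differ from $h$ on different coordinate sets and at most one concept can equal $h$. This pair satisfies $\sum_{i:c_i\neq h_i}w_i\geq 1$ for every $c\neq c^*$. Since $\{0,1\}\subseteq\mathbb R_{\geq 0}$, the same pair is feasible for the program in \Cref{def:fetd}, with the same objective value $\sum_i w_i$. Hence $\fetd(\cal C,h)\leq\etd(\cal C,h)$.

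Taking the maximum over $h\in\zone^N$ on both sides gives $\fetd(\cal C)=\max_h\fetd(\cal C,h)\leq\max_h\etd(\cal C,h)=\etd(\cal C)$.

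There is no real obstacle here. The only point to check is that \Cref{eq:etd} indeed matches the original specifying-set definition of $\etd(\cal C,h)$. Given a specifying set $S$ whose agreeing set has size at most one, we pick $c^*$ to be the unique agreeing concept if there is one, and an arbitrary concept otherwise. Then every $c\neq c^*$ disagrees with $h$ somewhere on $S$, so the indicator weights of $S$ are feasible. This equivalence was already asserted in the paper, so it can be used directly.
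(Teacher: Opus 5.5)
Your proof is correct and is essentially the paper's argument. The paper takes a minimum specifying set for $h$, chooses $c^*$ as the unique agreeing concept (or an arbitrary one if none exists), and observes that the indicator weights are feasible for $\fetd(\mathcal{C},h)$ with value $\etd(\mathcal{C},h)$; this is exactly your final paragraph combined with your relaxation step, and your check of that correspondence goes in the direction the inequality needs.
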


\begin{proof}
Fix $h\in\{0,1\}^N$, and let $\cal S\subseteq[N]$ be a minimum specifying set for $h$ with respect to $\cal C$. Set $w_i=1$ for $i\in\cal S$ and $w_i=0$ otherwise. If there is a concept in $\cal C$ that agrees with $h$ on every coordinate in $\cal S$, choose it as $c^*$; otherwise choose $c^*$ arbitrarily from $\cal C$. Since $\cal S$ is a specifying set, every $c\in\cal C\setminus\{c^*\}$ differs from $h$ on at least one coordinate in $\cal S$, and hence
$\sum_{i:c_i\neq h_i} w_i\geq 1$.
Thus the above weights form a feasible solution for $\fetd(\cal C,h)$ of value $|\cal S|=\etd(\cal C,h)$. Therefore $\fetd(\cal C,h)\leq \etd(\cal C,h)$, and taking the maximum over $h$ proves the claim.
\end{proof}

We next show that these two fractional relaxations coincide up to a constant factor. More precisely, the reciprocal of $\fgamma(\cal C)$ and $\fetd(\cal C)$ differ by at most a factor of two.

\begin{lemma}\label{lemma:fgamma-fetd}
For every concept class $\mathcal C \subseteq \zone^N$ with $|\mathcal C|\geq2$,
\[
\operatorname{fETD}(\mathcal C)
\leq
\frac{1}{\mathrm{f}\gamma(\mathcal C)}
\leq
2\operatorname{fETD}(\mathcal C).
\]
\end{lemma}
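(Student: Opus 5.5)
The plan is to prove the two inequalities separately. The upper bound $1/\fgamma(\cal C)\le 2\fetd(\cal C)$ should follow from a direct averaging argument that plugs a fractional specifying set into the distribution. The lower bound $\fetd(\cal C)\le 1/\fgamma(\cal C)$ should follow from LP duality applied to the program in Definition~\ref{def:fetd}.

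\emph{Upper bound.} Fix a distribution $\mu$ on $\cal C$ with $\max_c\mu(c)\le 1/2$, and let $h\in\zone^N$ be the coordinatewise majority of $\mu$, that is, $h_i\in\arg\max_b\mu(\cal C_{i,b})$. Then
\[
\min\{\mu(\cal C_{i,0}),\mu(\cal C_{i,1})\}=\mu(\{c:c_i\neq h_i\}) \quad \text{for every } i.
\]
Let $m:=\max_i\mu(\{c:c_i\ne h_i\})$. Take an optimal pair $(c^*,w)$ for $\fetd(\cal C,h)$. Multiply the constraint $\sum_{i:c_i\ne h_i}w_i\ge 1$ by $\mu(c)$ and sum over $c\ne c^*$. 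The left side is at most $\sum_i w_i\,\mu(\{c:c_i\ne h_i\})\le m\,\fetd(\cal C,h)$. The right side is $\mu(\cal C\setminus\{c^*\})\ge 1/2$, because $\mu(c^*)\le 1/2$. Hence $m\ge 1/(2\fetd(\cal C,h))\ge 1/(2\fetd(\cal C))$. Since this holds for every feasible $\mu$, we get $\fgamma(\cal C)\ge 1/(2\fetd(\cal C))$.

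\emph{Lower bound.} Fix $h$ with $F:=\fetd(\cal C,h)=\fetd(\cal C)$. For each fixed $c^*$, the LP dual of the inner program is the fractional packing
\[
\max\Big\{\sum_{c\ne c^*}\lambda_c:\ \lambda\ge0,\ \sum_{c\ne c^*:\,c_i\ne h_i}\lambda_c\le 1\ \ \forall i\Big\}.
\]
By strong duality this value is at least $F$ for every choice of $c^*$. The goal is a distribution $\mu$ with atoms at most $1/2$ and $\mu(\{c:c_i\ne h_i\})\le 1/F$ for all $i$. Such a $\mu$ suffices, because $\min\{\mu(\cal C_{i,0}),\mu(\cal C_{i,1})\}\le\mu(\{c:c_i\ne h_i\})$. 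The natural candidate is a normalized packing $\mu=\lambda/\sum_c\lambda_c$, which gives exactly the required per-coordinate bound.

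\emph{Main obstacle.} The difficulty is the atom constraint $\max_c\mu(c)\le1/2$. Although every dual is guaranteed to have value at least $F$ only after \emph{some} $c^*$ is deleted, a single packing might still place more than half its weight on one concept. My first attempt would be to rewrite the whole quantity as a zero-sum game. The minimizer chooses $c^*$, or a mixture over $c^*$ with the constraint written as $\sum_{i:c_i\ne h_i}w_i\ge 1-y_c$ where $y$ is a distribution on $\cal C$. The maximizer chooses $\lambda$. I would then apply the minimax theorem, so that the constraint $\lambda_c\le\frac12\sum\lambda$ appears as the dual of the fractional choice of $c^*$. In this formulation, penalizing every $c$ by $y_c$ dualizes to the requirement that no single $\lambda_c$ exceeds the total mass deleted by $y$, which is one unit out of the two-unit normalization.

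The point to verify is that the fractional relaxation of $c^*$ does not lower the value below $F$. If it does, I would instead use a convex combination of the packings $\lambda^{(c_0)}$ associated with the heavy atoms $c_0$. This combination stays feasible and keeps total mass at least $F$, and it dilutes any single atom to at most half the total mass. I expect this balancing step to be the crux of the lower bound.
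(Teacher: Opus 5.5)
Your upper bound $1/\fgamma(\mathcal C)\le 2\fetd(\mathcal C)$ is correct; it is the same averaging argument the paper uses. The lower bound $\fetd(\mathcal C)\le 1/\fgamma(\mathcal C)$ has a genuine gap. You correctly reduce it to finding $\mu$ with $\max_c\mu(c)\le\frac12$ and $\mu(\{c:c_i\neq h_i\})\le 1/F$ for all $i$. But the atom constraint, which you call the crux, is left open, and your primary plan for it does not work. Dualizing the program with a fractional choice of $c^*$ (constraints $\sum_{i:c_i\neq h_i}w_i\ge 1-y_c$, with $y$ a distribution) gives $\max_\lambda\bigl(\sum_c\lambda_c-\max_c\lambda_c\bigr)$ over packings $\lambda$. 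The atom bound thus enters only as a penalty, not as the hard constraint $\lambda_c\le\frac12\sum_{c'}\lambda_{c'}$, and the value can be strictly below $F$. For example, take $h=0$ and two concepts differing from $h$ on $\{1\}$ and on $\{1,2\}$. Then $\fetd(\mathcal C,h)=1$, yet $y=(\frac12,\frac12)$, $w_1=\frac12$ is feasible with value $\frac12$. Taking $k$ disjoint copies gives $\fetd(\mathcal C,h)=k$ against at most $k-\frac12$. Your fallback of averaging the packings $\lambda^{(c_0)}$ is only asserted: you give no reason why the average has every atom at most half its total mass.

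The missing observation is that no atom of a feasible packing exceeds $1$. Take $c^*$ optimal for $\fetd(\mathcal C,h)$ and an optimal dual $\lambda$, so that $\sum_{c\neq c^*}\lambda_c=F$ by strong duality. The primal is feasible, so every $c\neq c^*$ differs from $h$ at some coordinate $i$. The packing constraint at that $i$ contains $\lambda_c$, hence $\lambda_c\le 1$. Then $\mu:=\lambda/F$ satisfies $\mu(c)\le 1/F$ and $\mu(\{c:c_i\neq h_i\})\le 1/F$, which gives $\fgamma(\mathcal C)\le 1/F$ whenever $F\ge 2$. If $F<2$ there is nothing to prove: $\min\{\mu(\mathcal C_{i,0}),\mu(\mathcal C_{i,1})\}\le\frac12$ always, so $1/\fgamma(\mathcal C)\ge 2$. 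The paper avoids this case split by applying the minimax theorem to $\beta_h=\min_\mu\max_i\mu(\{c:c_i\neq h_i\})$. For an optimal distribution $p$ on coordinates, the atom-constrained minimizer puts mass $\frac12$ on the two concepts of smallest $p$-weight $\sum_{i:c_i\neq h_i}p_i$. Hence every concept except the lightest has $p$-weight at least $\beta_h$, and $w=p/\beta_h$ is a fractional specifying set of size $1/\beta_h$.
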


\begin{proof}
Fix $h\in\{0,1\}^N$, and define 
\begin{equation}\label{eq:betah}
    \beta_h:=\min_{\mu}\max_{i\in[N]}\mu(\{c\in\cal C:c_i\neq h_i\}),
\end{equation}
where the minimum is over distributions $\mu$ on $\cal C$ satisfying $\max_c\mu(c)\leq 1/2$. For a fixed $\mu$, choosing each $h_i$ to be the more likely bit at coordinate $i$ makes
$\mu(\{c\in\cal C:c_i\neq h_i\})=\min\{\mu(\cal C_{i,0}),\mu(\cal C_{i,1})\}$.
Hence
\[
\fgamma(\cal C)
=
\min_{\mu}\max_{i\in[N]}\min\{\mu(\cal C_{i,0}),\mu(\cal C_{i,1})\}
=
\min_{\mu}\min_{h\in\{0,1\}^N}\max_{i\in[N]}\mu(\{c\in\cal C:c_i\neq h_i\})
=
\min_{h\in\{0,1\}^N}\beta_h.
\]

We have therefore shown that $\fgamma(\cal C)=\min_h\beta_h$. It remains to compare $1/\beta_h$ with $\fetd(\cal C,h)$ for each fixed $h$, which we do by proving the two required inequalities separately.

\begin{itemize}
    \item We first show $\fetd(\cal C,h)\leq 1/\beta_h$. We have
    \begin{align}
\beta_h
&=
\min_{\mu}
\max_{i\in[N]}
\mu(\{c\in\cal C:c_i\neq h_i\})\nonumber\\
&=
\min_{\mu}
\max_{p}
\sum_{i\in[N]}
p_i\,\mu(\{c\in\cal C:c_i\neq h_i\})\nonumber\\
&=
\max_{p}
\min_{\mu}
\sum_{i\in[N]}
p_i\,\mu(\{c\in\cal C:c_i\neq h_i\}),\label{eq:betah_minimax}
\end{align}
where $p$ ranges over distributions on $[N]$, and $\mu$ ranges over feasible distributions on $\cal C$. The second equality follows since the maximizing $p$ may be taken to be a point mass, and the last equality follows from von Neumann's minimax theorem.

Fix a maximizing distribution $p$, and let $c_1,c_2\in\cal C$ attain the two smallest values of $\sum_{i:c_i\neq h_i}p_i$, in increasing order. Since every feasible distribution $\mu$ satisfies $\mu(c)\leq 1/2$ for every $c\in\cal C$, the minimizing $\mu$ must place probability exactly $1/2$ on each of $c_1$ and $c_2$. By Equation~\eqref{eq:betah_minimax},
\[
\beta_h=\frac{1}{2}\left(\sum_{i:(c_1)_i\neq h_i}p_i+\sum_{i:(c_2)_i\neq h_i}p_i\right)\leq \sum_{i:(c_2)_i\neq h_i}p_i.
\]
Since $c_2$ attains the second-smallest value, it follows that $\sum_{i:c_i\neq h_i}p_i\geq\beta_h$ for every $c\in\cal C\setminus\{c_1\}$. Choosing $c^*=c_1$ and setting $w_i:=p_i/\beta_h$ in Definition~\ref{def:fetd} therefore gives a feasible solution for $\fetd(\cal C,h)$ of total weight $1/\beta_h$. Thus $\fetd(\cal C,h)\leq 1/\beta_h$.
\item We next show that $1/\beta_h\leq 2\fetd(\cal C,h)$. Let $c^*$ and $w_1,\ldots,w_N$ be an optimal solution to $\fetd(\cal C,h)$ using the notation from Definition~\ref{def:fetd}. In particular, we have
\begin{equation}\label{eq:sumwi}
\sum_i w_i=\fetd(\cal C,h).
\end{equation}
Fix any distribution $\mu$ that is feasible in the definition of $\beta_h$ in Equation~\eqref{eq:betah}. Since $\mu(c^*)\leq 1/2$, we have $\mu(\cal C\setminus\{c^*\})\geq 1/2$. Thus,
\begin{align*}
\sum_{i\in[N]} w_i\,\mu(\{c\in\cal C:c_i\neq h_i\})
&=
\sum_{c\in\cal C}\mu(c)\sum_{i:c_i\neq h_i}w_i\geq
\sum_{c\in\cal C\setminus\{c^*\}}\mu(c)
\geq \frac12,
\end{align*}
where the first inequality follows from the constraints in Definition~\ref{def:fetd}, which give $\sum_{i:c_i\neq h_i}w_i\geq 1$ for every $c\neq c^*$, and the last inequality follows from $\mu(c^*)\leq 1/2$.
By Equation~\eqref{eq:sumwi}, the LHS above is a weighted average of the quantities
$\mu(\{c\in\cal C:c_i\neq h_i\})$ with total weight $\fetd(\cal C,h)$. Hence there exists some $i\in[N]$ such that
$\mu(\{c\in\cal C:c_i\neq h_i\})\geq 1/(2\fetd(\cal C,h))$.
Since this holds for every distribution $\mu$ feasible in Equation~\eqref{eq:betah}, we obtain
$\beta_h\geq 1/(2\fetd(\cal C,h))$, or equivalently $1/\beta_h\leq 2\fetd(\cal C,h)$.
\end{itemize}

Thus $\fetd(\cal C,h)\leq1/\beta_h\leq2\fetd(\cal C,h)$ for every $h$. Taking the maximum over $h$, and using $1/\fgamma(\cal C)=\max_h1/\beta_h$, proves the claim.
\end{proof}

The results in Propositions~\ref{prop:gamma-fgamma},~\ref{prop:fetd-etd}, and Lemma~\ref{lemma:fgamma-fetd} are summarized in Theorem~\ref{thm:frac-equivalence} below.
\begin{theorem}\label{thm:frac-equivalence}
    For all concept classes $\cal C \subseteq \zone^N$,
    \[
    \frac{1}{\gamma(\mathcal C)}
\leq
\frac{1}{\mathrm{f}\gamma(\mathcal C)},
\qquad
\operatorname{fETD}(\mathcal C)
\leq
\operatorname{ETD}(\mathcal C),
\qquad
\operatorname{fETD}(\mathcal C)
\leq
\frac{1}{\mathrm{f}\gamma(\mathcal C)}
\leq
2\operatorname{fETD}(\mathcal C).
    \]
\end{theorem}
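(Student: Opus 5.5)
The statement of Theorem~\ref{thm:frac-equivalence} is a direct conjunction of three results already established above, so the plan is simply to assemble them. The first inequality, $1/\gamma(\cal C)\le 1/\fgamma(\cal C)$, is Proposition~\ref{prop:gamma-fgamma}. Its proof rests on the observation that for each $\cal S\subseteq\cal C$ with $|\cal S|\ge 2$, the uniform distribution on $\cal S$ has no atom exceeding $1/2$. That distribution is therefore feasible in Definition~\ref{def:fgamma}, which gives $\fgamma(\cal C)\le\gamma(\cal C)$. Taking reciprocals is legitimate because, for $|\cal C|\ge 2$, every feasible $\mu$ forces $\max_i\min\{\mu(\cal C_{i,0}),\mu(\cal C_{i,1})\}>0$.

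To justify that positivity: such a $\mu$ has support of size at least two, and any two distinct supported concepts differ in some coordinate $i$, so both sides of that coordinate carry positive mass. The minimum over the compact set of feasible $\mu$ is then attained and is positive.

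The second inequality, $\fetd(\cal C)\le\etd(\cal C)$, is Proposition~\ref{prop:fetd-etd}: the $\{0,1\}$-weight encoding of a minimum specifying set in Equation~\eqref{eq:etd} is a feasible solution of the LP in Definition~\ref{def:fetd}. The third chain, $\fetd(\cal C)\le 1/\fgamma(\cal C)\le 2\fetd(\cal C)$, is exactly Lemma~\ref{lemma:fgamma-fetd}. That lemma rewrites $\fgamma(\cal C)=\min_h\beta_h$ and then compares $1/\beta_h$ with $\fetd(\cal C,h)$ for each $h$. One direction comes from minimax duality together with the observation that the optimal $\mu$ puts mass $1/2$ on each of the two cheapest concepts. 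The other direction is an averaging argument using the fact that $\mu(c^*)\le 1/2$.

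The only point requiring care is the degenerate case $|\cal C|\le 1$, which Lemma~\ref{lemma:fgamma-fetd} excludes. In that case no feasible $\mu$ exists in Definition~\ref{def:fgamma}, since a single atom cannot have mass at most $1/2$. The minimization defining $\gamma$ is likewise empty. So I would either adopt the convention that the theorem concerns $|\cal C|\ge 2$, or set $1/\gamma=1/\fgamma=0$ for such classes. With the latter convention, $\fetd(\cal C)=\etd(\cal C)=0$, because taking $c^*$ to be the unique concept leaves no constraints, and all inequalities hold trivially.

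With that caveat recorded, the proof is one line citing Propositions~\ref{prop:gamma-fgamma} and~\ref{prop:fetd-etd} and Lemma~\ref{lemma:fgamma-fetd}. I expect no real obstacle; the substantive content was the minimax step inside Lemma~\ref{lemma:fgamma-fetd}, which is already proved.
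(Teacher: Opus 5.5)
Your proposal is correct and takes the same approach as the paper. The paper obtains Theorem~\ref{thm:frac-equivalence} simply by collecting Propositions~\ref{prop:gamma-fgamma} and~\ref{prop:fetd-etd} together with Lemma~\ref{lemma:fgamma-fetd}. Your additional checks are sound but go beyond what the paper spells out: that $\fgamma(\mathcal C)>0$ when $|\mathcal C|\ge 2$, which is needed to take reciprocals, and the treatment of the degenerate case excluded by Lemma~\ref{lemma:fgamma-fetd}.
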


\subsubsection{Bounds in terms of fractional measures}
We now relate these fractional parameters to the query complexity of exact learning. We first show that $\operatorname{fETD}$ lower bounds quantum query complexity quadratically.

\begin{theorem}\label{thm:quantum-fgamma}
For every concept class $\cal C$,
\[
\Q(\cal C)=\Omega\!\left(\sqrt{\frac{1}{\fgamma(\cal C)}}\right).
\]
\end{theorem}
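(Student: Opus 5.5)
We plan to prove $\Q(\cal C)=\Omega(\sqrt{1/\fgamma(\cal C)})$ by exhibiting a weighted adversary matrix witnessing the general adversary bound~\eqref{eq:adv} for the identity function $\operatorname{Id}:\cal C\to\cal C$. The reduction is the following. By Lemma~\ref{lemma:fgamma-fetd}, $1/\fgamma(\cal C)\le 2\fetd(\cal C)$. It therefore suffices to show $\Q(\cal C)=\Omega(\sqrt{1/\beta_h})$ for the string $h$ that attains $\fgamma(\cal C)=\min_h\beta_h$. Fix that $h$ and write $\beta=\beta_h$. By the minimax formulation~\eqref{eq:betah_minimax} and the definition of $\fgamma$, we obtain a distribution $\mu$ on $\cal C$ with $\max_c\mu(c)\le 1/2$ such that for every coordinate $i$,
\[
\mu(\{c:c_i\ne h_i\})\le \beta .
\]
Since $\fgamma$ is defined as a minimum over $\mu$, the optimal $\mu$ directly gives $\min\{\mu(\cal C_{i,0}),\mu(\cal C_{i,1})\}\le\fgamma$ for all $i$. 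Taking $h_i$ to be the majority bit under $\mu$ yields the displayed bound.

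The adversary matrix is $\Gamma=uu^\top-D$, where $u_c=\sqrt{\mu(c)}$ and $D=\operatorname{diag}(\mu)$. In other words, $\Gamma_{c,c'}=\sqrt{\mu(c)\mu(c')}$ for $c\ne c'$ and the diagonal is zero, so $\Gamma$ is a valid adversary matrix for $\operatorname{Id}$. Its norm satisfies $\|\Gamma\|\ge u^\top\Gamma u=1-\sum_c\mu(c)^2\ge 1-\max_c\mu(c)\ge 1/2$.

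The main step is bounding $\|\Gamma\circ\Delta_i\|$ for each coordinate $i$. Let $B_i=\{c:c_i\ne h_i\}$ and $G_i=\cal C\setminus B_i$. The entries of $\Gamma\circ\Delta_i$ are supported on pairs $(c,c')$ with $c_i\ne c'_i$, and exactly one element of each such pair lies in $B_i$. Hence $\Gamma\circ\Delta_i$ is, up to symmetrization, the rank-one off-diagonal block $u_{B_i}u_{G_i}^\top+u_{G_i}u_{B_i}^\top$, where $u_{B_i}$ and $u_{G_i}$ are the restrictions of $u$ to $B_i$ and $G_i$. The diagonal term contributes nothing here. Its norm is therefore
\[
\|u_{B_i}\|\,\|u_{G_i}\|=\sqrt{\mu(B_i)\mu(G_i)}\le\sqrt{\beta}.
\]
Consequently $\operatorname{Adv}^{\pm}(\operatorname{Id})\ge (1/2)/\sqrt{\beta}$. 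Applying $\Q=\Theta(\operatorname{Adv}^\pm)$ from~\cite{lee2011quantum} gives $\Q(\cal C)=\Omega(1/\sqrt{\fgamma(\cal C)})$.

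The main obstacle I anticipate is the reduction from $\fgamma$ to a single fixed $h$ with $\mu(B_i)\le\beta$ for all $i$. As explained above, this holds with $h$ chosen as the coordinatewise majority of the optimal $\mu$, but it should be checked carefully that taking the minimum over $\mu$ and the minimum over $h$ in the same order is legitimate. If that step turns out to be subtle, the fallback is to invoke Lemma~\ref{lemma:fgamma-fetd} and the fractional extended teaching dimension instead.

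A second check is that the bound on $\|\Gamma\|$ needs $\sum_c\mu(c)^2\le 1/2$. This follows because $\sum_c\mu(c)^2\le\max_c\mu(c)\le 1/2$.
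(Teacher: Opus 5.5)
Your proof is correct and is essentially the paper's argument: the same matrix $\Gamma_{c,c'}=\sqrt{\mu(c)\mu(c')}$ (zero diagonal) built from an optimal $\mu$ for $\fgamma$, the same test vector giving $\|\Gamma\|\ge 1-\sum_c\mu(c)^2\ge 1/2$, and the same bipartite block bound $\|\Gamma\circ\Delta_i\|=\sqrt{\mu(\mathcal C_{i,0})\mu(\mathcal C_{i,1})}\le\sqrt{\fgamma(\mathcal C)}$. The detour through Lemma~\ref{lemma:fgamma-fetd}, $h$, and $\beta_h$ is unnecessary, and so your anticipated ``obstacle'' does not arise: the paper works directly with the minority mass $\min\{\mu(\mathcal C_{i,0}),\mu(\mathcal C_{i,1})\}$. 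The paper also cites the positive-weight adversary bound of~\cite{Ambainis02} rather than the general one, which is immaterial since your $\Gamma$ is entrywise nonnegative.
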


\begin{proof}
Let $\mu$ be a distribution attaining the minimum in the definition of $\fgamma(\cal C)$ (Definition~\ref{def:fgamma}). Thus $\mu(c)\leq 1/2$ for every $c\in\cal C$, and 
\begin{equation}\label{eq:fgamma-distribution}
\min\{\mu(\cal C_{i,0}),\mu(\cal C_{i,1})\}\leq \fgamma(\cal C) \qquad \forall i \in [N].    
\end{equation}
Consider the matrix $\Gamma$ whose rows and columns are indexed by $\cal C\times\cal C$, defined by
\[
\Gamma[c,c']
=
\begin{cases}
\sqrt{\mu(c)\mu(c')} & c\neq c',\\
0 & c=c'.
\end{cases}
\]
Let $v\in\mathbb{R}^{\cal C}$ be given by $v_c=\sqrt{\mu(c)}$. Since $\|v\|=1$,
\begin{align*}
v^\top\Gamma v
&=
\sum_{\substack{c,c'\in\cal C\\ c\neq c'}}
v_c\,\Gamma_{c,c'}\,v_{c'}
=
\sum_{\substack{c,c'\in\cal C\\ c\neq c'}}
\mu(c)\mu(c')
=
\left(\sum_{c\in\cal C}\mu(c)\right)^2 - \sum_{c\in\cal C}\mu(c)^2 = 1-\sum_{c\in\cal C}\mu(c)^2\\
&\geq 1 - \rbra{\max_{c \in \cal C}\mu(c)}\sum_{c \in \cal C}\mu(c) \geq 1/2.
\end{align*}
Since $\Gamma$ is a real symmetric matrix, we have $\|\Gamma\| = \max_{\norm{x} = 1}|x^\top \Gamma x| \geq |v^\top \Gamma v| \geq 1/2$.

Fix $i\in[N]$, and let $\Delta_i$ denote the matrix with $(c,c')$-entry equal to $1$ iff $c_i\neq c'_i$. Order the concepts so that those in $\cal C_{i,0}$ come first and those in $\cal C_{i,1}$ come second. Since $\Delta_i(c,c')=1$ exactly when $c_i\neq c'_i$, we can write
\[
\Gamma\circ\Delta_i
=
\begin{pmatrix}
0 & ab^\top\\
ba^\top & 0
\end{pmatrix},
\]
where $a_c=\sqrt{\mu(c)}$ for $c\in\cal C_{i,0}$ and $b_c=\sqrt{\mu(c)}$ for $c\in\cal C_{i,1}$.
Since $\Gamma\circ\Delta_i$ is symmetric, its operator norm is $\|ab^\top\|=\|a\|\,\|b\|=\sqrt{\mu(\cal C_{i,0})\mu(\cal C_{i,1})} \leq \sqrt{\fgamma(\cal C)}$ by Equation~\eqref{eq:fgamma-distribution}. Therefore, by the positive-weight adversary bound~\cite{Ambainis02},
\[
\Q(\cal C)
=
\Omega\!\left(
\frac{\|\Gamma\|}{\max_{i\in[N]}\|\Gamma\circ\Delta_i\|}
\right)
=
\Omega\!\left(\frac{1}{\sqrt{\fgamma(\cal C)}}\right).
\]
\end{proof}

We next show that $\operatorname{fETD}$ also gives a randomized upper bound. The proof is inspired by the entropy-based argument of Arunachalam et al.~\cite{arunachalam2021two}, but replaces their use of the nonnegative adversary bound by the fractional extended teaching dimension.
\begin{theorem}\label{thm:r-fetd}
For every concept class $\mathcal C \subseteq \zone^N$,
\[
\mathsf R(\mathcal C)
=
O\!\left(
\frac{\operatorname{fETD}(\mathcal C)\log |\cal{C}|}
{\log\bigl(1+\operatorname{fETD}(\mathcal C)\bigr)}
\right).
\]
\end{theorem}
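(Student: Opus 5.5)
We would simulate a Bayesian/weighted-halving learner that maintains a probability distribution over the surviving concepts and uses the dual weights from $\fetd$ to choose random queries. Write $F:=\fetd(\cal C)$. By Lemma~\ref{lemma:fgamma-fetd}, $\fgamma(\cal C)\geq 1/(2F)$. Hence for every distribution $\mu$ on $\cal C$ with all atoms at most $1/2$, some coordinate $i$ has $\min\{\mu(\cal C_{i,0}),\mu(\cal C_{i,1})\}\geq 1/(2F)$.

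The plan has two regimes, in the spirit of the entropy-based argument of~\cite{arunachalam2021two}. Throughout we maintain the set $\cal S$ of consistent concepts and the uniform distribution $\mu$ on $\cal S$.

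\textbf{Regime 1: no concept is too heavy.} Querying the coordinate guaranteed by $\fgamma$ removes at least a $1/(2F)$ fraction of $\cal S$ whatever the answer. This alone gives $O(F\log|\cal C|)$ queries. To save the $\log(1+F)$ factor, we use the finer structure given by the fETD solution. Fix $h$ equal to the majority string of $\cal S$. The proof of Lemma~\ref{lemma:fgamma-fetd} gives weights $w$ with $\sum_i w_i\leq F$ and a candidate $c^*$, such that every $c\neq c^*$ has $\sum_{i:c_i\neq h_i}w_i\geq 1$.

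We sample $k=\Theta(F)$ coordinates independently from the distribution $p=w/\sum_i w_i$ and query them all. Any $c\neq c^*$ disagrees with $h$ on a sampled coordinate with probability at least $1-(1-1/F)^{k}$, which is constant. The key point is what an answer deviating from $h$ buys us. Each coordinate is in the minority for at most half the mass, so a deviating answer shrinks the consistent set a lot, and we charge that shrinkage in entropy. Concretely, we track the potential $\log|\cal S|$. We argue that a batch of $O(F)$ queries reduces this potential in expectation by $\Omega(\log(1+F))$. Either the answers all agree with $h$, in which case all but a constant fraction of non-$c^*$ concepts are eliminated (a drop of $\Omega(1)$, and repeating $O(\log F)$ times handles that). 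Or some answer deviates, which moves us into a set of mass at most about $1/F$ under the sampling-weighted measure.

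To make the $\log(1+F)$ gain rigorous, I would follow the ``$\log$ of the inverse probability'' accounting of~\cite[Theorem~10]{arunachalam2021two}. Each query's expected information gain is at least $H(\min\{\mu(\cal C_{i,0}),\mu(\cal C_{i,1})\})$ divided by appropriate cost. The binary entropy of a $1/F$-biased bit is $\Theta(\log F/F)$, which is exactly the ratio needed. So the target lemma is: for every distribution $\mu$ with atoms at most $1/2$, there is a query whose answer has entropy $\Omega(\log(1+F)/F)$, up to a constant. This follows directly from the $\fgamma$ guarantee, because $H(\alpha)\geq\alpha\log(1/\alpha)$ is increasing for $\alpha\le 1/2$, applied to $\alpha\geq 1/(2F)$.

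\textbf{Regime 2: some concept $c^*$ has mass above $1/2$.} Here we would verify $c^*$ directly. Either $|\cal S|\le 2$, or we pass to the distribution conditioned on $\cal S\setminus\{c^*\}$, mixed with $c^*$ at weight $1/2$. That mixture is feasible for $\fgamma$, so a query exists that splits off $\Omega(1/F)$ of the remaining mass or refutes $c^*$.

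\textbf{Combining the regimes.} We run a randomized procedure, querying a coordinate drawn according to the fETD weights, and use the potential $\log(1/\mu(\text{target}))$ under the Bayesian posterior with uniform prior. Each query increases the expected $\log$ posterior of the true concept by $\Omega(\log(1+F)/F)$, as in~\cite{arunachalam2021two}. After $O(F\log|\cal C|/\log(1+F))$ queries the target has posterior above $1/2$. A final $O(F)$ queries, driven by the fETD certificate for $h=c^*$ with Markov/repetition for constant success, then confirm it.

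\textbf{Main obstacle.} The crux is turning per-query entropy into a bound on expected progress for the true target rather than on average under $\mu$. I expect to handle this by averaging over a uniformly random target and then applying Yao/minimax. That gives a distributional bound for every prior, which converts to a worst-case randomized bound.
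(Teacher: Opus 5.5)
There is a genuine gap. Your key lemma is correct and is essentially the heart of the paper's proof: for any posterior with all atoms at most $1/2$, Lemma~\ref{lemma:fgamma-fetd} gives $\fgamma(\mathcal C)\ge 1/(2F)$. Hence some coordinate's answer has entropy at least $H_2(1/(2F))=\Omega(\log(1+F)/F)$.

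The problem is that the procedure you actually analyze in ``Combining the regimes'' does not use this coordinate. It queries a coordinate $i$ drawn from $p=w/\sum_j w_j$. The expected information of such a query is $\sum_i p_i H_2(q_i)$, where $q_i$ is the posterior minority mass at $i$. The fETD constraints only give $\sum_i p_i q_i\ge (1-\nu(c^*))/F$. Since $H_2$ is concave, the best you can conclude is $\sum_i p_iH_2(q_i)\ge 2\sum_i p_iq_i=\Omega(1/F)$, which loses exactly the $\log(1+F)$ factor you are after.

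This is not an artifact of the analysis. Take $\mathcal C=\{0,1\}\times(\{0^m\}\cup\{e_j\}_{j\in[m]})$. Here $F=m+1$, and the all-ones vector is the unique optimal weighting for the majority string. Put mass $(1-\epsilon)/2$ on each of $(0,0^m)$ and $(1,0^m)$. A weighted random query then has expected answer entropy tending to $1/(m+1)$ as $\epsilon\to0$, whereas querying coordinate $0$ gives a full bit.

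Your batch heuristic in Regime~1 has the same defect. $\Theta(F)$ sampled queries buy only an $\Omega(1)$ expected drop, and a deviating answer lands in a set of mass up to $1/2$, not about $1/F$. The $\log(1+F)$ saving requires querying the coordinate that maximizes the minority mass.

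The second missing piece is the one you flag as the main obstacle, and it is where the paper's proof actually lives. Yao's principle needs a good deterministic tree for every prior $\mu$, not just the uniform one. So posteriors can have atoms in $(1/2,1)$, and per-step expected progress must be converted into a query bound through a stopping rule.

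The paper fixes an arbitrary $\mu$ and builds a greedy tree. At node $v$, it stops and outputs $c$ if $\nu_v(c)\ge 5/6$; otherwise it queries the coordinate maximizing $\min\{\nu_v(\mathcal C_{i,0}),\nu_v(\mathcal C_{i,1})\}$. Take $h$ to be the majority string at $v$ and $(c^*,w)$ optimal for $\fetd(\mathcal C,h)$. Then $\sum_i w_i\min\{\nu_v(\mathcal C_{i,0}),\nu_v(\mathcal C_{i,1})\}\ge 1-\nu_v(c^*)>1/6$. So the chosen query has minority mass above $1/(6F)$, and its answer has entropy at least $\delta:=H_2(1/(6F))$. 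Your mixture trick gives the same bound, namely $(1-p)/F$ when the heaviest atom has mass $p\ge 1/2$.

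It follows that $H(C\mid Z_t)-H(C\mid Z_{t+1})\ge\delta\Pr[T>t]$. Summing over $t$ gives $\delta\,\E[T]\le H(\mu)\le\log|\mathcal C|$. Truncating at six times this bound costs error at most $1/6$ by Markov, for total error at most $1/3$. This also makes your Regime~2 and the final verification phase unnecessary; as written, their error accounting is missing.
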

\begin{proof}
We use Yao's minimax principle. Fix an arbitrary distribution $\mu$ on $\cal C$. We will construct a deterministic decision tree that has error at most $1/6$ under $\mu$ and expected depth at most
\[
d=O\!\left(
\frac{\fetd(\cal C)}{1 + \log(\fetd(\cal C))}
\log|\cal C|
\right).
\]
Truncating this tree after $6d$ queries and outputting an arbitrary concept if the tree has not already stopped gives a deterministic tree of depth $6d$. By Markov's inequality,
\[
\Pr[T>6d]\leq \frac{\mathbb E[T]}{6d}\leq \frac16,
\]
where $T$ is a random variable denoting the depth of the original tree on an input drawn from $\mu$. Hence the truncated deterministic decision tree has error at most $1/6+1/6=1/3$ under the input distribution $\mu$. By Yao's minimax principle, it therefore suffices to prove the stated bound on the expected depth of the tree constructed below.

The construction of the tree is straightforward. At a node $v$, let $\nu_v$ denote the distribution on $\cal C$ obtained by conditioning $\mu$ on the transcript leading to $v$. If there exists a $c$ with $\nu_v(c) \geq 5/6$, the tree stops and outputs $c$. At every leaf $v$, the tree outputs a concept $c$ satisfying $\nu_v(c)\geq 5/6$. Hence, conditioned on reaching $v$, the error probability is at most $1/6$. Averaging over the leaves, the overall error of the tree under $\mu$ is at most $1/6$ as required. At an internal node, query a coordinate
\begin{equation}\label{eq:iv}
    i_v = \argmax_{i\in[N]}\min\{\nu_v(\cal C_{i,0}),\nu_v(\cal C_{i,1})\}.
\end{equation}
That is, $i_v$ is a coordinate whose query guarantees the largest possible decrease in the posterior mass of the remaining concepts, regardless of the answer received. The remainder of the proof analyzes the expected depth of this tree.

Fix an arbitrary non-leaf node $v$, and define $h\in\{0,1\}^N$ by choosing, for each $i\in[N]$, a bit $h_i$ satisfying $\nu_v(\cal C_{i,h_i})\geq \nu_v(\cal C_{i,1-h_i})$. In particular, this means
\begin{equation}\label{eq:minorityh}
\nu_v(\{c\in\cal C:c_i\neq h_i\})=\min\{\nu_v(\cal C_{i,0}),\nu_v(\cal C_{i,1})\}~\forall i\in[N].
\end{equation}
Let $c^*\in\cal C$ and $w_1,\ldots,w_N\geq 0$ attain $\fetd(\cal C,h)$ (see Definition~\ref{def:fetd}). Since $v$ is a non-leaf node, $\nu_v(c^*)<5/6$. Taking the $w_i$-weighted sum of the minority masses, we obtain
\begin{align}
\sum_{i\in[N]}w_i\min\{\nu_v(\cal C_{i,0}),\nu_v(\cal C_{i,1})\}
& =
\sum_{i\in[N]}w_i\,\nu_v(\{c\in\cal C:c_i\neq h_i\})
=
\sum_{c\in\cal C}\nu_v(c)\sum_{i:c_i\neq h_i}w_i\nonumber\\
&=
\sum_{c\neq c^*}\nu_v(c)\sum_{i:c_i\neq h_i}w_i
+
\nu_v(c^*)\sum_{i:c_i^*\neq h_i}w_i\nonumber\\
&\geq
\sum_{c\neq c^*}\nu_v(c)
=
1-\nu_v(c^*) >
\frac16, \label{eq:weighted-nuv}
\end{align}
where the first equality follows from Equation~\eqref{eq:minorityh}, the second equality is obtained by changing the order of summation, the first inequality follows from the defining constraints of $\fetd(\cal C,h)$ together with the nonnegativity of the weights (and hence nonnegativity of the second term on the second line), and the final equality uses that $\nu_v$ is a distribution on $\cal C$. The last inequality follows since $\nu_v(c^*)<5/6$ which holds since $v$ is a non-leaf node.

Since $\sum_i w_i=\fetd(\cal C,h)$, Equation~\eqref{eq:weighted-nuv} implies existence of $i\in[N]$ such that
\[
\min\{\nu_v(\cal C_{i,0}),\nu_v(\cal C_{i,1})\}
>
\frac{1}{6\fetd(\cal C,h)}
\geq
\frac{1}{6\fetd(\cal C)}.
\]
By the definition of $i_v$ in Equation~\eqref{eq:iv}, we therefore have
\begin{equation}\label{eq:ivbound}
\min\{\nu_v(\cal C_{i_v,0}),\nu_v(\cal C_{i_v,1})\}>\frac{1}{6\fetd(\cal C)}.
\end{equation}
We now analyze the entropy decrease caused by the query at a non-leaf node $v$. Let $C$ denote the random concept distributed according to $\nu_v$, and let $B:=C_{i_v}$ be the answer to the query at $v$. Then the expected posterior entropy after the query is $H(C\mid B)$, and hence the expected entropy decrease is
\[
H(C)-H(C\mid B)=H(B),
\]
where the equality uses that $B$ is a deterministic function of $C$ since the tree is deterministic. 
Let $q_v:=\min\{\nu_v(\cal C_{i_v,0}),\nu_v(\cal C_{i_v,1})\} \leq 1/2$.
Then $B$ is a Bernoulli random variable with probabilities $q_v$ and $1-q_v$, and hence $H(B)=H_2(q_v)$. By Equation~\eqref{eq:ivbound}, we have $1/(6\fetd(\cal C))<q_v\leq 1/2$. Since binary entropy is increasing on $[0,1/2]$ and since $H_2(p) \geq p \log(1/p)$ for all $p$,
\begin{equation}\label{eq:hb-large}
H(B)=H_2(q_v)\geq H_2\!\left(\frac{1}{6\fetd(\cal C)}\right)
=
\Omega\!\left(\frac{\log(\fetd(\cal C)+1)}{\fetd(\cal C)}\right).
\end{equation}

Let $T$ be the random variable denoting the depth at which the tree stops on a random concept $C\sim\mu$. For $t\geq 0$, let $Z_t$ denote the transcript after the first $\min\{t,T\}$ queries, and define 
\begin{equation}\label{eq:ht-def}
H_t:=H(C\mid Z_t)
=
\sum_z \Pr[Z_t=z]\,H(C\mid Z_t=z).
\end{equation}
Let
\begin{equation}\label{eq:delta-def}
\delta:=\Omega\!\left(\frac{\log(\fetd(\cal C)+1)}{\fetd(\cal C)}\right)
\end{equation}
be the lower bound from Equation~\eqref{eq:hb-large} on the expected entropy decrease at every non-leaf node.

Fix $t\geq 0$ and condition on a particular value $Z_t=z$. If $z$ is a transcript that leads to a leaf, then no further query is made and hence $Z_{t+1}=Z_t$, so the entropy decrease is $0$. If $z$ is a transcript that leads to an internal node, then the tree makes a query at this node, and by Equation~\eqref{eq:hb-large} the expected entropy decrease from this query, conditioned on $Z_t=z$, is at least $\delta$.
Thus, for every possible transcript $z$,
\[
H(C\mid Z_t=z)
-
\mathbb E[H(C\mid Z_{t+1})\mid Z_t=z]
\geq
\begin{cases}
0 & \text{if $z$ leads to a leaf},\\
\delta & \text{if $z$ leads to an internal node}.
\end{cases}
\]
Averaging over all possible values of $Z_t$ and recalling the definition of $H_t$ from Equation~\eqref{eq:ht-def}, we therefore obtain
\begin{align*}
H_t-H_{t+1}
&=
\sum_z \Pr[Z_t=z]
\left(
H(C\mid Z_t=z)
-
\mathbb E[H(C\mid Z_{t+1})\mid Z_t=z]
\right)\\
&\geq
\delta
\sum_{\substack{z:\,z\text{ leads to}\\ \text{an internal node}}}
\Pr[Z_t=z] =
\delta\,\Pr[T>t].
\end{align*}
Summing over $t\geq 0$, we obtain
\[
\log|\cal C|
\geq H(C)=H_0
\geq H_0-\lim_{t\to\infty}H_t
=
\sum_{t\geq0}(H_t-H_{t+1})
\geq
\delta\sum_{t\geq0}\Pr[T>t]
=
\delta\,\mathbb E[T].
\]
The last equality uses the standard identity
$\mathbb E[T]=\sum_{t\geq0}\Pr[T>t]$
for every nonnegative integer-valued random variable $T$.
Plugging back in the value of $\delta$ from Equation~\eqref{eq:delta-def}, we have
\[
\mathbb E[T]
=
O\!\left(
\frac{\fetd(\cal C)}{\log(\fetd(\cal C)+1)}
\log|\cal C|
\right),
\]
which is precisely the expected-depth bound required at the beginning of the proof.
\end{proof}

Lemma~\ref{lemma:fgamma-fetd}, Theorems~\ref{thm:quantum-fgamma} and~\ref{thm:r-fetd} immediately yield Theorem~\ref{thm:qr-fetd}.

\bibliography{bibo}

\end{document}